\theoremstyle{thmstyletwo}%
\newtheorem{theorem}{Theorem}
\newtheorem{lemma}[theorem]{Lemma}%
\DeclareMathOperator*{\polylog}{poly\,log}
\DeclareMathOperator*{\argmax}{arg\,max}
\numberwithin{equation}{section}
\begin{document}

\DOI{DOI HERE}
\copyrightyear{2022}
\vol{00}
\pubyear{2022}
\access{Advance Access Publication Date: Day Month Year}
\appnotes{Paper}
\copyrightstatement{Published by Oxford University Press on behalf of the Institute of Mathematics and its Applications. All rights reserved.}
\firstpage{1}


\title[Zero-Truncated Poisson Regression]{Zero-Truncated Poisson Regression for Sparse Multiway\\ Count Data Corrupted by False Zeros}


\author{Oscar F. L\'{o}pez\ORCID{0000-0000-0000-0000}
\address{\orgdiv{Harbor Branch Oceanographic Institute}, \orgname{Florida Atlantic University}, \orgaddress{\street{5600 US 1 North}, \postcode{34946}, \state{Florida}, \country{U.S.A.}}}}
\author{Daniel M. Dunlavy\ORCID{0000-0000-0000-0000}
\address{\orgdiv{Machine Intelligence and Visualization}, \orgname{Sandia National Laboratories}, \orgaddress{\street{1515 Eubank SE}, \postcode{87123}, \state{New Mexico}, \country{U.S.A.}}}}
\author{Richard B. Lehoucq\ORCID{0000-0000-0000-0000}
\address{\orgdiv{Discrete Math and Optimization}, \orgname{Sandia National Laboratories}, \orgaddress{\street{1515 Eubank SE}, \postcode{87123}, \state{New Mexico}, \country{U.S.A.}}}}

\authormark{O. L\'{o}pez, D. Dunlavy, and R. Lehoucq}


\received{Date}{0}{Year}
\revised{Date}{0}{Year}
\accepted{Date}{0}{Year}


\abstract{
We propose a novel statistical inference methodology for multiway count data that is corrupted by false zeros that are indistinguishable from true zero counts. Our approach consists of zero-truncating the Poisson distribution to neglect all zero values. This simple truncated approach dispenses with the need to distinguish between true and false zero counts and reduces the amount of data to be processed. Inference is accomplished via tensor completion that imposes low-rank tensor structure on the Poisson parameter space. 
Our main result shows that an $N$-way rank-$R$ parametric tensor $\boldsymbol{\mathscr{M}}\in(0,\infty)^{I\times \cdots\times I}$ generating Poisson observations can be accurately estimated by zero-truncated Poisson regression from approximately $IR^2\log_2^2(I)$ non-zero counts under the nonnegative canonical polyadic decomposition. Our result also quantifies the error made by zero-truncating the Poisson distribution when the parameter is uniformly bounded from below. Therefore, under a low-rank multiparameter model, we propose an implementable approach guaranteed to achieve accurate regression in under-determined scenarios with substantial corruption by false zeros. Several numerical experiments are presented to explore the theoretical results.
}
\keywords{Count data, Poisson regression, canonical polyadic tensor decomposition, tensor completion, zero-truncated Poisson distribution.}

\maketitle


\section{Introduction}
\label{sec:intro}
Count data arises in many data science applications including topic modeling \cite{An20,BlNgJo03,Gr22,Ho99}, document clustering \cite{AgZh12,StKaKu00} and classification \cite{GaJaCh20,KoJaHe19}, poll analysis \cite{Klimek16469}, network communications \cite{attributedSN,pmlr-v97-chiquet19a}, single photon count imaging \cite{osti_1347904,naturelidar}, and ecology \cite{bmpc:19}. Statistical interpretation of count data typically involves estimating parametric distributions likely to generate the counts via regression and maximum likelihood estimation \cite{sjams10.11648,doi:10.1080/00223890802634175,10.2307/270996}. Though useful for analysis and decision making, in most practical settings the collected data are corrupted by false counts that mislead the inference procedure. In particular, such arrays are frequently congested by zeros, either false or true, in an indistinguishable manner \cite{doi:10.1063/1.4980994,doi:10.1186/s12863-017-0561-z,doi:10.5539/ijsp.v7n3p22,https://doi.org/10.1002/sim.3699}. In the context of this paper, a portion of the zeros are considered false counts (whose locations are unknown)---i.e., erroneous counts, structural zeros denoting unobserved array entries, etc. The source of such corruption is largely an artifact of the standard practice to initialize arrays with all zero entries prior to data collection paired with flawed counting procedures. However, many probability distributions that govern the observed counts are expected to generate a large amount of true zero counts, e.g., Poisson and Bernoulli  distributions. This gives the set of zero values a central role in count data, where distinguishing and appropriately handling zero-congestion is crucial for accurate analysis and has long been a challenge in the field; see e.g. \cite{bmpc:19} for a discussion and many citations to this problem in the literature. We note that our setting differs from work in the context of \textit{overdispersion} \cite{sjams10.11648,doi:10.1080/00223890802634175} and \textit{zero-inflation} \cite{doi:10.1063/1.4980994,doi:10.1186/s12863-017-0561-z}, where the excessive zeros are considered as trustworthy data.

Further complicating the task of count data analysis is the inexorable growth in the volume and dimension of collected data---e.g., due to the expansion of global communication and social networks generating immense amounts of data to be mined. In such large-dimensional settings, multiway data analysis and \textit{tensor decompositions} (or factorizations) extract insight to interpret the role of each independent data component \cite{4538221}. When applied to tensors containing redundant and/or correlated information, such factorized representations additionally provide a compressive manner by which to process data that are otherwise too large to handle efficiently. Due to the relative simplicity of many data generation processes, the underlying multiway distributions can be modeled accurately by parametric tensors with few components relative to the ambient dimensions (i.e., low-rank tensors \cite{kolda}). For this reason, tensor decompositions are a numerically efficient tool to achieve multivariate statistical inference.

In this paper, we propose a novel statistical inference technique for multi-way count data that is saturated by false zero values. We truncate the multi-parameter \textit{Poisson model} to the positive integers, ignore zero values and treat the respective array entries as unobserved. Under a low-rank parametric tensor model, we achieve parameter estimation via \textit{tensor completion} that imposes large \textit{zero-truncated Poisson} likelihood using only the positive counts. In this manner, we exploit the low-dimensional structure found in many parametric models to accurately infer the underlying mean values of the entire volume in an under-determined setting that avoids false zero counts in regression. 

Our approach does not introduce additional parameters to be determined as described in the papers  \cite{doi:10.1063/1.4980994,https://doi.org/10.1002/sim.3699,doi:10.5539/ijsp.v7n3p22} and does not require the zeros to be classified as true or false counts as described in \cite{bmpc:19}. 
Furthermore, our setting is distinct from standard tensor and matrix completion problems \cite{5452187,5466511,PoissonMC,nuctensor,Gandy_2011,NIPS2014_c15da1f2} where the locations of unobserved entries is known \emph{a priori}. The difference may seem subtle, but our context is more complicated and common for count data since missing information exhibits itself as zeros that are indistinguishable from true null events of the data collection process. Our contribution is a simple and accurate approach that deals with zero-congestion in an efficient manner while reducing the potential for tuning and declassification errors. 

We begin with a theorem that elaborates our approach and its effectiveness to deal with false zero counts. The theorem summarizes our two main results (see Section~\ref{sec:proofs}), providing an error bound for parametric estimators with relatively large log-likelihood as a function of the data's factor dimensions along with the number of non-corrupt observations. The result compares our proposed method to the ideal estimator in which an ``oracle'' identifies the false zeros and Poisson regression can be applied on the true counts. Our main result states that our zero-truncated approach performs nearly as well as the oracle while remaining oblivious to the locations of false zeros when the Poisson parameter is uniformly bounded from below by zero. These implications are validated in Section~\ref{sec:experiments}, where numerical experiments present several realistic situations in which the performance of our zero-truncated paradigm is comparable to the oracle.

We first provide notation, definitions and a clear statement of the inference problem before we state the theorem. We use the conventions in \cite{Chi2012} and also rely on the notation of \cite{navid1,navid2,navid3,PoissonMC}. We focus on nonnegative tensors and their nonnegative \textit{canonical polyadic decomposition} (NNCP). 
Given $I_1,I_2,\cdots,I_N\in\mathbb{N}$ and a canonical polyadic tensor  $\boldsymbol{\mathscr{T}}\in\mathbb{R}_+^{I_1\times\cdots\times I_N}$ with nonnegative entries, we define the NNCP rank of $\boldsymbol{\mathscr{T}}$ as
\begin{align}  \label{t-nncp-rank}
\mbox{rank}_+(\boldsymbol{\mathscr{T}})\coloneqq \min\Bigg\{R\in\mathbb{N} \ \Big| \ \boldsymbol{\mathscr{T}}=\sum_{r=1}^{R} \textbf{a}_{r}^{(1)}\circ \textbf{a}_{r}^{(2)}\circ\cdots \circ \textbf{a}_{r}^{(N)} \ \ \mbox{with} \ \ \textbf{a}_{r}^{(n)}\in\mathbb{R}_{+}^{I_n} \ \ \forall r\in [R], n\in [N] \Bigg\},
\end{align}
where $[N]$ denotes the set $\{1,2,\cdots,N\}$ and  $\mathbb{R}_+$ denotes the values in $\mathbb{R}$ that are nonnegative. In other words, the NNCP rank is similar to the usual definition of CP rank \cite{kolda} but only applies to nonnegative tensors and imposes nonnegative constraints on the factors. Such nonnegative matrix and tensor decompositions have received increasing amounts of attention due to their uniqueness properties \cite{https://doi.org/10.1002/cem.1244}, resulting in an enhanced ability to extract meaningful data components sought by practitioners \cite{doi:10.1080/10556780801996244,Chi2012}.

%
The Poisson parameter tensor search space is
\begin{align} \label{t-search-space}
S_R^+(\beta,\alpha) \coloneqq \Big\{\boldsymbol{\mathscr{T}}\in\mathbb{R}^{I_1\times \cdots\times I_N} \ | \ \beta\leq t_{\textbf{i}}\leq \alpha \ \ \mbox{and} \ \ \mbox{rank}_+(\boldsymbol{\mathscr{T}})\leq R\Big\},
\end{align}
given a NNCP rank $R$
where $\textbf{i}=(i_1,i_2,\cdots,i_N)\in[I_1]\times[I_2]\times\cdots\times[I_N]$ denotes a multi-index, $t_{\textbf{i}}$ is the respective entry of $\boldsymbol{\mathscr{T}}$, and $0<\beta\leq\alpha$ are fixed but arbitrary bounds on the Poisson distribution parameters.

Our inference problem is to determine a low-rank Poisson parameter tensor $\boldsymbol{\mathscr{M}}\in S_R^+(\beta,\alpha)$ likely to generate observed count data $\boldsymbol{\mathscr{X}}\in\mathbb{Z}_+^{I_1\times \cdots\times I_N}$, where $\mathbb{Z}_+$ denotes nonnegative values in $\mathbb{Z}$. We assume the true Poisson events (or true counts) satisfy
\begin{equation}
	\label{poisson}
	x_{\textbf{i}}\sim \ \mbox{Poisson}(m_{\textbf{i}}), \ \ \textbf{i}\in\Omega
\end{equation}
for some subset $\Omega\subset[I_1]\times[I_2]\times\cdots\times[I_N]$. Outside of $\Omega$, the counts do not obey the Poisson generation model (\ref{poisson}) and consist of false zeros. 

The problem of estimating $\prod_kI_k$ parameters in $\boldsymbol{\mathscr{M}}$ from $|\Omega|<\prod_kI_k$ samples of count data in $\boldsymbol{\mathscr{X}}$ is under-determined. We circumvent this problem by imposing the low-rank assumption of the parameter model $\boldsymbol{\mathscr{M}}\in S_R^+(\beta,\alpha)$, which reduces the complexity of estimating the Poisson parameters to roughly determining $NR\sum_kI_k$ free variables (i.e., specifying the $\textbf{a}_r^{(n)}$'s in the NNCP decomposition (\ref{t-nncp-rank}) of $\boldsymbol{\mathscr{M}}$). By exploiting this low-dimensional structure, we now have a viable approach to solve our inference problem given a single instance of partially observed count data $\boldsymbol{\mathscr{X}}$.

In the ideal scenario that $\Omega$ can be identified, the low-rank factor model $\boldsymbol{\mathscr{M}}$ can be determined by optimizing the Poisson log-likelihood function on the true counts
\begin{equation}
	\label{likelihood}
	f_{\Omega}(\boldsymbol{\mathscr{M}},\boldsymbol{\mathscr{X}}) \coloneqq \sum_{\textbf{i}\in\Omega}x_{\textbf{i}}\log\left(m_{\textbf{i}}\right)-m_{\textbf{i}} - \log(x_{\textbf{i}}!).
\end{equation}
However, $\Omega$ is not known in general and estimators utilizing (\ref{likelihood}) will be known as \emph{oracle} estimators. 
Instead, we propose to compute a parameter model by optimizing the zero-truncated Poisson log-likelihood function
\begin{equation}
	\label{likelihood0}
	\tilde{f}_{\Gamma}(\boldsymbol{\mathscr{M}},\boldsymbol{\mathscr{X}}) \coloneqq \sum_{\textbf{i}\in\Gamma}x_{\textbf{i}}\log\left(m_{\textbf{i}}\right)-\log\left(\exp(m_{\textbf{i}}) - 1\right) - \log(x_{\textbf{i}}!)
\end{equation}
where $\Gamma$ provides the indices of non-zero counts (i.e., where $x_{\textbf{i}}>0$). Notice that $\Gamma$ can always be found in practice and, when $\boldsymbol{\mathscr{X}}$ is only corrupted by false zeros, $\Gamma\subseteq\Omega$ consists of true non-zero counts. We now proceed to the main result, comparing oracle estimators and our proposed estimator that applies the zero-truncated log-likelihood function (\ref{likelihood0}).

\begin{theorem}
	\label{simplethm}
	Let $I \coloneqq \max_n\{I_n\}$, $\boldsymbol{\mathscr{M}}\in S_{R}^{+}(\beta,\alpha)$, and $\Omega$ be a subset of multi-indices selected uniformly at random from all subsets of the same cardinality. Suppose $\boldsymbol{\mathscr{X}}\in\mathbb{Z}_{+}^{I_1\times\cdots\times I_N}$ is a random tensor with each entry in $\Omega$ generated independently as in (\ref{poisson}) and let $\Gamma\subseteq\Omega$ contain the indices of the non-zero entries of $\boldsymbol{\mathscr{X}}$ restricted to $\Omega$. Then the following statements hold with probability no less than $1-4|\Omega|^{-1}$ when $\min_n\{I_n\} \geq (N-1)\log_2^2\left(\max_n\{I_n\}\right) + 1$:
	\begin{align}
		&\mbox{If } \ \widehat{\boldsymbol{\mathscr{M}}} \in S_{R}^{+}(\beta,\alpha)\mbox{ is such that } 	f_{\Omega}\left(\widehat{\boldsymbol{\mathscr{M}}},\boldsymbol{\mathscr{X}}\right) \geq f_{\Omega}(\boldsymbol{\mathscr{M}},\boldsymbol{\mathscr{X}}),\mbox{ then } \frac{\|\boldsymbol{\mathscr{M}}-\widehat{\boldsymbol{\mathscr{M}}}\|^2}{\|\boldsymbol{\mathscr{M}}\|^2} \leq \epsilon. \label{errorbounds_intro1}\\
		&\mbox{If } \ \widetilde{\boldsymbol{\mathscr{M}}}\in S_{R}^{+}(\beta,\alpha)\mbox{ is such that } \tilde{f}_{\Gamma}\left(\widetilde{\boldsymbol{\mathscr{M}}},\boldsymbol{\mathscr{X}}\right) \geq \tilde{f}_{\Gamma}(\boldsymbol{\mathscr{M}},\boldsymbol{\mathscr{X}}),\mbox{ then } \frac{\|\boldsymbol{\mathscr{M}}-\widetilde{\boldsymbol{\mathscr{M}}}\|^2}{\|\boldsymbol{\mathscr{M}}\|^2} \leq \, \kappa \epsilon , \label{errorbounds_intro}
	\end{align}
	where 
 \[
 \epsilon = \mathcal{O}\left(\frac{R\sqrt{I}\log_2(I)}{\sqrt{|\Omega|}}\right)
 \]
and 
 \begin{align} \label{def-kappa}
    \kappa \coloneqq \frac{(4+\beta\tau)e^{\beta}-4}{2(e^{\beta}-\beta-1)} \ \text{ with } \ \tau \coloneqq \frac{1}{\alpha(e^2-2) + 3\log_2(|\Omega|)}. 
 \end{align}
\end{theorem}
The result states that if the number of true counts $|\Omega|$ is proportional to $IR^2\log_2^2(I)$, then estimators with relatively large likelihoods (\ref{likelihood}) and (\ref{likelihood0}) are accurate approximations of the true data model. Furthermore, our approach that applies the zero-truncated likelihood function on the subset of non-zero counts ($\Gamma$) is subject to an error amplification term $\kappa\geq 1$ that depends on $\beta,\alpha,$ and $|\Omega|$. The proof is postponed until Appendix~\ref{sec:proofs}, where Theorem~\ref{simplethm} results from combining Theorems~\ref{mainthm0} and~\ref{mainthm}. To further develop the implications of the result, we narrow down the context to specify our approach and compare it with the ideal oracle scenario mentioned before. 

Suppose our given count data  $\boldsymbol{\mathscr{X}}$ is corrupted by false zeros but otherwise possesses true non-zero counts. Let us further suppose that an oracle provides us with $\Omega$ specifying all true counts obeying (\ref{poisson}). Notice that $\Omega$ contains all non-zeros along with true zero counts, which we assume are distributed in a random manner. Then $\Gamma$ is simply the set of all non-zero entries of $\boldsymbol{\mathscr{X}}$, which can always be identified in practice regardless of $\Omega$. However, in this non-oracle scenario, $\Omega$ still plays an important role (albeit implicitly) since it determines the degree of false zero-congestion in our observations.

To be concrete, let us produce our estimators via maximum likelihood
\begin{equation}
	\label{opt}
	\widehat{\boldsymbol{\mathscr{M}}} = \argmax_{\boldsymbol{\mathscr{T}}\in S_{R}^+(\beta, \alpha)}f_{\Omega}(\boldsymbol{\mathscr{T}},\boldsymbol{\mathscr{X}}) \ \ \ \ \mbox{and} \ \ \ \ \widetilde{\boldsymbol{\mathscr{M}}} = \argmax_{\boldsymbol{\mathscr{T}}\in S_{R}^+(\beta, \alpha)}\tilde{f}_{\Gamma}(\boldsymbol{\mathscr{T}},\boldsymbol{\mathscr{X}}),
\end{equation}
where $\widehat{\boldsymbol{\mathscr{M}}}$ is the oracle estimator and $\widetilde{\boldsymbol{\mathscr{M}}}$ is the zero-truncated estimator. Each estimator satisfies \eqref{errorbounds_intro1} and \eqref{errorbounds_intro}, respectively. 
Theorem~\ref{simplethm} implies that, in contrast to the accuracy of $\widehat{\boldsymbol{\mathscr{M}}}$, the error of our proposed estimator $\widetilde{\boldsymbol{\mathscr{M}}}$ is possibly amplified by $\kappa$ given by \eqref{def-kappa}, which satisfies the inequalities
\begin{equation}
\label{kappa}
1 < \kappa < \infty \ \text{ for } \ \beta > 0.
\end{equation}
The parameter $\kappa$ is a function of tensor search space bounds $\alpha, \beta$ \eqref{t-search-space} and the sample size $|\Omega|$.
A straight-forward analysis shows that with fixed $\alpha$ and $ |\Omega|$, the amplification $\kappa$ increases monotonically with decreasing $\beta$, which is a lower bound for the Poisson parameters. The bound \eqref{errorbounds_intro} and $\kappa$ can be informative to determine in which cases the zero-truncated approach can lead to large errors relative to the oracle estimator. A small $\beta$ implies that the number of true zero counts neglected may be significant and our proposed estimator will likely degrade in accuracy as a consequence. Our zero-truncated approach will not be efficient in small Poisson parameter regimes, but otherwise performs nearly as well as the oracle estimator. These observations will be explored numerically in Section~\ref{sec:experiments}. 

For low-rank tensors, the number of true counts required by the result for an accurate estimator is small relative to the ambient dimensions, i.e., $|\Omega|\sim IR^2\log_2^2(I) \ll \prod_kI_k$. This allows for statistical inference via multiway analysis under significantly under-determined scenarios, which otherwise would require the entire volume to be observed in a setting free of false zeros. Theorem \ref{simplethm} is slightly pessimistic since the optimal sampling rate for elements of $S_{R}^{+}(\beta, \alpha)$ is conjectured to be $|\Omega|\sim IR\log(I)$, where the logarithmic term is unavoidable in matrix and tensor completion under random sampling models \cite{5452187}. Despite this, our derived sampling complexity is novel in that it improves upon current results in the literature, which involve super-quadratic dependence on $R$ and $I$ for $N$-way arrays with $N\geq 3$ \cite{nuctensor,NIPS2013_2050e03c,NEURIPS2019_a1519de5}. However, it is important to notice that we consider the NNCP rank rather than the general CP rank so that this comparison is difficult to make fairly. See Section \ref{relatedwork} for further discussion on the novelty of the result and comparison to other work in the literature.

Theorem \ref{simplethm} does not provide a method for parameter estimation and instead assumes an estimator $\widetilde{\boldsymbol{\mathscr{M}}}$ is available. We state the result in this abstract manner in order to remain flexible and practical. Indeed, outputs of the form (\ref{opt}) are NP-hard to compute \cite{NP}, so that no tractable algorithm is guaranteed to achieve the global optimizer. For this reason we do not specify how $\widetilde{\boldsymbol{\mathscr{M}}}$ should be produced and instead attempt to state minimal conditions that an accurate estimate should satisfy, in order to guide practitioners into developing appropriate methods. In fact, the result only requires for an estimator to have large likelihood relative to the true parameter tensor. Therefore, a global optimum of (\ref{opt}) may not be needed and the result remains informative to local optima and other less greedy methods. In Section \ref{sec:experiments}, we explore the theoretical observations of this section numerically.

\subsection{Connections with Prior Work and Innovations}
\label{relatedwork}

Our work falls within the vast literature of matrix and tensor completion, see, e.g., \cite{5452187,5466511,navid1,PoissonMC,nuctensor,Gandy_2011,NIPS2014_c15da1f2,montanari,7973019} with the paper \cite{PoissonMC} closest to our work. We generalize the Poisson matrix completion approach and result of \cite{PoissonMC} to larger dimensional arrays and the setting of false zero corruption. In the case of two dimensions, our derived sampling complexity is worse than the near-optimal matrix completion result of \cite{PoissonMC} due to our quadratic dependence upon the rank $|\Omega|\sim IR^2\log^2(I)$. However, for general $N$-way arrays with $N\geq 3$ no result exists exhibiting the theoretic sample complexity rate $|\Omega|\sim IR\log(I)$ \cite{Lee,NNTC} and our result improves upon the literature in this regard. 

The main results in the tensor completion literature provide general $N$-way array sampling complexities $\sim I^{N/2}R\polylog(I)$ \cite{montanari}, $\sim (I^{3/2}R^{(N-1)/2} + IR^{N-1})\log^2(I)$ \cite{7973019}, and $\sim IR^{3N-3}\log^2(I)$ \cite{navid1,navid2,navid3}. Notice that the dependence of these rates on the rank or largest array dimension is polynomial in terms of $N$. Our main results are able to provide sampling complexity $|\Omega|\sim IR^2\log^2(I)$, which is independent of $N$ (exponentially) and nearly matches the optimal rate. However, we stress that our work applies to nonnegative tensor decompositions (NCCP). This context is crucial for our sampling complexity, which complicates a fair comparison of our work to the citations discussed. If we consider the general CP rank, our derived sampling complexity matches the results in \cite{navid1,navid2,navid3} (see Section \ref{sec:proofs}). The contribution of our work is to show that the proof technique of \cite{navid1,navid2,navid3} can remove the exponential dependency of $I$ and $R$ on $N$ when one considers the NNCP rank. We note that the work \cite{NNTC} also exploits nonnegative tensors to obtain $|\Omega|\sim IR^4\log^2(I)$, but the result does not allow for an arbitrarily small error bound.

Focusing on literature related to our proposed approach, the papers \cite{ZTP1,ZTP2} also consider the utility of zero-truncated distributions to appropriately disregard zero values. Therein, the authors ignore zero values to reduce the amount of data to be processed and efficiently scale their inference procedure to large dimensional volumes. Our approach also scales to large dimensions, but our main focus is to filter out the corrupt portion of the data and provide inference error bounds.

\section{Numerical Experiments}
\label{sec:experiments}
We present a series of experiments to illustrate the influence of several problem parameters on Theorem~\ref{simplethm} in practice. Specifically, we demonstrate the errors associated with the estimators $\widehat{\boldsymbol{\mathscr{M}}}$ and $\widetilde{\boldsymbol{\mathscr{M}}}$ with respect to $\boldsymbol{\mathscr{M}}$ when these estimators are computed using the method of maximum likelihood estimation. These experiments illustrate some of the practical ramifications of Theorem~\ref{simplethm}.

\subsection{Experimental Data}
\label{sec:exp_data}
We generate synthetic data using the approach first described by Chi and Kolda in~\cite{Chi2012}, which is implemented in the Tensor Toolbox for MATLAB~\cite{TTB_src} in the method \texttt{create\_problem}. We generate random instances of $N$-way tensors $\boldsymbol{\mathscr{M}}$, with all dimensions of size $I$, having rank-$R$ multilinear structure as represented in the CP model:

\begin{equation} 
	\boldsymbol{\mathscr{M}} = \llbracket 
	\boldsymbol{\lambda}; 
	\mathbf{A}^{(1)},\ldots,\mathbf{A}^{(N)}\rrbracket = \sum_{r=1}^R \lambda_r 
	\mathbf{a}_r^{(1)} \circ \ldots \circ \mathbf{a}_r^{(N)} \; ,\label{eq:cpmodel}
\end{equation}
where $\mathbf{A}^{(n)} \in \mathbb{R}^{I \times R} \; \forall n \in [N]$.

 We create the desired low-rank, multilinear structure such that all of the entries in $\boldsymbol{\mathscr{M}}$ lie in the interval $[\beta, \alpha]$, as prescribed in Theorem~\ref{simplethm} via a sampling of the entries in the factor matrices, $\mathbf{A}^{(1)},\ldots,\mathbf{A}^{(N)}$, uniformly from $[(\beta/R)^{1/N}, (\alpha/R)^{1/N}]$, and set $\lambda_r = 1$.
The result is that the entries in $\boldsymbol{\mathscr{M}}$ follow a truncated normal distribution in the interval $[\beta, \alpha]$. Figure~\ref{fig:Mdist} illustrates the distribution of entries of an instance of $\boldsymbol{\mathscr{M}}$ generated using $\beta=1.5$ and $\alpha=2.5$. 


\begin{figure}[!t]
	\centering
	\begin{tabular}{cc}
		$I=50$ &  $I=200$ \\
		\includegraphics[width=0.45\textwidth]{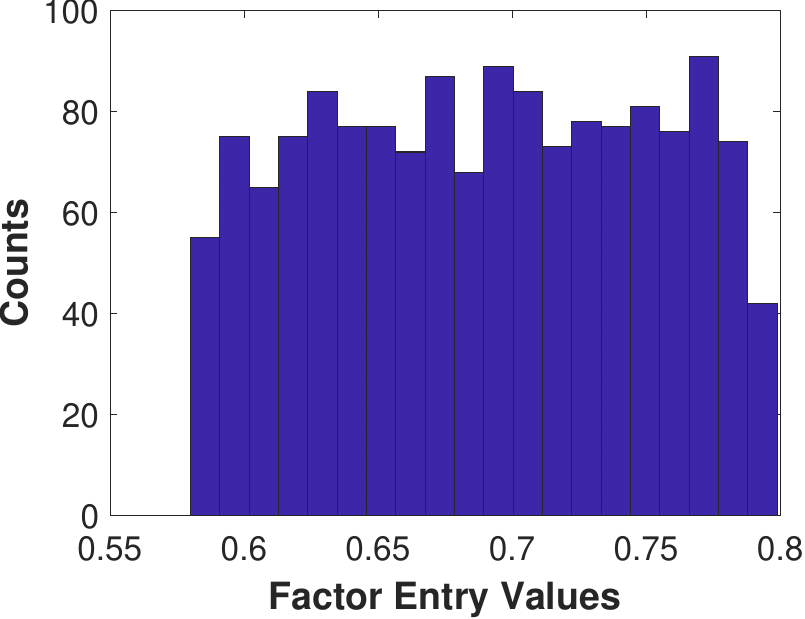}&
		\includegraphics[width=0.45\textwidth]{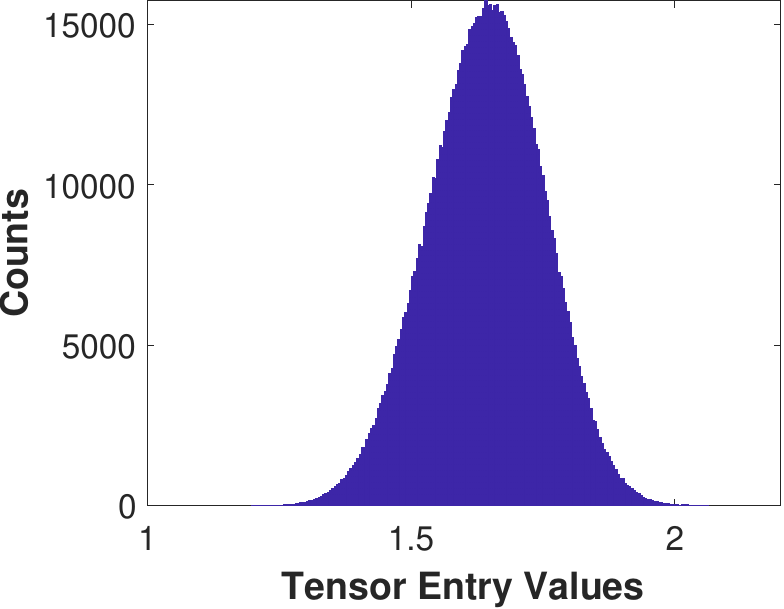}\\
	\end{tabular}
	\caption{Histograms of entries of example factor matrices $\mathbf{A}^{(1)},\ldots,\mathbf{A}^{(N)}$ (left) and tensor $\boldsymbol{\mathscr{M}}$ (right) generated via {\normalfont \texttt{create\_problem}} with $\beta=1$, $\alpha=2.5$, $N=3$, $I=100$, and $R=5$.}
	\label{fig:Mdist}
\end{figure}

We generate instances of $\boldsymbol{\mathscr{X}}$ by first creating an instance of $\boldsymbol{\mathscr{M}}$ using the procedure above, and then use the Poisson random sampler, \texttt{poissrnd}, from MATLAB's Statistics and Machine Learning Toolbox, to generate the entries of $\boldsymbol{\mathscr{X}}$.

Instances of the index set $\Omega$ are constructed by uniformly sampling without replacement from the linearized index set of $\boldsymbol{\mathscr{X}}$, given by $[I^N]$. Thus, when simulating false zeros in $\boldsymbol{\mathscr{X}}$, the values at the indices in $[I^N] \setminus \Omega$ are set to $0$.

\subsection{Maximum Likelihood Estimation Methods}
\label{sec:exp_mle}
Given a data tensor $\boldsymbol{\mathscr{X}}$ whose entries are each assumed to be a draw from a Poisson distribution with parameters in $\boldsymbol{\mathscr{M}}$, as defined in \eqref{poisson}, we compute estimators for $\boldsymbol{\mathscr{M}}$ using the method of maximum likelihood estimation~\cite{MYUNG200390}. We solve the maximum likelihood estimation problem by minimizing the negative of the log-likelihood function associated with the distributions of interest. Specifically, in our experiments, we minimize $-f_{\Omega}(\boldsymbol{\mathscr{M}}, \boldsymbol{\mathscr{X}})$ from \eqref{likelihood} and $-\tilde f_{\Gamma}(\boldsymbol{\mathscr{M}}, \boldsymbol{\mathscr{X}})$ from \eqref{likelihood0} to compute estimators $\widehat{\boldsymbol{\mathscr{M}}}$ and $\widetilde{\boldsymbol{\mathscr{M}}}$, respectively. In other words, we attempt to solve (\ref{opt}) without constraining estimator entries to lie in $[\beta,\alpha]$. Note that this range for the estimated Poisson parameters is required for Theorem \ref{simplethm}. However, we choose to conduct our experiments under the more realistic scenario that such bounds are not known or implemented. Our numerical results in Section \ref{sec:exp_results} will demonstrate that unconstrained estimation produces accurate estimates that illustrate our theoretical statements in a practical setting.

The Generalized Canonical Polyadic (GCP) method for computing low-rank CP decompositions~\cite{HongKoldaDuersch2020, Kolda2020} provides a method for maximum likelihood estimation using general loss functions that we use here in our experiments. Specifically, we use the Tensor Toolbox for MATLAB implementation of GCP, provided in the method \texttt{gcp\_opt}, to compute maximum likelihood estimators for $\boldsymbol{\mathscr{M}}$. In \texttt{gcp\_opt}, we use the limited-memory bound-constrained quasi-Newton optimization method~\cite{lbfgsb_c, lbfgsb}; i.e., the input parameter \texttt{opt} is set to \texttt{'lbfgsb'}.
	
We compute three estimators denoted \emph{Poisson}, \emph{Oracle}, and \emph{ZTP}:
\begin{itemize}
	\item \textbf{\emph{Poisson}}. This  approach was introduced in~\cite{Chi2012} for computing CP decompositions of data tensors with count values. It computes an estimate by minimizing $-f_{\Omega}(\boldsymbol{\mathscr{M}}, \boldsymbol{\mathscr{X}})$ over all values in $\boldsymbol{\mathscr{X}}$, i.e., by setting $\Omega = [I^N]$. Thus, it treats both true and false zeros as zero values in the data. In \texttt{gcp\_opt}, the input parameter \texttt{type} is set to \texttt{'count'} to specify this method.
	\item \textbf{\emph{Oracle}}. This approach is similar to the \emph{Poisson} method except that the estimate uses only the true zeros and non-zeros in $\boldsymbol{\mathscr{X}}$. Thus, the estimate ignores the zeros values in $\boldsymbol{\mathscr{X}}$ that correspond to false zeros by removing the indices of the false zeros from $\Omega$. In general, this information about the specific types of zero values in a data tensor is unknown. However, since we generate $\Omega$ in our experiments, this information is known explicitly. Thus, we can use this estimate when zeros in data are known to be true or false \emph{a priori}. In \texttt{gcp\_opt}, the input parameter \texttt{mask} is set to be a tensor of the same size of $\boldsymbol{\mathscr{X}}$ whose values at indices in $\Omega$ are equal to $1$ and $0$ otherwise. This provides the information to GCP to minimize only over the true zeros and non-zeros in $\boldsymbol{\mathscr{X}}$ when computing an estimator. All other input parameters are the same as those used for the \emph{Poisson} method. 
	\item \textbf{\emph{ZTP}}. This approach computes an estimate by minimizing $-\tilde f_{\Gamma}(\boldsymbol{\mathscr{M}}, \boldsymbol{\mathscr{X}})$, where $\Gamma \subseteq \Omega$ denotes the indices of the non-zeros of $\boldsymbol{\mathscr{X}}$. Thus, no zero values are used in computing an estimator with this method, which is accounted for in the zero-truncated Poisson log-likelihood function, defined in~\eqref{likelihood0}. In \texttt{gcp\_opt}, the input parameters \texttt{func} and \texttt{grad} are set to anonymous function handles for code to compute $-\tilde f_{\Gamma}(\boldsymbol{\mathscr{M}}, \boldsymbol{\mathscr{X}})$ and $- \nabla \tilde f_{\Gamma}(\boldsymbol{\mathscr{M}}, \boldsymbol{\mathscr{X}})$, respectively. As for the \emph{Oracle} method, the input parameter \texttt{mask} is set to be a tensor of the same size of $\boldsymbol{\mathscr{X}}$ whose values at indices in $\Gamma$ are equal to $1$ and all other entries are equal to $0$.
\end{itemize}


\subsection{Average Relative Error}
\label{averagerelativerror}

Estimator errors are computed as the relative difference between the estimators and Poisson parameter tensors, as in (\ref{errorbounds_intro1}) and (\ref{errorbounds_intro}). For each instance pair $(\boldsymbol{\mathscr{M}},\boldsymbol{\mathscr{X}})$, we report the average relative error (denoted as \emph{Average Relative Error} in the plots presented in \S\ref{sec:exp_results}) across $k$ randomly selected instances of the index set $\Omega$. 
Table~\ref{tab:error} presents the maximum likelihood estimate (MLE) methods, the indices of entries in  $\boldsymbol{\mathscr{X}}$ used for each MLE method, and the corresponding relative error expressions. 
Note that estimators $\widehat{\boldsymbol{\mathscr{M}}}$ and $\widetilde{\boldsymbol{\mathscr{M}}}$ are those computed using the Poisson log-likelihood~\eqref{likelihood} and zero-truncated Poisson log-likelihood~\eqref{likelihood0} functions, respectively.

\begin{table}[h!]
	\renewcommand*{\arraystretch}{1.33}
	\caption{Data indices relative error expressions used for the MLE methods in experiments.}\label{tab:error}
        \centering
	\begin{tabular}{|c|c|c|}
		\hline
		\textbf{\emph{MLE Method}} & \textbf{\emph{Data Indices}} & \textbf{\emph{Relative Error}}\\
		\hline
		\emph{Poisson} & $[I^N]$ & $\|\boldsymbol{\mathscr{M}}-\widehat{\boldsymbol{\mathscr{M}}} \| / 
		{\|\boldsymbol{\mathscr{M}}\|}$\\
		\hline
		\emph{Oracle} & $\Omega$ & $\|\boldsymbol{\mathscr{M}}-\widehat{\boldsymbol{\mathscr{M}}} \| / {\|\boldsymbol{\mathscr{M}}\|}$\\
		\hline
		\emph{ZTP} & $\Gamma$ & $\|\boldsymbol{\mathscr{M}}-\widetilde{\boldsymbol{\mathscr{M}}} \| / {\|\boldsymbol{\mathscr{M}}\|}$\\
		\hline
	\end{tabular}
	
\end{table}

\subsection{Experimental Setup}
\label{sec:exp_setup}
Our experiments illustrate the differences in computing maximum likelihood estimators for $\boldsymbol{\mathscr{M}}$ using the various methods described in \S\ref{sec:exp_mle}. Specifically, in these experiments, we vary the size of the number of trusted data tensor entries, $|\Omega|$, and the ranges of the Poisson parameter tensor entries, $[\beta,\alpha]$.

We run several experiments by varying $|\Omega|$, $\beta$, and $\alpha$. In all experiments, we use $N=3$ and $R=5$. Since the minimum requirement for each dimension of these experiments is $I \geq 82$, as specified in the setup of Theorem \ref{simplethm}, we use values of $I \in \{50, 100, 200\}$ to illustrate the impact of dimension size on the results. For each experiment, we use $\beta$ and $\alpha$ to generate instances of $\boldsymbol{\mathscr{M}}$ and $\boldsymbol{\mathscr{X}}$ as described in \S\ref{sec:exp_data}. For each instance pair of $(\boldsymbol{\mathscr{M}}, \boldsymbol{\mathscr{X}})$, we generate $k=50$ instances of $\Omega$. Also, due to the nonconvexity of the negative log-likelihood functions being minimized, we compute estimators for each instance of $\Omega$ starting from $n=20$ initial starting points.

Across the experiments, we vary the problem parameters $|\Omega|$, $\beta$, and $\alpha$ as follows.

\begin{itemize}
	\item \emph{\textbf{Varying $\boldsymbol{|\Omega|}$}}. We vary the size of the set of true zero and non-zero  values, $|\Omega|$, such that $|\Omega|/I^N$ falls in the range $[0,1]$. Results for the different methods are reported as a function of $|\Omega|/I^N$, even though different amounts of data are used in computing the estimators with the different methods, as discussed in \S\ref{sec:exp_mle}. We run experiments with  $|\Omega|/I^N \in \{0.01, 0.02, 0.03, 0.04, 0.05, 0.10, 0.15, \dots, 0.95, 1.0\}$.
	\item \emph{\textbf{Varying $\boldsymbol{\beta}$}}. The probability of generating true zeros in $\boldsymbol{\mathscr{X}}$ increases as $\beta \to 0$. Since the different estimator methods treat zeros differently, it is important to understand the impact of the number of true zeros in $\boldsymbol{\mathscr{X}}$ on the estimator errors. We run experiments with $\beta \in \{0.001, 0.01, 0.1, 1\}$.
	\item \emph{\textbf{Varying $\boldsymbol{\alpha}$}}. The probability of generating true zeros in $\boldsymbol{\mathscr{X}}$ decreases with increasing  $\alpha$. When there are no true zeros in $\boldsymbol{\mathscr{X}}$, the Oracle and ZTP methods are equivalent. Moreover, when there are no true or false zeros in $\boldsymbol{\mathscr{X}}$---i.e., when $|\Omega| = I^N$---all three methods described in \S\ref{sec:exp_mle} are equivalent. We run experiments with $\alpha \in \{2.5, 5, 10, 25, 50\}$.
\end{itemize}

\subsection{Implementation Details}
\label{sec:exp_impl}
See Appendix~\ref{sec:impl} for implementation details of the experiments described in Sections~\ref{sec:exp_data}--\ref{sec:exp_setup}. 

\subsection{Results}
\label{sec:exp_results}
We present results for experiments involving the methods defined as \emph{Poisson}, \emph{Oracle}, and \emph{ZTP} in \S\ref{sec:exp_mle} to demonstrate the results of Theorem~\ref{simplethm} in practice. 

\emph{\textbf{Varying $\boldsymbol{|\Omega|}$}}. Figure~\ref{fig:varying_omega} presents the average relative errors of estimators using the three methods as a function of $|\Omega|/I^N$, which is the fraction of the number true zeros and non-zeros to the total number of entries in the data tensors. In these experiments, we set $\beta=1$, $\alpha=2.5$, $N=3$, $I=100$, $R=5$, generate 50 replicates of $\Omega$ for each value of $|\Omega|/I^N$, and compute estimators using the different methods starting from $n=20$ randomly generated initial starting points for each instance of $\Omega$. As expected, the \emph{Oracle} method, which only computes estimators using true zeros and non-zeros, leads to the best results for all values of $|\Omega|/I^N$. When $|\Omega|/I^N = 1$, the \emph{Poisson} and \emph{Oracle} methods are identical, since there are no false zeros, as illustrated in the right side of the plot. In such cases, though, the \emph{ZTP} method ignores all zeros and thus incurs more error in the estimates. As predicted by Theorem~\ref{simplethm}, we see that the average errors of the \emph{ZTP} estimators track those of the \emph{Oracle} estimators, differing only by a small multiplicative value at each value of $|\Omega|/I^N$. In these experiments, the predicted difference in relative error in Theorem~\ref{simplethm} should be bounded by a factor of $\sqrt{\kappa}$, which aligns well with the results presented in Figure~\ref{fig:varying_omega}. These results are very consistent across the $k=50$ replicates of $\Omega$ and the $n=20$ randomly generated initial starting points of the numerical optimization methods used. Specifically, the shaded regions in Figure~\ref{fig:varying_omega} represent one standard deviation away from the average relative errors for each method across the replicates. Furthermore, the standard deviations in relative error are all more than two orders of magnitude smaller on average across the initial starting guesses than those for the replicates. Together, these results indicate very little variability in the estimators computed using all three methods.

\begin{figure}[b!]
    \centering
    \includegraphics[width=.5\textwidth]{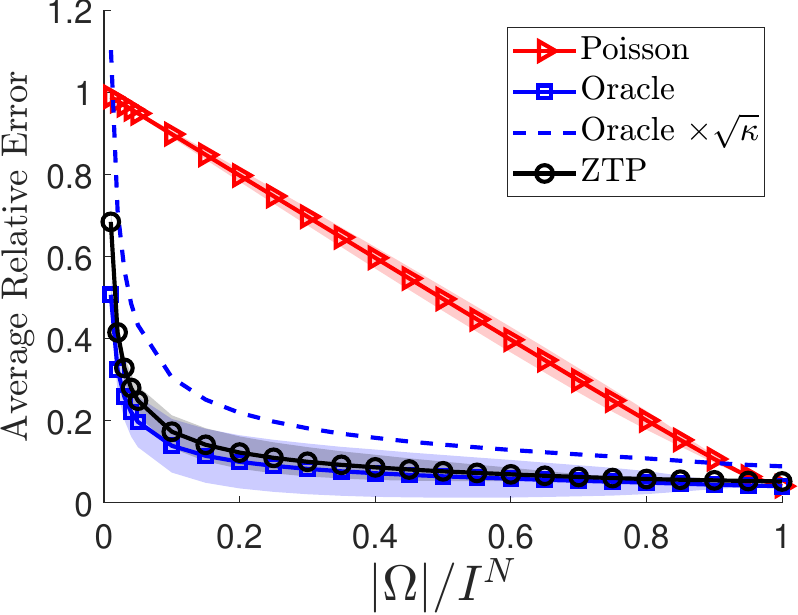}
	\caption{Results varying $|\Omega|$: $\beta=1$, $\alpha=2.5$, $I=100$, $N=3$, $R=5$, and 50 replicates. The solid lines represent the mean errors across the 50 replicates, and the shaded regions represent the one standard deviation away from the mean errors.}
	\label{fig:varying_omega}
\end{figure}

\begin{figure}[ht!]
    \centering
	\begin{tabular}{cc}
		$\beta=1$ &  $\beta=0.100$ \\
		\includegraphics[width=.4\textwidth]{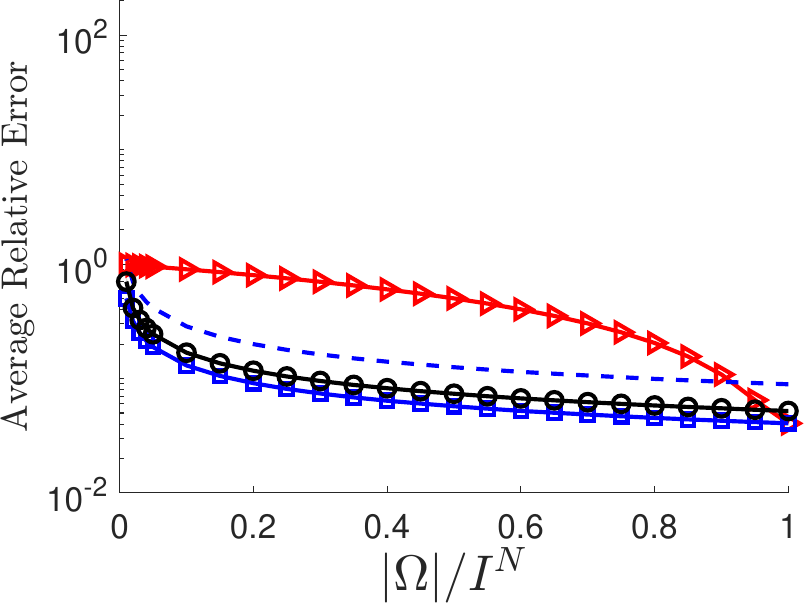}&
		\includegraphics[width=.4\textwidth]{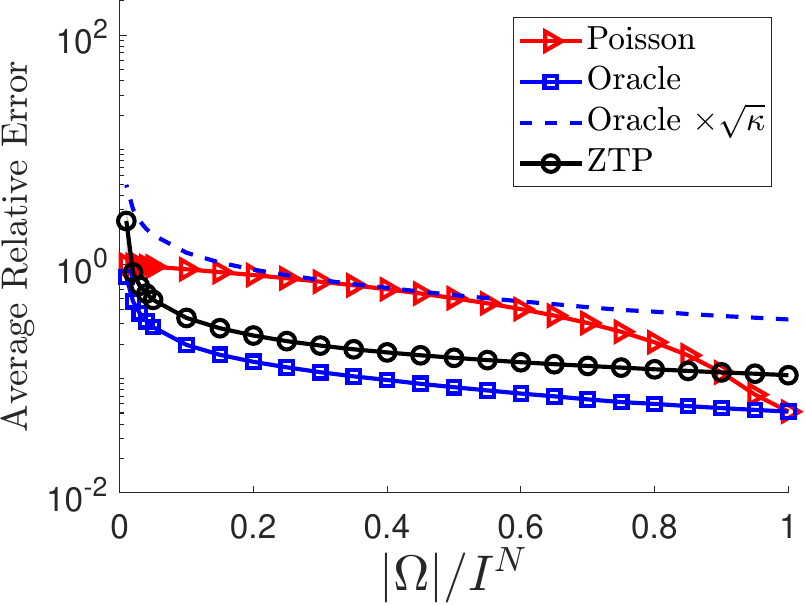}\\[.1in]
		$\beta=0.010$ &  $\beta=0.001$ \\
		\includegraphics[width=.4\textwidth]{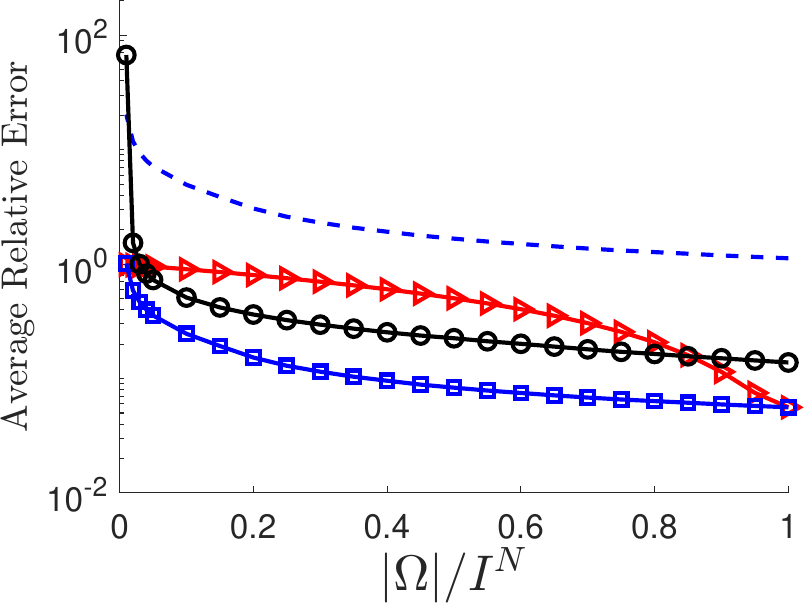}&
		\includegraphics[width=.4\textwidth]{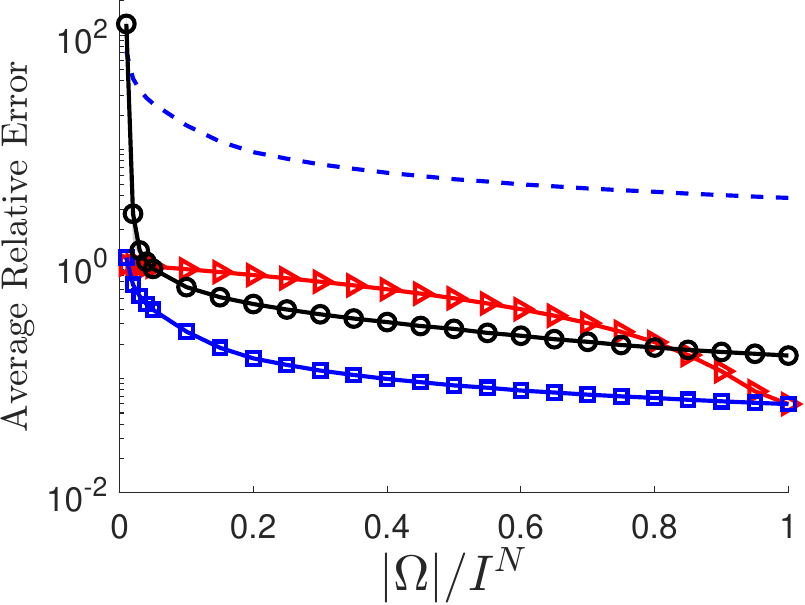}
	\end{tabular}
	\caption{Results varying $\beta$: $\alpha=2.5$, $I=100$, $N=3$, $R=5$, and 50 replicates.}
	\label{fig:varying_beta}
\end{figure}

\emph{\textbf{Varying $\boldsymbol{\beta}$}}. Figure~\ref{fig:varying_beta} presents the average relative errors of estimators using the three methods as a function of $\beta$, which influences the number of true zeros in the data tensors. As expected, as $\beta \to 0$, $\kappa$ increases, and thus there are greater differences in the average errors between the estimators computed with the \emph{Oracle} and \emph{ZTP} methods. Moreover, these differences are much more extreme as $|\Omega|/I^N \to 0$---i.e., as the numbers of false zeros in the data tensors increase. When $\beta$ is close to 0, there are few observations used by the \emph{ZTP} method to compute the estimator, and thus we see that the average relative errors can be large, whereas the average relative errors for the \emph{Oracle} method are still bounded by the results of computing estimators using the \emph{Poisson} method. Thus, we recommend that the \emph{ZTP} method be used only when there are a sufficient number of non-zero entries in the data tensors; the specific fractions will be determined by the number of dimensions, sizes of those dimensions, and the distributions of values of the non-zero entries.

\emph{\textbf{Varying $\boldsymbol{\alpha}$}}. Figure~\ref{fig:varying_alpha} presents the average relative errors of estimators using the three methods as a function of $\alpha$, which also influences the number of true zeros in the data tensors. We see that for fixed values of $\beta$ (in this case $\beta=0.1$), as $\alpha$ increases, there is very little difference in average relative errors between estimators computing using the \emph{Oracle} and \emph{ZTP} methods. These results are due to the fact that as $\alpha$ increases, the probability of generating true zeros in the data tensors decreases. Thus, with fewer true zeros, the differences between these methods are diminished.

\emph{\textbf{Varying $\boldsymbol{I}$}}. Figure~\ref{fig:varying_dimensions} presents the average relative errors of estimators using the three methods for values of $I \in \{50, 200\}$, which represents smaller and much larger dimension sizes than those required for the results in Theorem~\ref{simplethm}. For the results presented here, $\beta=1$ and $\alpha=2.5$.
Recall that when $N=3$ and $R=5$, we require that $I \geq 82$ for the results in Theorem~\ref{simplethm} to hold. We see that when this requirement is not satisfied---e.g., when $I=50$---the average relative errors are worse than expected, with rapid increases as $|\Omega|/I^N \to 0$. Alternatively, as $I$ increases well above the minimum value required to support the conclusions of Theorem~\ref{simplethm}---e.g., when $I=200$---we see that both the \emph{Oracle} and \emph{ZTP} methods produce even better results in terms of average relative errors for the estimators computed. Since the relative errors in Theorem~\ref{simplethm} are functions of $I$ for fixed values of $\beta$, $\alpha$, $N$, $R$, and $|\Omega|$, these results indicate good agreement between theory and practice.

\begin{figure}[ht!]
	\centering
	\begin{tabular}{cc}
		$\alpha=5$ &  $\alpha=10$ \\
		\includegraphics[width=.4\textwidth]{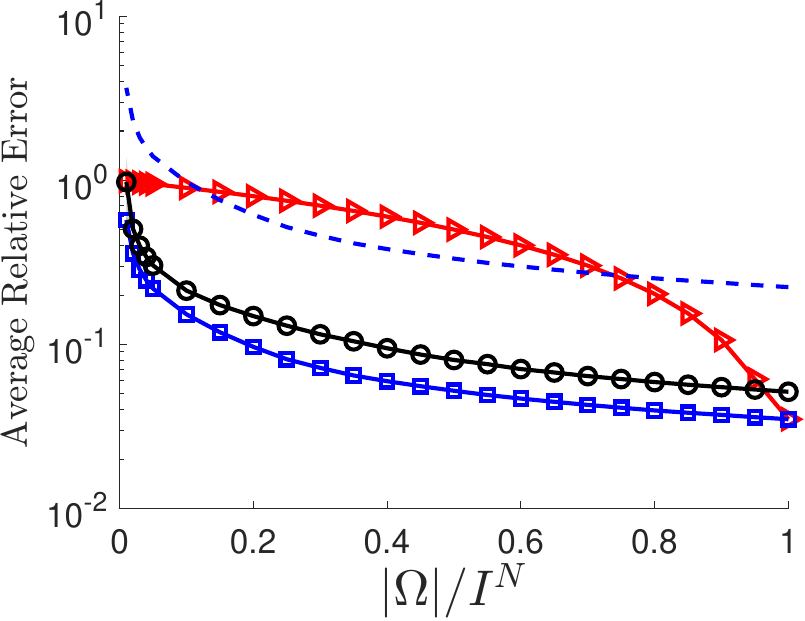}&
		\includegraphics[width=.4\textwidth]{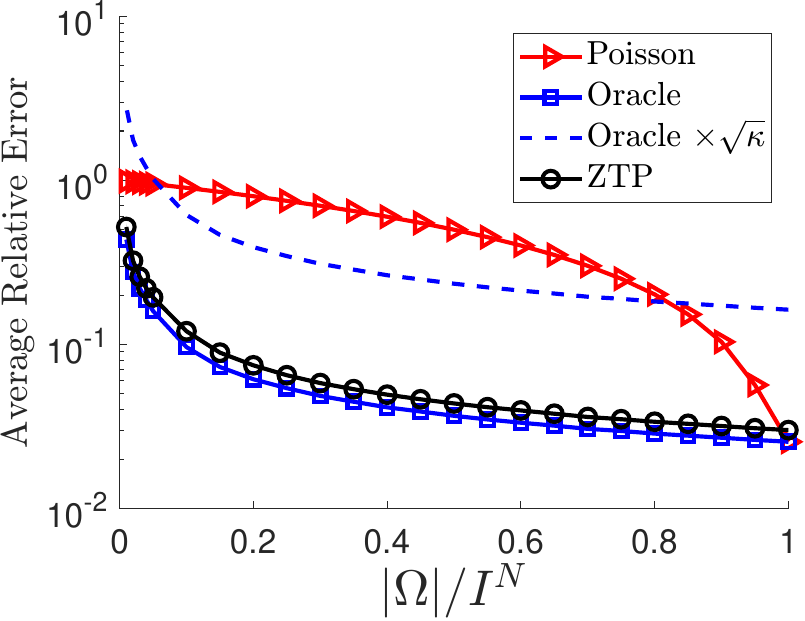}\\
		$\alpha=25$ &  $\alpha=50$ \\
		\includegraphics[width=.4\textwidth]{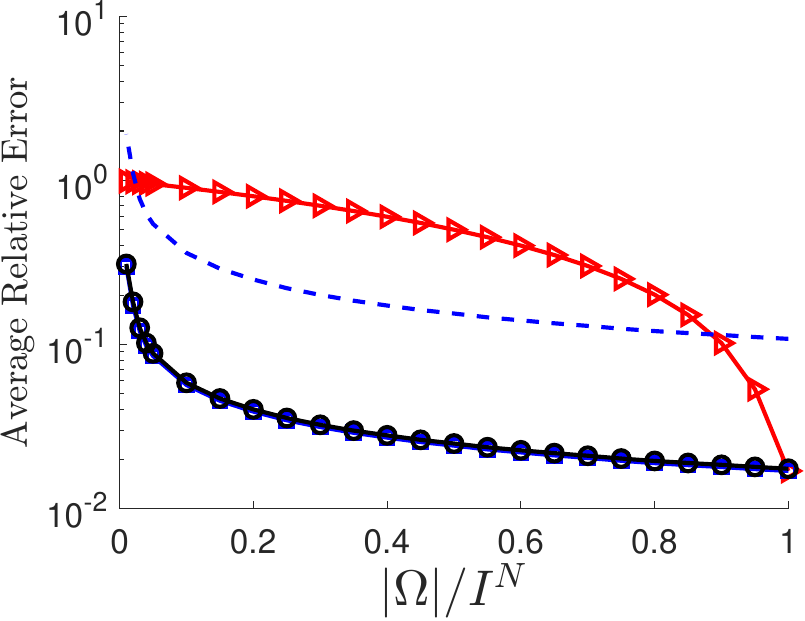}&
		\includegraphics[width=.4\textwidth]{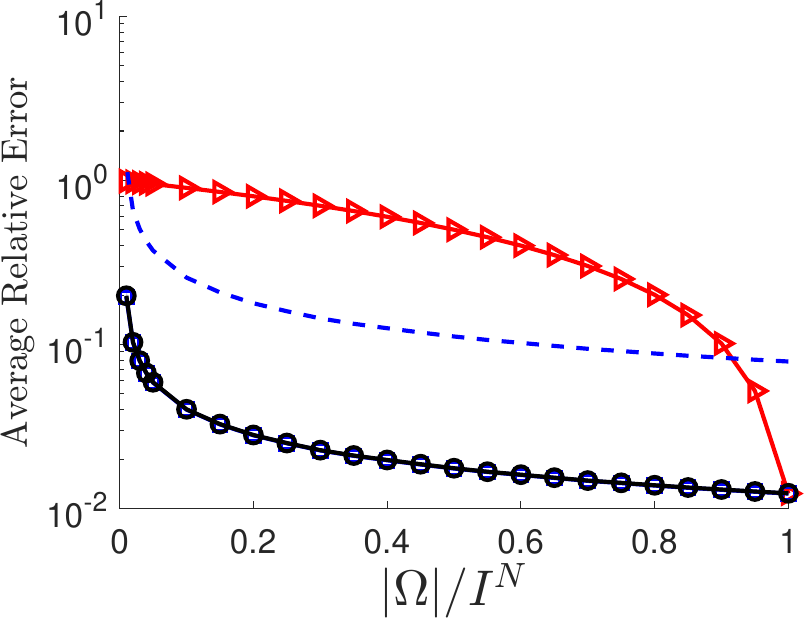}
	\end{tabular}
	\caption{Results varying $\alpha$: $\beta=0.1$, $I=100$, $N=3$, $R=5$, and 50 replicates.}
	\label{fig:varying_alpha}
\end{figure}

\begin{figure}[ht!]
	\centering
	\begin{tabular}{cc}
		$I=50$ &  $I=200$ \\
		\includegraphics[width=.4\textwidth]{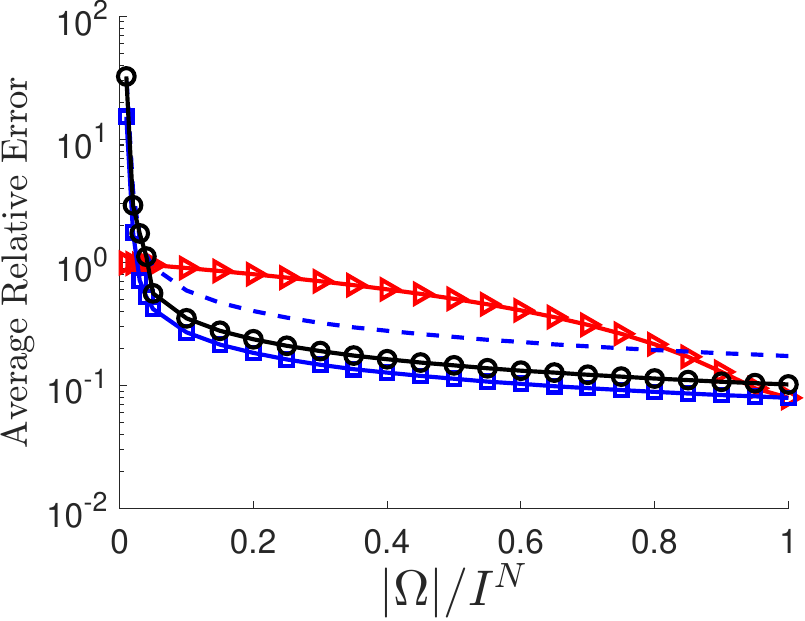}&
		\includegraphics[width=.4\textwidth]{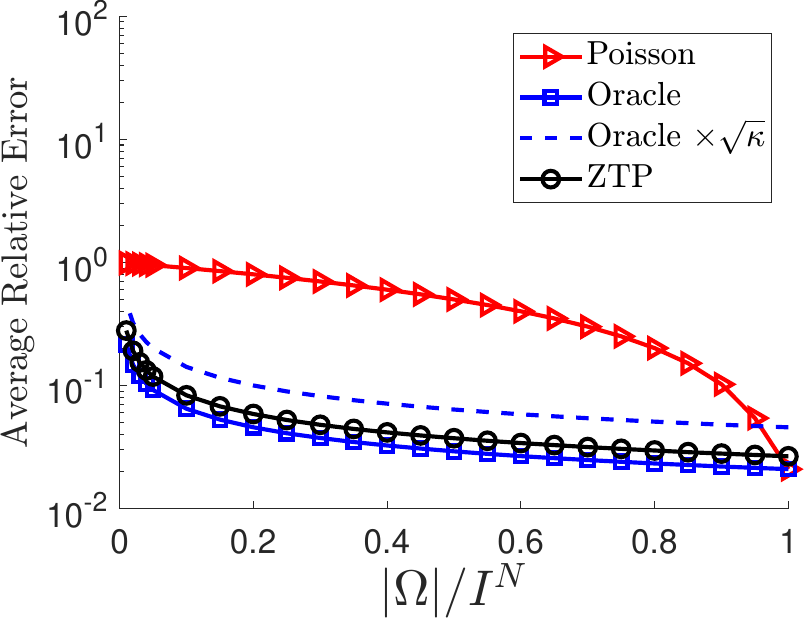}\\
	\end{tabular}
	\caption{Results varying $I$: $\beta=1$, $\alpha=2.5$, $N=3$, $R=5$, and 50 replicates.}
	\label{fig:varying_dimensions}
\end{figure}

\section{Main Theorems and Proofs}
\label{sec:proofs}
We now present the main results for our proposed zero-truncated approach and the ideal Poisson regression methodology (i.e., the oracle estimator). This section includes two theorems that independently provide error bounds for the zero-truncated Poisson estimator $\widetilde{\boldsymbol{\mathscr{M}}}$ (Theorem \ref{mainthm0}) and for the oracle estimator $\widehat{\boldsymbol{\mathscr{M}}}$ (Theorem \ref{mainthm}). Each result derives a worse case relative error for each respective methodology, with the purpose of comparing these two approaches analytically (i.e., comparing our proposed method to the ``ideal'' regression method). This leads to Theorem \ref{simplethm} in the introduction, which is a corollary of the two main results of this section. Theorem \ref{simplethm} simply presents the error bounds of Theorems \ref{mainthm0} and \ref{mainthm} together, under simplified circumstances and gathering common terms. The proof of Theorem \ref{simplethm} will be presented after stating Theorems \ref{mainthm0} and \ref{mainthm}. The proofs of these main results can be found in the following subsections, relying on crucial lemmas to establish the theorems. For brevity, we omit the proofs of the required lemmas until Appendix \ref{sec:proofslemmas}.

For compactness, in this section we modify the log-likelihood functions to
\[
f_{\Omega}(\boldsymbol{\mathscr{M}}) \coloneqq \sum_{\textbf{i}\in\Omega}x_{\textbf{i}}\log\left(m_{\textbf{i}}\right)-m_{\textbf{i}},
\]
and
\[
\tilde{f}_{\Omega}(\boldsymbol{\mathscr{M}}) \coloneqq \sum_{\textbf{i}\in\Omega}x_{\textbf{i}}\log\left(m_{\textbf{i}}\right)-\log\left(\exp(m_{\textbf{i}}) - 1\right),
\]
so that their dependency on the count data $\boldsymbol{\mathscr{X}}$ is implicit and the terms $-\log(x_{\textbf{i}}!)$ are removed. We note that any $\widehat{\boldsymbol{\mathscr{M}}},\widetilde{\boldsymbol{\mathscr{M}}}\in S_{R}^{+}(\beta,\alpha)$ satisfying
\[
f_{\Omega}\left(\widehat{\boldsymbol{\mathscr{M}}}\right) \geq f_{\Omega}(\boldsymbol{\mathscr{M}}) \ \ \ \ \mbox{and} \ \ \ \ \tilde{f}_{\Gamma}\left(\widetilde{\boldsymbol{\mathscr{M}}}\right) \geq \tilde{f}_{\Gamma}\left(\boldsymbol{\mathscr{M}}\right)
\]
will also satisfy the requirements in (\ref{errorbounds_intro1}) and (\ref{errorbounds_intro}). Therefore, this modification does not change the statement and simply serves as a means to compress our proofs.

In the interest of generality, we will also state our results in terms of the CP rank \cite{kolda} defined as
\[
\mbox{rank}(\boldsymbol{\mathscr{T}})\coloneqq \min\Bigg\{R\in\mathbb{N} \ \Big| \ \boldsymbol{\mathscr{T}}=\sum_{r=1}^{R} \textbf{a}_{r}^{(1)}\circ \textbf{a}_{r}^{(2)}\circ\cdots \circ \textbf{a}_{r}^{(N)} \ \ \mbox{with} \ \ \textbf{a}_{r}^{(n)}\in\mathbb{R}^{I_n} \ \ \forall r\in [R], n\in [N] \Bigg\},
\]
which simply removes the nonnegative constraints on the factors. We also define the respective the search space
\[
S_R(\beta,\alpha) \coloneqq \Big\{\boldsymbol{\mathscr{T}}\in\mathbb{R}^{I_1\times \cdots\times I_N} \ | \ \beta\leq t_{\textbf{i}}\leq \alpha \ \ \mbox{and} \ \ \mbox{rank}(\boldsymbol{\mathscr{T}})\leq R\Big\}.
\]
We note that we always have rank$(\boldsymbol{\mathscr{T}})\leq$ rank$_{+}(\boldsymbol{\mathscr{T}})$. 

We first present the main result for our proposed methodology. The following theorem provides the error bound of the estimator $\widetilde{\boldsymbol{\mathscr{M}}}$ from Section \ref{sec:intro}, which achieves zero-truncated Poisson tensor completion using only the set of non-zero counts $\Gamma$.

\begin{theorem}
	\label{mainthm0}
	Suppose $\boldsymbol{\mathscr{M}}\in S_R^+(\beta,\alpha)$ and let $\Omega\subseteq [I_1]\times\cdots\times[I_N]$ be a subset of cardinality $|\Omega|\leq I_1\cdots I_N$, chosen uniformly at random from all subsets of the same cardinality. Let $\boldsymbol{\mathscr{X}}\in \mathbb{Z}_+^{I_1\times\cdots \times I_N}$ be a random tensor, with each entry in $\Omega$ generated independently via (\ref{poisson}) and let $\Gamma\subseteq\Omega$ be the set of nonzero entries of $\boldsymbol{\mathscr{X}}$ restricted to $\Omega$. Further suppose that $\min_n\{I_n\} \geq (N-1)\log_2^2\left(\max_n\{I_n\}\right) + 1$ and define
 \[
 \tau \coloneqq \frac{1}{\alpha(e^2-2) + 3\log_2(|\Omega|)}.
 \] 
 
 Fix $\tilde{R}\in\mathbb{N}$, then for any $\widetilde{\boldsymbol{\mathscr{M}}}\in S_{\tilde{R}}^+(\beta,\alpha)$ such that
	\begin{equation}
		\label{feasible0}
		\tilde{f}_{\Gamma}(\widetilde{\boldsymbol{\mathscr{M}}}) \geq \tilde{f}_{\Gamma}(\boldsymbol{\mathscr{M}}),
	\end{equation}
	we have
	\begin{align}
		\label{bound10}
		\frac{\|\boldsymbol{\mathscr{M}}-\widetilde{\boldsymbol{\mathscr{M}}}\|^2}{\|\boldsymbol{\mathscr{M}}\|^2} \leq \, & \frac{64\alpha\left((4+\beta\tau)e^{\beta}-4\right)}{(e^{\beta}-\beta-1)\beta^3\tau} \left(\frac{(\alpha R+\alpha \tilde{R} + 2)\sqrt{\sum_{n=1}^{N}I_n}}{\sqrt{|\Omega|}}\right) 
	\end{align}
	with probability exceeding $1-\frac{2}{|\Omega|}$. Furthermore, in the general case where $\boldsymbol{\mathscr{M}}\in S_{R}(\beta,\alpha)$ and $\widetilde{\boldsymbol{\mathscr{M}}}\in S_{\tilde{R}}(\beta,\alpha)$ but otherwise under the same assumptions, we have
	\begin{align}
		\label{bound20}
		\frac{\|\boldsymbol{\mathscr{M}}-\widetilde{\boldsymbol{\mathscr{M}}}\|^2}{\|\boldsymbol{\mathscr{M}}\|^2} \leq \, & \frac{64\alpha\left((4+\beta\tau)e^{\beta}-4\right)}{(e^{\beta}-\beta-1)\beta^3\tau}
		\left(\alpha\left(R\sqrt{R}\right)^{N-1}+\alpha\left(\tilde{R}\sqrt{\tilde{R}}\right)^{N-1}+2\right)\frac{\sqrt{\sum_{n=1}^{N}I_n}}{\sqrt{|\Omega|}} 
	\end{align}
	with probability greater than $1-\frac{2}{|\Omega|}$.
\end{theorem}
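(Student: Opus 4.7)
The plan is to mirror the analysis used for the Poisson completion companion result but with $\tilde f$ in place of the standard Poisson log-likelihood, tracking how the substitution alters the strong-concavity and concentration estimates. The starting point is the standard empirical process decomposition: writing $g(\boldsymbol{\mathscr{T}}) \coloneqq \mathbb{E}[\tilde f_\Gamma(\boldsymbol{\mathscr{T}})]$, the assumption (\ref{feasible0}) yields
\[
0 \leq \tilde f_\Gamma(\widetilde{\boldsymbol{\mathscr{M}}}) - \tilde f_\Gamma(\boldsymbol{\mathscr{M}}) = \bigl(g(\widetilde{\boldsymbol{\mathscr{M}}}) - g(\boldsymbol{\mathscr{M}})\bigr) + \bigl[(\tilde f_\Gamma - g)(\widetilde{\boldsymbol{\mathscr{M}}}) - (\tilde f_\Gamma - g)(\boldsymbol{\mathscr{M}})\bigr],
\]
so the deficit $g(\boldsymbol{\mathscr{M}}) - g(\widetilde{\boldsymbol{\mathscr{M}}}) \geq 0$ is bounded above by the uniform fluctuation of the empirical process $\tilde f_\Gamma - g$ over $S_{\tilde R}^+(\beta,\alpha)$.

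Next I would lower bound this deficit by a quadratic in $\|\boldsymbol{\mathscr{M}} - \widetilde{\boldsymbol{\mathscr{M}}}\|_\Omega^2 \coloneqq \sum_{\textbf{i}\in\Omega}(m_{\textbf{i}} - \tilde m_{\textbf{i}})^2$. A direct single-index computation shows that the expected ZTP log-likelihood $t \mapsto m \log t - (1 - e^{-m}) \log(e^t - 1)$ is strictly concave on $(0,\infty)$ with maximum at $t = m$, but its curvature on $[\beta,\alpha]$ is weaker than that of the Poisson analogue $t \mapsto m \log t - t$. Computing the Bregman-divergence ratio over the full admissible box $[\beta,\alpha]^2$ (not merely at the optimum) and collecting the polynomial and exponential contributions yields exactly the amplification factor $\kappa = \tfrac{(4+\beta)e^\beta - 4}{2(e^\beta - \beta - 1)}$. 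This accounts for the appearance of $\kappa$ in (\ref{bound10}) and isolates the single place where the ZTP analysis deviates quantitatively from the Poisson one.

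The remaining task is to control the uniform empirical fluctuation. I would build an $\varepsilon$-net of $S_R^+(\beta,\alpha) \cup S_{\tilde R}^+(\beta,\alpha)$, exploiting the fact that every nonnegative CP factor entry lies in a bounded interval determined by $\alpha$; standard volumetric estimates give a log-covering number of order $(R + \tilde R)\sum_n I_n$. A Bernstein-type inequality applied index by index, using the Poisson moments of $x_{\textbf{i}}$ and the boundedness of $\log t$ on $[\beta,\alpha]$, controls the fluctuation at each net point, and Dudley-style chaining over dyadic scales delivers the rate $\sqrt{(R+\tilde R)\sum_n I_n / |\Omega|}$. The $\log_2(|\Omega|)$ prefactor in (\ref{bound10}) comes from the union bound across chaining levels, while the $\alpha(e^2-2)$ term encodes the Poisson variance contribution. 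For the general CP-rank version (\ref{bound20}), nonnegativity is unavailable, so the cover must be built mode-by-mode and recombined by slicing, producing the combinatorial factor $(R\sqrt R)^{N-1} + (\tilde R \sqrt{\tilde R})^{N-1}$.

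The final step invokes the uniform random sampling hypothesis, together with the dimension condition $\min_n I_n \geq (N-1)\log_2^2(\max_n I_n) + 1$, to pass from the empirical norm $\|\cdot\|_\Omega^2$ to the full Frobenius norm $\|\boldsymbol{\mathscr{M}} - \widetilde{\boldsymbol{\mathscr{M}}}\|^2$ via a tensor-completion restricted-isometry concentration over low-rank tensors. I expect the main obstacle to be the empirical-process step: the summand $x_{\textbf{i}} \log t_{\textbf{i}} - \log(e^{t_{\textbf{i}}} - 1)$ is heavy tailed through $x_{\textbf{i}}$ and becomes singular as $t_{\textbf{i}} \downarrow 0$, so the Bernstein constants must be chosen delicately in order to match the precise $(\alpha,\beta)$-dependent prefactors stated in (\ref{bound10}). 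A subsidiary subtlety is that $\Gamma$ itself is random and depends on $\boldsymbol{\mathscr{X}}$; before applying concentration one must decouple the summation index from the summand by rewriting $\sum_{\textbf{i}\in\Gamma}(\cdots) = \sum_{\textbf{i}\in\Omega} \mathbbm{1}[x_{\textbf{i}} > 0](\cdots)$.
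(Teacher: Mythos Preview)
Your high-level decomposition is right and matches the paper: you start from $\tilde f_\Gamma(\widetilde{\boldsymbol{\mathscr{M}}})\ge\tilde f_\Gamma(\boldsymbol{\mathscr{M}})$, insert expectations, lower-bound the population deficit by a quadratic in the parameters, and bound the stochastic fluctuation uniformly.  You also correctly identify the indicator rewriting $\sum_{\textbf{i}\in\Gamma}(\cdots)=\sum_{\textbf{i}\in\Omega}\mathbbm{1}[x_{\textbf{i}}>0](\cdots)$ as the way to decouple the random index set $\Gamma$ from the summands; the paper does exactly this with $u_{\textbf{i}}$.

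The two core technical steps, however, are carried out quite differently in the paper, and your proposal contains a couple of misattributions.  First, the concentration of $\tilde f_\Gamma-\mathbb{E}\tilde f_\Gamma$ is \emph{not} done by $\varepsilon$-nets, Bernstein, and Dudley chaining.  Instead the paper symmetrizes with Rademachers, applies the Ledoux--Talagrand contraction inequality twice (once with $\varphi(t)=\beta\log(1+t)$ for the $x_{\textbf{i}}\log t_{\textbf{i}}$ piece, once with $\phi(t)=(1-e^{-\beta})\log(e^{t+\log 2}-1)$ for the $-u_{\textbf{i}}\log(e^{t_{\textbf{i}}}-1)$ piece), and then bounds the resulting linear Rademacher form via the atomic $M$-norm and its dual.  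The clean linear dependence on $R$ in the NNCP case and the $(R\sqrt R)^{N-1}$ dependence in the general CP case both come from the $M$-norm bound (Lemma~\ref{lemma3}), not from covering numbers or slicing.  Your covering-number route is plausible for the NNCP case but would be hard to push through in the general CP case, where the factors need not be entrywise bounded; the $(R\sqrt R)^{N-1}$ factor is rather specific to the $M$-norm machinery.

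Second, the passage from $\sum_{\textbf{i}\in\Omega}(m_{\textbf{i}}-\tilde m_{\textbf{i}})^2$ to $\|\boldsymbol{\mathscr{M}}-\widetilde{\boldsymbol{\mathscr{M}}}\|^2$ is \emph{not} done via a restricted-isometry argument.  The paper simply observes that the right-hand side of the high-probability bound depends on $\Omega$ only through $|\Omega|$, takes expectation over the uniform draw of $\Omega$, and computes $\mathbb{E}_\Omega\sum_{\textbf{i}\in\Omega}(m_{\textbf{i}}-\tilde m_{\textbf{i}})^2=\frac{|\Omega|}{I_1\cdots I_N}\|\boldsymbol{\mathscr{M}}-\widetilde{\boldsymbol{\mathscr{M}}}\|^2$ by a counting argument.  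Relatedly, the dimension hypothesis $\min_n I_n\ge (N-1)\log_2^2(\max_n I_n)+1$ is used inside the concentration lemma (it is exactly the condition needed to apply the dual-$M$-norm moment bound of Lemma~\ref{lemma4} with $h=\log_2|\Omega|$), not in the Frobenius passage.

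Finally, your claim that the factor $\kappa$ arises entirely from the Bregman/curvature step is not quite right.  In the paper the curvature lemma (Lemma~\ref{KLbound0}) contributes the factor $\frac{e^\beta-\beta-1}{e^\beta-1}$, while the second contraction argument for the $-\log(e^{t}-1)$ term in the concentration lemma contributes $\frac{(4+\beta)e^\beta-4}{2(e^\beta-1)}$; the product of their effects is $\kappa=\frac{(4+\beta)e^\beta-4}{2(e^\beta-\beta-1)}$.  So part of the $\beta$-dependence you attribute to strong concavity actually lives in the empirical-process bound.
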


See Section \ref{ztpSec} for the proof. The result provides an explicit error bound of our methodology with respect to the CP rank and nonnegative CP rank. This statement is more general than what Theorem \ref{simplethm} permits, mainly since we may choose $\tilde{R} < R$, i.e., the rank of the estimate $\widetilde{\boldsymbol{\mathscr{M}}}$ may be smaller than the rank of the tensor of interest $\boldsymbol{\mathscr{M}}$. We stress that such a rank value for which (\ref{feasible0}) holds may not exist since, in general, this assumption is only guaranteed when $\tilde{R} \geq R$, e.g., by setting
\[
\widetilde{\boldsymbol{\mathscr{M}}} = \argmax_{\boldsymbol{\mathscr{T}}\in S_{\tilde{R}}^+(\beta,\alpha)}\tilde{f}_{\Gamma}(\boldsymbol{\mathscr{T}}),
\]
a feasible problem since $\boldsymbol{\mathscr{M}}\in S_{\tilde{R}}^+(\beta,\alpha)$ for $\tilde{R} \geq R$ whose output will satisfy (\ref{feasible0}). 

Despite this, we state Theorem \ref{mainthm} in this flexible manner since a practitioner is typically oblivious to the model's true structure, so $\tilde{R}$ will likely be chosen smaller than $R$ in practice. In such a scenario, the main result remains applicable and informative for practitioners. As a silver lining, tensors suffer from degeneracy \cite{kolda}, i.e., tensors may be approximated arbitrarily well by a factorization of lower rank. It is therefore conceivable that even when the true rank is known there may exist $\tilde{R}<R$ and $\widetilde{\boldsymbol{\mathscr{M}}}$ satisfying (\ref{feasible0}), which will reduce the numerical complexity involved in producing such an estimate.

Next, we present the main result for the oracle estimator. This is the estimator $\widehat{\boldsymbol{\mathscr{M}}}$ that achieves Poisson tensor completion on the set of true counts, introduced in Section \ref{sec:intro} as the ideal method that we compare our proposed approach to. The statement for the oracle scenario is very similar to the zero-truncated case in Theorem \ref{mainthm0}, but does not consider the set of nonzero entries $\Gamma$. Though Theorem 
\ref{mainthm0} is this work's main contribution due to the novel methodology, the following result may be of independent interest to the reader since it generalizes the work in \cite{PoissonMC} to the tensor case with best sampling complexity to date for general arrays with $N\geq 3$.

\begin{theorem}
	\label{mainthm}
	Under the setup of Theorem \ref{mainthm0}, fix $\hat{R}\in\mathbb{N}$. Then for any $\widehat{\boldsymbol{\mathscr{M}}}\in S_{\hat{R}}^+(\beta,\alpha)$ such that
	\begin{equation}
		\label{feasible}
		f_{\Omega}(\widehat{\boldsymbol{\mathscr{M}}}) \geq f_{\Omega}(\boldsymbol{\mathscr{M}}),
	\end{equation}
	we have
	\begin{equation}
		\label{bound1}
		\frac{\|\boldsymbol{\mathscr{M}}-\widehat{\boldsymbol{\mathscr{M}}}\|^2}{\|\boldsymbol{\mathscr{M}}\|^2} \leq \frac{128\alpha}{\beta^3\tau}\left(\frac{(\alpha R+\alpha\hat{R}+2)\sqrt{\sum_{n=1}^{N}I_n}}{\sqrt{|\Omega|}}\right)
	\end{equation}
	with probability exceeding $1-\frac{2}{|\Omega|}$. Furthermore, in the general case where $\boldsymbol{\mathscr{M}}\in S_{R}(\beta,\alpha)$ and $\widehat{\boldsymbol{\mathscr{M}}}\in S_{\hat{R}}(\beta,\alpha)$ but otherwise under the same assumptions, we have
	\begin{align}
		\label{bound2}
		\frac{\|\boldsymbol{\mathscr{M}}-\widehat{\boldsymbol{\mathscr{M}}}\|^2}{\|\boldsymbol{\mathscr{M}}\|^2} \leq \, & \frac{128\alpha}{\beta^3\tau}
		\left(\alpha\left(R\sqrt{R}\right)^{N-1}+\alpha\left(\hat{R}\sqrt{\hat{R}}\right)^{N-1}+2\right)\frac{\sqrt{\sum_{n=1}^{N}I_n}}{\sqrt{|\Omega|}} 
	\end{align}
	with probability greater than $1-\frac{2}{|\Omega|}$.
\end{theorem}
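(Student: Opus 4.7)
The plan is to follow the classical empirical-risk template for maximum-likelihood estimation specialized to the Poisson log-likelihood $f_\Omega$, in the spirit of the matrix result of~\cite{PoissonMC}. I would start by subtracting $f_\Omega(\boldsymbol{\mathscr{M}})$ from both sides of the hypothesis (\ref{feasible}) to produce the basic inequality
\begin{equation*}
\sum_{\textbf{i}\in\Omega}(x_{\textbf{i}}-m_{\textbf{i}})\log\frac{\hat m_{\textbf{i}}}{m_{\textbf{i}}}\;\geq\;\sum_{\textbf{i}\in\Omega}\bigl[m_{\textbf{i}}\log(m_{\textbf{i}}/\hat m_{\textbf{i}})+\hat m_{\textbf{i}}-m_{\textbf{i}}\bigr],
\end{equation*}
whose right side is the empirical Poisson KL divergence. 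Since $h(t)=t\log t-t$ is $1/\alpha$-strongly convex on $[\beta,\alpha]$, the right side is bounded below by $\tfrac{1}{2\alpha}\sum_{\textbf{i}\in\Omega}(m_{\textbf{i}}-\hat m_{\textbf{i}})^2$, so the problem reduces to controlling the mean-zero Poisson fluctuation on the left uniformly over $\widehat{\boldsymbol{\mathscr{M}}}\in S_{\hat R}^+(\beta,\alpha)$.

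Next, I would control that fluctuation by a Bernstein-type tail bound tailored to Poisson random variables (using $\mathbb{E}[e^{t(x_{\textbf{i}}-m_{\textbf{i}})}]=\exp(m_{\textbf{i}}(e^t-t-1))$), then symmetrize and pass to the Rademacher complexity of the function class $\{\log\hat m:\widehat{\boldsymbol{\mathscr{M}}}\in S_{\hat R}^+(\beta,\alpha)\}$. Since $\log$ is $1/\beta$-Lipschitz on $[\beta,\alpha]$, the contraction principle replaces this class by $S_{\hat R}^+(\beta,\alpha)$ itself, whose complexity is handled via a Dudley integral over an $\varepsilon$-net of bounded low-rank tensors. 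A second Bernstein step, this time over the random sampling of $\Omega$, would relate $\sum_{\textbf{i}\in\Omega}(m_{\textbf{i}}-\hat m_{\textbf{i}})^2$ to the full $\|\boldsymbol{\mathscr{M}}-\widehat{\boldsymbol{\mathscr{M}}}\|^2$, and dividing by $\|\boldsymbol{\mathscr{M}}\|^2\geq\beta^2\prod_n I_n$ yields the claimed relative error. The explicit Poisson moment generating function is what produces the $\alpha(e^2-2)+3\log_2(|\Omega|)$ factor and the $\alpha(\alpha+1)/\beta^3$ prefactor in (\ref{bound1}).

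The main obstacle will be obtaining a sufficiently sharp covering-number bound for bounded low-rank tensors to match the linear-in-$R+\hat R$ and linear-in-$\sum_n I_n$ dependence in (\ref{bound1}), rather than the super-quadratic rates of earlier tensor-completion bounds~\cite{nuctensor,NEURIPS2019_a1519de5}. This is also where the two regimes split: for NNCP decompositions the nonnegativity pins each factor entry to a small range such as $[0,(\alpha/R)^{1/N}]$, so that the metric entropy is linear in $R\sum_n I_n$ and Dudley's integral returns the $(R+\hat R)\sqrt{\sum_n I_n}$ rate; for general CP the factors may cancel, and one can only control them entrywise by a coarser bound that ultimately produces the $(R\sqrt R)^{N-1}$ factor in (\ref{bound2}). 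The hypothesis $\min_n I_n\geq (N-1)\log_2^2(\max_n I_n)+1$ presumably enters when balancing per-mode covering entropies so that $\sum_n I_n$ (rather than a product) dominates, absorbing the $N-1$ extra logarithmic terms contributed by the non-maximal modes.
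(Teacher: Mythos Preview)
Your overall architecture---basic inequality, KL lower bound $D(p\|q)\geq(p-q)^2/(2\alpha)$, symmetrization, contraction via the $1/\beta$-Lipschitz property of $\log$ on $[\beta,\alpha]$---matches the paper. But the mechanism you propose for the uniform fluctuation bound is different from the paper's and contains a claim that does not hold.

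The paper does \emph{not} use covering numbers or a Dudley integral. After symmetrization and contraction it writes the linear process as $\langle \boldsymbol{\mathscr{T}}-\mathbbm{1},\,\boldsymbol{\mathscr{V}}\circ\Delta_\Omega\rangle$ and bounds it by the atomic $M$-norm/dual pair: $\|\boldsymbol{\mathscr{T}}-\mathbbm{1}\|_M\cdot\|\boldsymbol{\mathscr{V}}\circ\Delta_\Omega\|_M^*$. The linear-in-$R$ rate for NNCP comes from the one-line observation that if $\mathrm{rank}_+(\boldsymbol{\mathscr{T}})\leq R$ and $\|\boldsymbol{\mathscr{T}}\|_\infty\leq\alpha$, then each nonnegative rank-one summand $\boldsymbol{\mathscr{T}}_r$ already satisfies $\|\boldsymbol{\mathscr{T}}_r\|_\infty\leq\alpha$ (no cancellation), so the triangle inequality gives $\|\boldsymbol{\mathscr{T}}\|_M\leq\alpha R$. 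For general CP one only has $\|\boldsymbol{\mathscr{T}}\|_M\leq\alpha(R\sqrt R)^{N-1}$, which produces~(\ref{bound2}). The dual norm $\|\boldsymbol{\mathscr{V}}\circ\Delta_\Omega\|_M^*$ is controlled by Hoeffding plus a union bound over the finite set $\mathcal{T}_\pm$ of $\pm1$ rank-one tensors (of cardinality $\leq 2^{\sum_n I_n}$), followed by Markov at moment order $h=\log_2|\Omega|$; the hypothesis $\min_n I_n\geq (N-1)\log_2^2(\max_n I_n)+1$ is exactly the condition needed for this moment bound to close, not a covering-entropy balancing.

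Your route would stall at the step you flagged as the obstacle, because the justification you offer is false: nonnegativity of the factors does \emph{not} pin each factor entry to $[0,(\alpha/R)^{1/N}]$. Take $N=2$, $R=2$, $\alpha$ arbitrary, and $\boldsymbol{\mathscr{T}}=\lambda\,\mathbf{e}_1\mathbf{e}_1^\top+\lambda^{-1}\mathbbm{1}\mathbbm{1}^\top$ with $\lambda$ large; then $\|\boldsymbol{\mathscr{T}}\|_\infty$ stays bounded while individual factor entries are unbounded. So a covering argument based on factor-entry bounds cannot deliver the $R+\hat R$ rate without an additional idea. The paper sidesteps this entirely via the $M$-norm.

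Two smaller points. First, the paper passes from $\sum_{\textbf{i}\in\Omega}(m_{\textbf{i}}-\hat m_{\textbf{i}})^2$ to the full Frobenius norm not by a second Bernstein argument but simply by taking expectation over the uniform draw of $\Omega$, which gives $\mathbb{E}_\Omega\sum_{\textbf{i}\in\Omega}(\cdot)=\frac{|\Omega|}{\prod_n I_n}\|\boldsymbol{\mathscr{M}}-\widehat{\boldsymbol{\mathscr{M}}}\|^2$ exactly. Second, the $\alpha(e^2-2)+3\log_2|\Omega|$ factor arises from a bound on $\mathbb{E}[\max_{\textbf{i}\in\Omega}x_{\textbf{i}}^h]$ (pulled outside after symmetrization), not directly from the Poisson moment generating function inside a Bernstein inequality.
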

The proof is postponed until Section \ref{PTCresult}. 

In the error bound of the oracle estimator, we see a simplified right hand side (\ref{bound1}) in contrast to the zero-truncated Poisson estimator error bound (\ref{bound10}) which contains the multiplicative term
\[
\kappa = \frac{(4+\beta\tau)e^{\beta}-4}{2(e^{\beta}-\beta-1)}.
\]
This is the error amplification factor defined in (\ref{def-kappa}) that we encounter in the error bound (\ref{errorbounds_intro}) of the introductory result. This observation and considering simplified circumstances provide the proof of Theorem \ref{simplethm}, which is in fact a corollary of Theorems \ref{mainthm0} and \ref{mainthm} that presents the derived error bounds together.

\begin{proof}[Proof of Theorem \ref{simplethm}]
In the setting of Theorem \ref{simplethm}, the conditions of Theorems \ref{mainthm0} and \ref{mainthm} are satisfied with $R = \tilde{R} = \hat{R}$. Applying both of these results, error bounds (\ref{bound10}) and (\ref{bound1}) hold simultaneously with probability exceeding $1-\frac{4}{|\Omega|}$ by a union bound.

Using our previous observations on the term $\kappa$, the following inequalities both hold with probability exceeding $1-\frac{4}{|\Omega|}$
\[
\frac{\|\boldsymbol{\mathscr{M}}-\widetilde{\boldsymbol{\mathscr{M}}}\|^2}{\|\boldsymbol{\mathscr{M}}\|^2} \leq \kappa\cdot\frac{128\alpha}{\beta^3}\left(\alpha(e^2-2) + 3\log_2(|\Omega|) \right)\frac{(2\alpha R+2)\sqrt{\sum_{n=1}^{N}I_n}}{\sqrt{|\Omega|}}
\]
and
\[
\frac{\|\boldsymbol{\mathscr{M}}-\widehat{\boldsymbol{\mathscr{M}}}\|^2}{\|\boldsymbol{\mathscr{M}}\|^2} \leq \frac{128\alpha}{\beta^3}\left(\alpha(e^2-2) + 3\log_2(|\Omega|) \right)\frac{(2\alpha R+2)\sqrt{\sum_{n=1}^{N}I_n}}{\sqrt{|\Omega|}}.
\]

To simplify further, notice that with $I = \max_{n}I_n$ we have $\sqrt{\sum_{n=1}^{N}I_n}\leq \sqrt{NI}$ and $\log_2(|\Omega|)\leq N\log_2(I)$. Defining $\kappa$ as before and
\[
\epsilon := \frac{128\alpha}{\beta^3}\left(\alpha(e^2-2) + 3N\log_2(I) \right)\frac{(2\alpha R+2)\sqrt{NI}}{\sqrt{|\Omega|}},
\]
we obtain error bounds
\[
\frac{\|\boldsymbol{\mathscr{M}}-\widetilde{\boldsymbol{\mathscr{M}}}\|^2}{\|\boldsymbol{\mathscr{M}}\|^2} \leq \kappa\epsilon \ \ \ \ \mbox{and} \ \ \ \ \frac{\|\boldsymbol{\mathscr{M}}-\widehat{\boldsymbol{\mathscr{M}}}\|^2}{\|\boldsymbol{\mathscr{M}}\|^2} \leq \epsilon .
\]
This concludes the proof of Theorem \ref{simplethm}, where the statement treats $N, \alpha, \beta\sim O(1)$ in order to write $\epsilon \sim \mathcal{O}(RI^{\frac12}\log_2(I)|\Omega|^{-\frac12})$ for ease of exposition.
\end{proof}

The main focus of this work deals with the nonnegative CP decomposition and rank of tensors. In terms of the general CP rank, notice that bounds (\ref{bound20}) and (\ref{bound2}) exhibit polynomial dependence $(R\sqrt{R})^{N-1}$ on the rank due to the novel work of \cite{navid1,navid2,navid3}. While pessimistic, the approach improves on all tensor sampling complexity results to date, particularly on the dependence of the ambient dimensions $\sum_nI_n$ (see Section \ref{reqlem} for further discussion). A minor contribution of this work is that the same proof strategy can be applied to the nonnegative CP rank with severely improved rank dependence. 

Sections \ref{ztpSec} and \ref{PTCresult} prove Theorems \ref{mainthm0} and \ref{mainthm} respectively. We note that the proof of both results is very similar, where the proof of Theorem \ref{mainthm0} requires several additional steps. For this reason, we prove the zero-truncated result first which allows an expedited proof of Theorem \ref{mainthm}.

\subsection{Zero-Truncated Poisson Tensor Completion: Proof}
\label{ztpSec}
In this section we prove Theorem \ref{mainthm0}, which derives an error bound for our proposed estimator $\widetilde{\boldsymbol{\mathscr{M}}}$ achieving zero-truncated Poisson tensor completion. The proof requires two lemmas, Lemma \ref{KLbound0} and Lemma \ref{mainlemma0} below, which we only state in this section and prove in Sections \ref{KL0} and \ref{secmainlem0} respectively. 

To briefly summarize the proof and the role of the lemmas, Theorem \ref{mainthm0} controls the error $\|\boldsymbol{\mathscr{M}}-\widetilde{\boldsymbol{\mathscr{M}}}\|$ via the largest deviation of the log-likelihood function from its expected value over all feasible tensors (where the expectation is taken in terms of the randomly generated counts $\boldsymbol{\mathscr{X}}$). Lemma \ref{KLbound0} upper bounds the error between $\boldsymbol{\mathscr{M}}$ and $\widetilde{\boldsymbol{\mathscr{M}}}$ on $\Omega$ via the KL divergence of two zero-truncated Poisson probability distributions. The main bulk in the proof of Theorem \ref{mainthm0} shows that the KL divergence between $\boldsymbol{\mathscr{M}}$ and $\widetilde{\boldsymbol{\mathscr{M}}}$ is in turn controlled by the supremum of $|\tilde{f}_{\Gamma}(\boldsymbol{\mathscr{T}})-\mathbb{E}\tilde{f}_{\Gamma}(\boldsymbol{\mathscr{T}})|$ over all $\boldsymbol{\mathscr{T}}\in S_R^+(\beta,\alpha)$. This latter term is bounded by Lemma \ref{mainlemma0}, which derives an upper bound of the supremum in terms of $\alpha, \beta$, the rank, and the tensor dimensions that holds with high probability. The proof ends by applying the uniform random distribution of $\Omega$ to extend the error to all entries, i.e., $\|\boldsymbol{\mathscr{M}}-\widetilde{\boldsymbol{\mathscr{M}}}\|$.

For the first lemma, we define the KL divergence between two zero-truncated Poisson probability distributions $p,q>0$ as
\begin{equation}
	\label{KLdef0}
	D_0(p\| q) \coloneqq \frac{p}{1-e^{-p}}\log\left(\frac{p}{q}\right) - (\log(e^p-1)-\log(e^q-1)).
\end{equation}
The following result lower bounds this KL divergence by the squared difference of two probability distributions.
\begin{lemma}
	\label{KLbound0}
	For any $p,q\in [\beta,\alpha]$, we have
	\[
	(1-e^{-p})D_0(p\| q) \geq \frac{e^{\beta}-\beta-1}{2\alpha(e^{\beta}-1)}(p-q)^2\geq 0.
	\]
\end{lemma}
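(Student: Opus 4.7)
The plan is to rewrite $D_0(p\|q)$ as an integral against the derivative of the zero-truncated Poisson mean function $\psi(s) \coloneqq s/(1-e^{-s})$, and then exploit the monotonicity of $\psi'$ to extract a clean quadratic lower bound.

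First I would differentiate the definition (\ref{KLdef0}) with respect to $q$. A short computation gives
\[
\partial_q D_0(p\|q) \;=\; \frac{1}{1-e^{-q}} - \frac{p}{q(1-e^{-p})} \;=\; \frac{\psi(q) - \psi(p)}{q},
\]
which vanishes at $q=p$ (as does $D_0(p\|p)$ itself). Consequently, for $q > p$ the fundamental theorem of calculus yields
\[
D_0(p\|q) \;=\; \int_p^q \frac{\psi(s) - \psi(p)}{s}\, ds \;=\; \int_p^q \frac{1}{s}\int_p^s \psi'(t)\, dt\, ds,
\]
and swapping the order of integration produces the clean formula $D_0(p\|q) = \int_p^q \psi'(t)\log(q/t)\, dt$; the case $q < p$ is symmetric, with $D_0(p\|q) = \int_q^p \psi'(t)\log(t/q)\, dt$.

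Next I would bound each factor in the integrand. Applying the elementary inequality $\log x \geq 1 - 1/x$ with $x = q/t$ gives $\log(q/t) \geq (q-t)/q \geq (q-t)/\alpha$ for $0 < t \leq q \leq \alpha$. Combining this with the pointwise lower bound $\psi'(t) \geq \psi'(\beta)$ (established below), the elementary integral $\int_p^q (q-t)\, dt = (q-p)^2/2$, and the bound $1-e^{-p} \geq 1-e^{-\beta}$, the lemma reduces to the algebraic identity
\[
(1-e^{-\beta})\,\psi'(\beta) \;=\; \frac{e^\beta - 1 - \beta}{e^\beta - 1},
\]
which follows from the explicit formula $\psi'(t) = e^t(e^t-1-t)/(e^t-1)^2$.

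The main obstacle is verifying that $\psi'$ is nondecreasing on $(0,\infty)$, which is the step that pins the worst case of the integrand to $t=\beta$. Differentiating $\log \psi'(t)$ and clearing the positive denominator reduces this monotonicity to showing $e^t(t-2) + t + 2 \geq 0$ for $t \geq 0$; this in turn follows by observing that this expression and its first two derivatives vanish at $t = 0$ while its third derivative $(1+t)e^t$ is strictly positive on $[0,\infty)$. The companion nonnegativity $(1-e^{-p})D_0(p\|q) \geq 0$ asserted in the lemma comes for free from the same integral representation (since $\psi$ is increasing), or equivalently from the fact that KL divergences are nonnegative.
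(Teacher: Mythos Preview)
Your proof is correct and takes a genuinely different route from the paper's. The paper argues in two stages: it first shows the comparison inequality $(1-e^{-p})D_0(p\|q)\geq c_\beta\,D(p\|q)$ with $c_\beta=(e^{\beta}-\beta-1)/(e^{\beta}-1)$, by fixing $p$, matching both sides at $q=p$, and comparing $\partial_q$-derivatives; it then invokes Lemma~\ref{KLbound} (the quadratic lower bound for the ordinary Poisson KL) to finish. Your argument bypasses $D(p\|q)$ entirely: you write $D_0(p\|q)$ as the single integral $\int \psi'(t)\log(\cdot)\,dt$ with $\psi(s)=s/(1-e^{-s})$, bound the logarithm by $(q-t)/\alpha$ (or $(t-q)/\alpha$), and push the integrand to its worst case via the monotonicity of $\psi'$. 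The paper's approach is more modular (it cleanly separates the ``zero-truncation penalty'' $c_\beta$ from the Poisson KL bound, which it needs anyway for Theorem~\ref{mainthm}); your approach is more self-contained and yields the same constant without appealing to Lemma~\ref{KLbound}. Both approaches ultimately hinge on a one-variable calculus check: the paper asserts (without detail) that its auxiliary function $f(q)$ is increasing, while you prove that $\psi'$ is nondecreasing via the inequality $e^t(t-2)+t+2\geq 0$. One small remark: in the $q<p$ case your ``symmetric'' bound on the logarithm uses $\log(t/q)\geq (t-q)/t\geq (t-q)/\alpha$ with $t\leq p\leq\alpha$ rather than $q\leq\alpha$, so the symmetry is in spirit rather than literal, but the argument goes through unchanged.
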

This result will be used in the proof of Theorem \ref{mainthm0} to translate an upper bound on the KL divergence to an upper bound on the relative error. We postpone the proof until Section \ref{KL0}, but comment that as a consequence of the proof it can be shown that $D_0(p\| q)=0$ if and only if $p=q$. 

The second lemma is the main component in the proof of Theorem \ref{mainthm0}. The result bounds the largest deviation, over all $\boldsymbol{\mathscr{T}}\in S_R^+(\beta,\alpha)$, of the log-likelihood function from its expected value, where the expectation is taken in terms of the observed $\boldsymbol{\mathscr{X}}$. In the proof of Theorem \ref{mainthm0} this term will be shown to dominate the distance between $\boldsymbol{\mathscr{M}}$ and $\widetilde{\boldsymbol{\mathscr{M}}}$, where this distance will depend on $\alpha, \beta$, the rank, and the tensor dimensions. We note that this result holds for any deterministic set of observed entries, $\Omega$.

\begin{lemma}
	\label{mainlemma0}
	Let $\Omega\subseteq [I_1]\times\cdots\times[I_N]$ be any subset of entries and $\boldsymbol{\mathscr{X}}\in \mathbb{Z}_+^{I_1\times\cdots \times I_N}$ be generated as in Theorem \ref{mainthm0} with $\Gamma\subseteq\Omega$ indicating the set of nonzero entries of $\boldsymbol{\mathscr{X}}$ restricted to $\Omega$. Define
 \[
 \tau \coloneqq \frac{1}{\alpha(e^2-2) + 3\log_2(|\Omega|)}
 \]
 and, given $R\in\mathbb{N}$,
 \[
 R_M^+ \coloneqq \sup_{\boldsymbol{\mathscr{T}}\in S_R^+(\beta,\alpha)}\|\boldsymbol{\mathscr{T}}\|_M.
 \]
 
 Then 
	\begin{align}
		\label{concentration10}
		&\sup_{\boldsymbol{\mathscr{T}}\in S_R^+(\beta,\alpha)}|\tilde{f}_{\Gamma}(\boldsymbol{\mathscr{T}})-\mathbb{E}\tilde{f}_{\Gamma}(\boldsymbol{\mathscr{T}})| \leq 32\left(\frac{(4+\beta\tau)e^{\beta}-4}{(e^{\beta}-1)\beta\tau}\right)(R_M^+ + 1) \sqrt{|\Omega|\sum_{n=1}^{N}I_n}, 
	\end{align}
	with probability exceeding $1-\frac{1}{|\Omega|}$, where the probability and expectation are both over the draw of $\boldsymbol{\mathscr{X}}$. Furthermore, under the same assumptions with $R_M \coloneqq \sup_{\boldsymbol{\mathscr{T}}\in S_R(\beta,\alpha)}\|\boldsymbol{\mathscr{T}}\|_M$ we have
	\begin{align}
		\label{concentration20}
		&\sup_{\boldsymbol{\mathscr{T}}\in S_R(\beta,\alpha)}|\tilde{f}_{\Gamma}(\boldsymbol{\mathscr{T}})-\mathbb{E}\tilde{f}_{\Gamma}(\boldsymbol{\mathscr{T}})| \leq 32\left(\frac{(4+\beta\tau)e^{\beta}-4}{(e^{\beta}-1)\beta\tau}\right)(R_M + 1)\sqrt{|\Omega|\sum_{n=1}^{N}I_n}, 
	\end{align}
	with probability exceeding $1-\frac{1}{|\Omega|}$.
\end{lemma}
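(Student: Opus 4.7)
The plan is to establish the uniform concentration bound by the classical empirical-process template: (i) a Poisson tail truncation controlling the magnitudes of the unbounded entries $x_{\textbf{i}}$, (ii) symmetrization of the centered process, (iii) Ledoux--Talagrand contraction to peel off the log-likelihood, and (iv) invocation of the Rademacher complexity bounds for bounded low-rank tensors of Ghadermarzy--Plan--Yilmaz \cite{navid1,navid2,navid3}, followed by a bounded-differences concentration to lift the expected supremum to a high-probability one. Because $\Omega$ is deterministic in this lemma and all entries are independent, the randomness can be handled entry-by-entry.

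First I would verify that the entries $x_{\textbf{i}}$ are all bounded by a mild value. Since $x_{\textbf{i}}\sim\mathrm{Poisson}(m_{\textbf{i}})$ with $m_{\textbf{i}}\leq\alpha$, the Chernoff tail $\Pr(x_{\textbf{i}}\geq L)\leq \exp(-L\log(L/(e\alpha)))$ at the choice $L:=\alpha(e^2-2)+3\log_2(|\Omega|)$ makes each per-entry failure probability at most $|\Omega|^{-2}$, so a union bound over the $|\Omega|$ indices produces an event $\mathcal{A}$ of probability at least $1-|\Omega|^{-1}$ on which $\max_{\textbf{i}\in\Omega}x_{\textbf{i}}\leq L$. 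This $L$ is exactly the factor $\alpha(e^2-2)+3\log_2(|\Omega|)$ appearing in the target bound.

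On $\mathcal{A}$, each summand $\varphi_{\textbf{i}}(t):=\mathbbm{1}[x_{\textbf{i}}>0]\bigl(x_{\textbf{i}}\log t-\log(e^t-1)\bigr)$ is a Lipschitz function of $t\in[\beta,\alpha]$ with slope bounded by $x_{\textbf{i}}/\beta+e^\beta/(e^\beta-1)\leq L/\beta+e^\beta/(e^\beta-1)$; algebraically $\tfrac{4}{\beta}+\tfrac{e^\beta}{e^\beta-1}=\tfrac{(4+\beta)e^{\beta}-4}{(e^{\beta}-1)\beta}$, so the combined Lipschitz constant factors cleanly as $L$ times the prefactor in (\ref{concentration10}), up to numerical constants absorbed into the leading $32$. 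I would then symmetrize the centered process,
\[
\mathbbm{E}\sup_{\boldsymbol{\mathscr{T}}\in S_R^+(\beta,\alpha)}\bigl|\tilde f_{\Gamma}(\boldsymbol{\mathscr{T}})-\mathbbm{E}\tilde f_{\Gamma}(\boldsymbol{\mathscr{T}})\bigr|\leq 2\,\mathbbm{E}\sup_{\boldsymbol{\mathscr{T}}}\Bigl|\sum_{\textbf{i}\in\Omega}\epsilon_{\textbf{i}}\varphi_{\textbf{i}}(t_{\textbf{i}})\Bigr|,
\]
and apply Ledoux--Talagrand contraction entry-by-entry to strip the $\varphi_{\textbf{i}}$'s, reducing the right-hand side to the pure Rademacher complexity $\mathbbm{E}\sup_{\boldsymbol{\mathscr{T}}\in S_R^+(\beta,\alpha)}|\sum_{\textbf{i}\in\Omega}\epsilon_{\textbf{i}}t_{\textbf{i}}|$. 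The bounded NNCP-rank bounds of \cite{navid1,navid2,navid3} yield the order $(\alpha+1)R\sqrt{|\Omega|\sum_n I_n}$ used in (\ref{concentration10}); applied to the larger set $S_R(\beta,\alpha)$ they pick up an extra factor $(R\sqrt{R})^{N-1}$, yielding (\ref{concentration20}).

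To promote this expected-supremum bound into a high-probability statement I would apply McDiarmid's bounded-differences inequality conditionally on $\mathcal{A}$: swapping any single truncated entry $x_{\textbf{i}}$ changes the supremum by at most an amount of order $L(\log\alpha+\alpha)$, and the resulting Gaussian-tail fluctuation at failure probability $|\Omega|^{-1}$ scales like $\sqrt{|\Omega|\log|\Omega|}$, which is already absorbed into the logarithmic factor in the bound. The main obstacle I anticipate is the careful bookkeeping of the data-dependent Lipschitz constant through the contraction step: the slope of $\varphi_{\textbf{i}}$ depends on $x_{\textbf{i}}$ itself, which is only uniformly controlled under $\mathcal{A}$, so contraction must be invoked conditionally and the failure probability budget inflated by $|\Omega|^{-1}$ to cover $\mathcal{A}^c$, exactly matching the stated $1-|\Omega|^{-1}$. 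Ensuring that the rank dependence in the NNCP Rademacher bound truly scales as $R$ rather than $(R\sqrt{R})^{N-1}$ requires the nonnegative refinement of the \cite{navid1,navid2,navid3} strategy, which is the second nontrivial ingredient one must extract from their construction.
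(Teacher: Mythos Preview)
Your outline is sound and would yield the lemma, but it follows a genuinely different concentration mechanism from the paper. The paper does \emph{not} truncate on a high-probability event and then invoke McDiarmid. Instead it bounds the $h$-th moment $\mathbbm{E}\sup_{\boldsymbol{\mathscr{T}}}|\tilde f_\Gamma(\boldsymbol{\mathscr{T}})-\mathbbm{E}\tilde f_\Gamma(\boldsymbol{\mathscr{T}})|^h$ for arbitrary $h\geq 1$, then applies Markov's inequality with $h=\log_2|\Omega|$ so that the failure probability is exactly $2^{-h}=|\Omega|^{-1}$. The factor $\alpha(e^2-2)+3\log_2|\Omega|$ arises not from a tail event $\mathcal{A}$ but from the moment estimate $\mathbbm{E}\big[\max_{\textbf{i}\in\Omega}x_{\textbf{i}}^h\big]$ (borrowed from \cite{PoissonMC}), which is pulled out multiplicatively before contraction. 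The paper also splits $\tilde f_\Gamma$ into two pieces and applies two explicit contractions, $\varphi(t)=\beta\log(t+1)$ for the $x_{\textbf{i}}\log t_{\textbf{i}}$ term and $\phi(t)=(1-e^{-\beta})\log(e^{t+\log 2}-1)$ for the $u_{\textbf{i}}\log(e^{t_{\textbf{i}}}-1)$ term, each centered so that it vanishes at the origin; the resulting Rademacher process is then bounded via the $M$-norm and its dual (Lemmas~\ref{lemma3} and~\ref{lemma4}), which is where the dimension condition $\min_n I_n\geq (N-1)\log_2^2(\max_n I_n)+1$ is actually used (it is precisely the hypothesis of Lemma~\ref{lemma4} at $h=\log_2|\Omega|$).

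What each approach buys: the moment/Markov route avoids the two delicate points you flag yourself---no need to condition on $\mathcal{A}$ (which would threaten independence and force you to work with the surrogate $\min(x_{\textbf{i}},L)$), and no separate bounded-differences step. Your route is more modular and, since you only need the Rademacher bound in expectation rather than in all moments, it would in principle relax the dimension hypothesis entering through Lemma~\ref{lemma4}. Two small technical points to watch in your version: contraction requires the maps to vanish at a common point, so you must center at some $t_0\in[\beta,\alpha]$ and control the leftover $\sum_{\textbf{i}}\epsilon_{\textbf{i}}\varphi_{\textbf{i}}(t_0)$ (choosing $t_0=1$ kills the $\log$-part and leaves an $O(\sqrt{|\Omega|})$ remainder); and in the CP case the factor $(R\sqrt R)^{N-1}$ \emph{replaces} $R$ rather than multiplies it.
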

The proof can be found in Section \ref{secmainlem0}. We may now proceed to the proof of Theorem \ref{mainthm0}.

\begin{proof}[Proof of Theorem \ref{mainthm0}]
	We will first show (\ref{bound10}). Afterward, establishing bound (\ref{bound20}) only requires a minor modification. 
 
 We begin by computing $\mathbb{E}\tilde{f}_{\Gamma}(\boldsymbol{\mathscr{T}})$ for $\boldsymbol{\mathscr{T}}\in\mathbb{R}_+^{I_1\times\cdots\times I_N}$, where the expectation is taken with respect to $\boldsymbol{\mathscr{X}}$ (recall that $\tilde{f}_{\Gamma}$ depends on $\boldsymbol{\mathscr{X}}$). Let $\boldsymbol{\mathscr{U}}$ be a random binary tensor with entries generated as
	\[
	u_{\textbf{i}} \coloneqq
	\left\{
	\begin{array}{ll}
		0  & \mbox{if } x_{\textbf{i}} = 0 \\
		1 & \mbox{if } x_{\textbf{i}} \neq 0,
	\end{array}
	\right.
	\]
	which allows us to write
	\begin{align*}
		\tilde{f}_{\Gamma}(\boldsymbol{\mathscr{T}}) &= \sum_{\textbf{i}\in\Omega}u_{\textbf{i}}\Big[x_{\textbf{i}}\log\left(t_{\textbf{i}}\right)-\log\left(\exp\left(t_{\textbf{i}}\right)-1\right)\Big] \\
		&= \sum_{\textbf{i}\in\Omega}x_{\textbf{i}}\log\left(t_{\textbf{i}}\right)-u_{\textbf{i}}\log\left(\exp\left(t_{\textbf{i}}\right)-1\right).
	\end{align*}
	Notice that $\mathbb{E}u_{\textbf{i}} = 1-\mathbb{P}(x_{\textbf{i}}=0) = 1-\exp\left(-m_{\textbf{i}}\right)$, and therefore
	\[
	\mathbb{E}\tilde{f}_{\Gamma}(\boldsymbol{\mathscr{T}}) = \sum_{\textbf{i}\in\Omega}m_{\textbf{i}}\log\left(t_{\textbf{i}}\right)-\left(1-\exp\left(-m_{\textbf{i}}\right)\right)\log\left(\exp\left(t_{\textbf{i}}\right)-1\right).
	\]

	With this in mind, we now show that the KL divergence of $\boldsymbol{\mathscr{M}}$ and $\widetilde{\boldsymbol{\mathscr{M}}}$ is bounded by the supremum of $|\tilde{f}_{\Gamma}(\boldsymbol{\mathscr{T}})-\mathbb{E}\tilde{f}_{\Gamma}(\boldsymbol{\mathscr{T}})|$ over all $\boldsymbol{\mathscr{T}}\in S_R^+(\beta,\alpha)$. Apply our assumptions on $\boldsymbol{\mathscr{M}}\in S_R^+(\beta,\alpha)$ and $\widetilde{\boldsymbol{\mathscr{M}}}\in S_{\tilde{R}}^+(\beta,\alpha)$ and insert terms that take the marginal expectation with respect to $\boldsymbol{\mathscr{X}}$ only to obtain
	\begin{align*}
		0 &\leq \tilde{f}_{\Gamma}(\widetilde{\boldsymbol{\mathscr{M}}})-\tilde{f}_{\Gamma}(\boldsymbol{\mathscr{M}})\\
		&= \mathbb{E}\Big[\tilde{f}_{\Gamma}(\widetilde{\boldsymbol{\mathscr{M}}})-\tilde{f}_{\Gamma}(\boldsymbol{\mathscr{M}})\Big] + \left(\tilde{f}_{\Gamma}(\widetilde{\boldsymbol{\mathscr{M}}})- \mathbb{E}\tilde{f}_{\Gamma}(\widetilde{\boldsymbol{\mathscr{M}}})\right)+\left(\mathbb{E}\tilde{f}_{\Gamma}(\boldsymbol{\mathscr{M}})-\tilde{f}_{\Gamma}(\boldsymbol{\mathscr{M}})\right)\\
		&\leq \mathbb{E}\Big[\tilde{f}_{\Gamma}(\widetilde{\boldsymbol{\mathscr{M}}})-\tilde{f}_{\Gamma}(\boldsymbol{\mathscr{M}})\Big] + \sup_{\boldsymbol{\mathscr{T}}\in S_{\tilde{R}}^+(\beta,\alpha)}\Bigg|\tilde{f}_{\Gamma}(\boldsymbol{\mathscr{T}})- \mathbb{E}\tilde{f}_{\Gamma}(\boldsymbol{\mathscr{T}})\Bigg| + \sup_{\boldsymbol{\mathscr{T}}\in S_R^+(\beta,\alpha)}\Bigg|\tilde{f}_{\Gamma}(\boldsymbol{\mathscr{T}})- \mathbb{E}\tilde{f}_{\Gamma}(\boldsymbol{\mathscr{T}})\Bigg|\\
		&= -\sum_{\textbf{i}\in\Omega}\Bigg[m_{\textbf{i}}\log\left(\frac{m_{\textbf{i}}}{\tilde{m}_{\textbf{i}}}\right)-\left(1-\exp\left(-m_{\textbf{i}}\right)\right)\left(\log\left(\exp\left(m_{\textbf{i}}\right)-1\right)-\log\left(\exp\left(\tilde{m}_{\textbf{i}}\right)-1\right)\right)\Bigg] \\ 
		&\phantom{\qquad} + \sup_{\boldsymbol{\mathscr{T}}\in S_{\tilde{R}}^+(\beta,\alpha)}\Bigg|\tilde{f}_{\Gamma}(\boldsymbol{\mathscr{T}})- \mathbb{E}\tilde{f}_{\Gamma}(\boldsymbol{\mathscr{T}})\Bigg| + \sup_{\boldsymbol{\mathscr{T}}\in S_R^+(\beta,\alpha)}\Bigg|\tilde{f}_{\Gamma}(\boldsymbol{\mathscr{T}})- \mathbb{E}\tilde{f}_{\Gamma}(\boldsymbol{\mathscr{T}})\Bigg|\\
		&= -\sum_{\textbf{i}\in\Omega}\left(1-\exp\left(-m_{\textbf{i}}\right)\right)D_0\left(m_{\textbf{i}}\|\tilde{m}_{\textbf{i}}\right) \\ 
		&\phantom{\qquad} + \sup_{\boldsymbol{\mathscr{T}}\in S_{\tilde{R}}^+(\beta,\alpha)}\Bigg|\tilde{f}_{\Gamma}(\boldsymbol{\mathscr{T}})- \mathbb{E}\tilde{f}_{\Gamma}(\boldsymbol{\mathscr{T}})\Bigg| + \sup_{\boldsymbol{\mathscr{T}}\in S_R^+(\beta,\alpha)}\Bigg|\tilde{f}_{\Gamma}(\boldsymbol{\mathscr{T}})- \mathbb{E}\tilde{f}_{\Gamma}(\boldsymbol{\mathscr{T}})\Bigg|.
	\end{align*}
	In the last line we used the definition of the KL divergence between two zero-truncated Poisson probability distributions (\ref{KLdef0}). 
 
 Since $m_{\textbf{i}},\tilde{m}_{\textbf{i}}\in[\beta,\alpha]$ for all $\textbf{i}\in [I_1]\times\cdots\times [I_N]$, using Lemma \ref{KLbound0}, this term can be lower bounded as
	\[
	\sum_{\textbf{i}\in\Omega}\left(1-\exp\left(-m_{\textbf{i}}\right)\right)D_0\left(m_{\textbf{i}}\|\tilde{m}_{\textbf{i}}\right) \geq \frac{e^{\beta}-\beta-1}{2\alpha(e^{\beta}-1)}\sum_{\textbf{i}\in\Omega}\left(m_{\textbf{i}}-\tilde{m}_{\textbf{i}}\right)^2,
	\]
which translates our bound on the KL divergence to the usual Euclidean distance between $\boldsymbol{\mathscr{M}}$ and $\widetilde{\boldsymbol{\mathscr{M}}}$ (on $\Omega$). Gathering our bounds and applying equation (\ref{concentration10}) from Lemma \ref{mainlemma0} for both $R$ and $\tilde{R}$, we have established that for any $\Omega$
	\begin{align}
		\label{eq10}
		&\frac{e^{\beta}-\beta-1}{2\alpha(e^{\beta}-1)}\sum_{\textbf{i}\in\Omega}\left(m_{\textbf{i}}-\tilde{m}_{\textbf{i}}\right)^2\\ 
		&\leq \sup_{\boldsymbol{\mathscr{T}}\in S_{\tilde{R}}^+(\beta,\alpha)}\Bigg|\tilde{f}_{\Gamma}(\boldsymbol{\mathscr{T}})- \mathbb{E}\tilde{f}_{\Gamma}(\boldsymbol{\mathscr{T}})\Bigg| + \sup_{\boldsymbol{\mathscr{T}}\in S_R^+(\beta,\alpha)}\Bigg|\tilde{f}_{\Gamma}(\boldsymbol{\mathscr{T}})- \mathbb{E}\tilde{f}_{\Gamma}(\boldsymbol{\mathscr{T}})\Bigg|\nonumber \\
		&\leq 32\left(\frac{(4+\beta\tau)e^{\beta}-4}{(e^{\beta}-1)\beta\tau}\right)(R_{M}^{+}+\tilde{R}_{M}^{+}+2)\sqrt{|\Omega|\sum_{n=1}^{N}I_n}, \nonumber
	\end{align}
	with probability exceeding $1-\frac{2}{|\Omega|}$ by a union bound, where $R_{M}^{+}$ and $\tilde{R}_{M}^{+}$ are defined as in Lemma \ref{mainlemma0}. 
 
 We now apply our uniform random assumption on $\Omega$ to extend the error above to all entries (i.e., not just in $\Omega$). Notice that in terms of the distribution on $\Omega$, the final term above is deterministic since its cardinality $|\Omega|$ is fixed for all outcomes. Therefore, given $\boldsymbol{\mathscr{X}}$ such that the bound holds, we have bounded the random variable $\sum_{\textbf{i}\in\Omega}\left(m_{\textbf{i}}-\tilde{m}_{\textbf{i}}\right)^2$. Since $\boldsymbol{\mathscr{X}}$ and $\Omega$ are independently generated, the upper bound holds for the expected value over $\Omega$ as well, i.e.,
	\begin{align*}
		&\mathbb{E}\sum_{\textbf{i}\in\Omega}\left(m_{\textbf{i}}-\tilde{m}_{\textbf{i}}\right)^2 \leq 64\left(\frac{(4+\beta\tau)e^{\beta}-4}{(e^{\beta}-\beta-1)\beta\tau}\right)\alpha(R_{M}^{+}+\tilde{R}_{M}^{+}+2)\sqrt{|\Omega|\sum_{n=1}^{N}I_n}.
	\end{align*}
	We finish the proof by computing the expected value above. Define 
	$K \coloneqq \binom{I_1I_2\cdots I_N}{|\Omega|}$, 
	which is the number of subsets of $[I_1]\times\cdots\times [I_N]$ of size $|\Omega|$ and let $\{\Omega_{k}\}_{k=1}^{K}$ list all such subsets. Then
	\begin{align*}
		&\mathbb{E}\sum_{\textbf{i}\in\Omega}\left(m_{\textbf{i}}-\tilde{m}_{\textbf{i}}\right)^2 = \frac{1}{K}\sum_{k=1}^{K}\sum_{\textbf{i}\in\Omega_{k}}\left(m_{\textbf{i}}-\tilde{m}_{\textbf{i}}\right)^2\\ 
		&= \frac{1}{K}\sum_{\textbf{i}\in[I_1]\times\cdots\times [I_N]}\binom{I_1\cdots I_N-1}{|\Omega|-1}\left(m_{\textbf{i}}-\tilde{m}_{\textbf{i}}\right)^2,
	\end{align*}
	where the last equality holds since for any tensor entry $\textbf{i}\in [I_1]\times\cdots\times [I_N]$ there will be a total of $\binom{I_1\cdots I_N-1}{|\Omega|-1}$ subsets of size $|\Omega|$ that contain $\textbf{i}$. Therefore, in the sum over $k$ each term $\left(m_{\textbf{i}}-\tilde{m}_{\textbf{i}}\right)^2$ will appear exactly $\binom{I_1\cdots I_N-1}{|\Omega|-1}$ times. The proof ends by noticing that
	\[
	\frac{1}{K}\binom{I_1\cdots I_N-1}{|\Omega|-1} = \binom{I_1\cdots I_N}{|\Omega|}^{-1}\binom{I_1\cdots I_N-1}{|\Omega|-1} = \frac{|\Omega|}{I_1\cdots I_N}
	\]
	and
	\[
	\frac{|\Omega|}{I_1\cdots I_N}\sum_{\textbf{i}\in[I_1]\times\cdots\times [I_N]}\left(m_{\textbf{i}}-\tilde{m}_{\textbf{i}}\right)^2 = \frac{|\Omega|\|\boldsymbol{\mathscr{M}}-\widetilde{\boldsymbol{\mathscr{M}}}\|^2}{I_1\cdots I_N} \geq \frac{|\Omega|\beta^2\|\boldsymbol{\mathscr{M}}-\widetilde{\boldsymbol{\mathscr{M}}}\|^2}{\|\boldsymbol{\mathscr{M}}\|^2}.
	\]
Finally, by equation \eqref{r+} in Lemma \ref{lemma3}, we have $R_{M}^{+}\leq\alpha R$ and $\tilde{R}_{M}^{+}\leq\alpha \tilde{R}$ which finishes the proof.
 
	The proof of (\ref{bound20}) is analogous with respect to $S_R(\beta,\alpha)$ and $S_{\tilde{R}}(\beta,\alpha)$, where $R_{M}^{+}$ and $\tilde{R}_{M}^{+}$ are replaced with $R_{M}$ and $\tilde{R}_{M}$ respectively in the proof above. This replaces the term $\alpha\tilde{R}+\alpha R$ in (\ref{bound10}) with $\alpha(\tilde{R}\sqrt{\tilde{R}})^{N-1} + \alpha(R\sqrt{R})^{N-1}$ using equation \eqref{r} in Lemma \ref{lemma3}. The remaining terms are unchanged and the result follows.
\end{proof}

\subsection{Poisson Tensor Completion Proof}
\label{PTCresult}

The proof of Theorem \ref{mainthm} is very similar to the proof of Theorem \ref{mainthm0}. For brevity, we will refer the reader to the proof of Theorem \ref{mainthm0} when similar steps are applied. The main difference will be to consider instead the KL divergence between Poisson probability distributions, defined as
\begin{equation}
	\label{KLdef}
	D(p\| q) \coloneqq p\log\left(\frac{p}{q}\right) - (p-q).
\end{equation}
The first lemma establishes a lower bound for the KL divergence.
\begin{lemma}
	\label{KLbound}
	For any $p,q\in (0,\alpha]$, we have
	\[
	D(p\| q) \geq \frac{(p-q)^2}{2\alpha}.
	\]
\end{lemma}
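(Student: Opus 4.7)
The plan is to fix $q \in (0,\alpha]$, parametrize $p = q + t$ (with $t > -q$), and analyze the single-variable function
\[
h(t) \coloneqq D(q+t\,\|\,q) = (q+t)\log\!\left(1+\tfrac{t}{q}\right) - t
\]
via its second-order Taylor expansion about $t=0$. A direct computation gives $h(0)=0$,
\[
h'(t) = \log\!\left(1+\tfrac{t}{q}\right) + \tfrac{q+t}{q+t} - 1 = \log\!\left(1+\tfrac{t}{q}\right), \qquad h'(0)=0,
\]
and $h''(t) = \tfrac{1}{q+t} = \tfrac{1}{p}$. Thus the KL divergence has both a zero value and a zero first derivative at the point where $p=q$, so the behaviour for $p \ne q$ is governed entirely by the curvature $h''$.

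From here I would apply Taylor's theorem with integral remainder, which yields
\[
h(t) = \int_0^t (t-s)\, h''(s)\, ds
\]
(interpreted with the usual sign convention when $t<0$). On the interval of integration, $q+s$ lies between $q$ and $p$, hence $q+s \le \max(q,p) \le \alpha$, so $h''(s) = 1/(q+s) \ge 1/\alpha$. Moreover $(t-s)$ does not change sign on that interval, so the product $(t-s) h''(s)$ retains a consistent sign and the integral is bounded below by
\[
\frac{1}{\alpha}\int_0^t (t-s)\, ds = \frac{t^2}{2\alpha}
\]
(for $t>0$; the symmetric computation $\int_t^0 (s-t)\,ds = t^2/2$ handles $t<0$). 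Substituting $t = p-q$ gives the claimed bound $D(p\|q) \ge (p-q)^2/(2\alpha)$.

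The argument is essentially a one-line second-order Taylor estimate, so there is no substantive obstacle; the only minor point requiring a touch of care is confirming that $h''(s)$ admits the lower bound $1/\alpha$ on the entire segment between $q$ and $p$ (which follows from $p,q \in (0,\alpha]$), and that the sign of $(t-s)$ on that segment matches the sign of $t$ so that the integral inequality retains the correct direction in both the $p>q$ and $p<q$ cases.
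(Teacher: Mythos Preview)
Your proof is correct and follows essentially the same approach as the paper: both use a second-order Taylor expansion of the KL divergence about the point $p=q$, then bound the second derivative (which equals $1/p$ at the relevant point) below by $1/\alpha$ using $p,q\le\alpha$. The only cosmetic difference is that the paper invokes the Lagrange (mean-value) form of the remainder via the identity $(1+x)\log(1+x)=x+\tfrac{x^2}{2(1+x^*)}$, whereas you use the integral form; the subsequent case split ($p\ge q$ versus $p<q$) and the resulting bound are identical.
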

The proof of this lemma is postponed until Section \ref{KLlemma}. The second lemma is an analogous version of Lemma \ref{mainlemma0} used for the zero-truncated result.

\begin{lemma}
	\label{mainlemma}
	Let $\Omega\subseteq [I_1]\times\cdots\times[I_N]$ be any subset of entries, $\boldsymbol{\mathscr{X}}\in \mathbb{Z}_+^{I_1\times\cdots \times I_N}$ be generated as in Theorem \ref{mainthm}, and the function $f_{\Omega}$ (which depends on $\boldsymbol{\mathscr{X}}$) be defined as in (\ref{likelihood}). Define
 \[
 \tau \coloneqq \frac{1}{\alpha(e^2-2) + 3\log_2(|\Omega|)}
 \]
 and, given $R\in\mathbb{N}$,
 \[
 R_M^+ \coloneqq \sup_{\boldsymbol{\mathscr{T}}\in S_R^+(\beta,\alpha)}\|\boldsymbol{\mathscr{T}}\|_M.
 \]
 
 Then
	\begin{align}
		\label{concentration1}
		\sup_{\boldsymbol{\mathscr{T}}\in S_R^+(\beta,\alpha)}|f_{\Omega}(\boldsymbol{\mathscr{T}})-\mathbb{E}f_{\Omega}(\boldsymbol{\mathscr{T}})| \leq \frac{64(R_{M}^{+}+1)}{\beta\tau}\sqrt{|\Omega|\sum_{n=1}^{N}I_n},
	\end{align}
	with probability exceeding $1-\frac{1}{|\Omega|}$, where the probability and expectation are both over the draw of $\boldsymbol{\mathscr{X}}$. Furthermore, under the same assumptions with $R_M \coloneqq \sup_{\boldsymbol{\mathscr{T}}\in S_R(\beta,\alpha)}\|\boldsymbol{\mathscr{T}}\|_M$ we have
	\begin{align}
		\label{concentration2}
		\sup_{\boldsymbol{\mathscr{T}}\in S_R(\beta,\alpha)}|f_{\Omega}(\boldsymbol{\mathscr{T}})-\mathbb{E}f_{\Omega}(\boldsymbol{\mathscr{T}})| \leq \frac{64(R_M+1)}{\beta\tau}\sqrt{|\Omega|\sum_{n=1}^{N}I_n},
	\end{align}
	with probability exceeding $1-\frac{1}{|\Omega|}$.
\end{lemma}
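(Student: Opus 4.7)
The plan is to control the centered stochastic process
\[
f_{\Omega}(\boldsymbol{\mathscr{T}}) - \mathbbm{E}f_{\Omega}(\boldsymbol{\mathscr{T}}) = \sum_{\textbf{i}\in\Omega}(x_{\textbf{i}} - m_{\textbf{i}})\log(t_{\textbf{i}})
\]
uniformly over the tensor class $S$ via a three-step scheme: truncation of the Poisson variables, symmetrization paired with a contraction that strips the Lipschitz logarithm, and chaining on the low-rank tensor manifold. First I would apply Poisson Chernoff tail bounds together with a union bound over the $|\Omega|$ independent summands to show that, with probability at least $1 - 1/(2|\Omega|)$, the uniform truncation $\max_{\textbf{i}\in\Omega} x_{\textbf{i}} \leq K$ holds with $K \coloneqq \alpha(e^2-2) + 3\log_2(|\Omega|)$; the specific constants come out of optimizing the Chernoff exponent against the Poisson intensity upper bound $\alpha$. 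On this event each summand is bounded in magnitude by roughly $K \cdot \max(|\log\alpha|,|\log\beta|)$, so that the truncated centered process is a bounded empirical process.

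Second, I would apply the standard symmetrization inequality and then the Ledoux--Talagrand contraction principle with the $(1/\beta)$-Lipschitz map $\log$ on $[\beta,\alpha]$, reducing the problem to bounding the purely linear Rademacher process
\[
\mathbbm{E}_{\epsilon}\sup_{\boldsymbol{\mathscr{T}}\in S}\Big|\sum_{\textbf{i}\in\Omega}\epsilon_{\textbf{i}}\, t_{\textbf{i}}\Big|,
\]
at the price of a multiplicative factor of order $K/\beta$. The $(\alpha+1)$ factor that appears in the final statement should emerge from combining the truncation bound on $|x_{\textbf{i}}-m_{\textbf{i}}|$ with the variance $m_{\textbf{i}}\leq\alpha$ when converting the expected supremum into a high-probability deviation via Talagrand's (or Bousquet's) inequality for bounded empirical processes, which contributes an additive correction of order $\sqrt{K|\Omega|\alpha \log|\Omega|}$ that is absorbed by the $3\log_2(|\Omega|)$ term in the prefactor.

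The heart of the argument is bounding the Rademacher complexity above via Dudley's entropy integral, using metric entropy estimates for the respective tensor classes. For the general CP class $S_R(\beta,\alpha)$, I would directly invoke the chaining machinery of \cite{navid1,navid2,navid3}, whose covering number estimates for rank-$R$ CP tensors produce the pessimistic $(R\sqrt{R})^{N-1}\sqrt{|\Omega|\sum_n I_n}$ scaling appearing in \eqref{concentration2}. For the nonnegative class $S_R^{+}(\beta,\alpha)$, the nonnegativity of the factors combined with the entrywise upper bound $t_{\textbf{i}}\leq\alpha$ forces each rank-one component to live in a bounded nonnegative cone whose $\varepsilon$-covering entropy scales only linearly in $\sum_n I_n$; aggregating $R$ such components then yields a covering bound whose entropy grows linearly in $R$, producing the sharper $R\sqrt{|\Omega|\sum_n I_n}$ Rademacher complexity of \eqref{concentration1}. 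A final union bound between the truncation event and the concentration event delivers the claimed $1-1/|\Omega|$ overall probability.

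The main obstacle is the NNCP Rademacher complexity bound: establishing the linear-in-$R$ dependence rather than the $R^{N/2}$ scaling of the signed case requires a covering argument specifically tailored to the nonnegative rank-one set, exploiting the fact that bounded, entrywise nonnegative rank-one tensors admit a parametrization with significantly smaller metric entropy than their signed counterparts (essentially, each factor can be normalized into a bounded simplex-like set). Verifying that this entropy improvement propagates cleanly through Dudley's integral, with the correct absolute constants, will be the most delicate step; once this is in place, the remaining pieces (Poisson truncation, symmetrization, contraction, and Talagrand concentration) are standard and plug together to give both bounds simultaneously.
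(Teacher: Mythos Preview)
Your proposal shares the symmetrization-plus-contraction backbone with the paper but diverges on two key mechanisms, and in particular misidentifies where the linear-in-$R$ improvement for the nonnegative case comes from.

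First, the paper does not use a truncation event followed by Talagrand/Bousquet concentration. Instead it bounds the $h$-th moment of the supremum directly for arbitrary $h\geq 1$: after symmetrizing and applying the contraction $\varphi(t)=\beta\log(t+1)$, it factors out $\mathbbm{E}[\max_{\textbf{i}\in\Omega}x_{\textbf{i}}^h]$ (bounded via the Poisson tail estimate in \eqref{MaxPoisson}) and then applies Markov's inequality with the specific choice $h=\log_2|\Omega|$. Your truncate-then-concentrate scheme is a valid alternative, but the moment method is what produces the clean $\alpha(e^2-2)+3\log_2|\Omega|$ prefactor without splitting into two events.

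Second, and more importantly, the paper does not touch Dudley's integral or covering numbers at all. After contraction, the linear Rademacher process is handled by the atomic $M$-norm duality:
\[
\sup_{\boldsymbol{\mathscr{T}}\in S}\Big|\sum_{\textbf{i}\in\Omega}v_{\textbf{i}}(t_{\textbf{i}}-1)\Big| \leq \sup_{\boldsymbol{\mathscr{T}}\in S}\|\boldsymbol{\mathscr{T}}-\mathbbm{1}\|_M \cdot \|\boldsymbol{\mathscr{V}}\circ\Delta_{\Omega}\|_M^*.
\]
The dual norm $\|\cdot\|_M^*$ is a maximum over the \emph{finite} set $\mathcal{T}_{\pm}$ of sign tensors (cardinality $\leq 2^{\sum_n I_n}$), so a single Hoeffding bound plus a union bound over $\mathcal{T}_{\pm}$ gives $\mathbbm{E}(\|\boldsymbol{\mathscr{V}}\circ\Delta_{\Omega}\|_M^*)^h\leq 2(2\sqrt{|\Omega|\sum_n I_n})^h$ (Lemma~\ref{lemma4}). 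The distinction between \eqref{concentration1} and \eqref{concentration2} then comes entirely from the primal $M$-norm bound (Lemma~\ref{lemma3}): for general CP rank one has $\|\boldsymbol{\mathscr{T}}\|_M\leq\alpha(R\sqrt{R})^{N-1}$, while for NNCP the nonnegativity of each rank-one summand $\boldsymbol{\mathscr{T}}_r$ forces $\|\boldsymbol{\mathscr{T}}_r\|_{\infty}\leq\|\boldsymbol{\mathscr{T}}\|_{\infty}\leq\alpha$, and a plain triangle inequality gives $\|\boldsymbol{\mathscr{T}}\|_M\leq\sum_r\|\boldsymbol{\mathscr{T}}_r\|_M\leq\alpha R$. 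So the ``main obstacle'' you flag---a tailored covering argument for the nonnegative rank-one simplex---is bypassed by a two-line observation; the references \cite{navid1,navid2,navid3} are invoked for the $M$-norm framework, not for chaining. Your entropy route could plausibly be pushed through, but it is substantially heavier and the delicate step you anticipate is unnecessary.
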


See Section \ref{secmainlem} for the proof. We may now proceed to the proof of Theorem \ref{mainthm}.

\begin{proof}[Proof of Theorem \ref{mainthm}]
	We will first show (\ref{bound1}). Afterward, establishing (\ref{bound2}) only requires a minor modification. We begin by noting that for any $\boldsymbol{\mathscr{T}}\in\mathbb{R}_+^{I_1\times\cdots\times I_N}$
	\[
	\mathbb{E}f_{\Omega}(\boldsymbol{\mathscr{T}}) = \mathbb{E}\sum_{\textbf{i}\in\Omega}x_{\textbf{i}}\log\left(t_{\textbf{i}}\right)-t_{\textbf{i}} = \sum_{\textbf{i}\in\Omega}m_{\textbf{i}}\log\left(t_{\textbf{i}}\right)-t_{\textbf{i}},
	\]
	where the expectation is taken with respect to $\boldsymbol{\mathscr{X}}$. Applying our assumptions on $\boldsymbol{\mathscr{M}}\in S_R^+(\beta,\alpha)$ and $\widehat{\boldsymbol{\mathscr{M}}}\in S_{\hat{R}}^+(\beta,\alpha)$, we insert terms that take the marginal expectation with respect to $\boldsymbol{\mathscr{X}}$ only and obtain
	\begin{align*}
		0 &\leq f_{\Omega}(\widehat{\boldsymbol{\mathscr{M}}})-f_{\Omega}(\boldsymbol{\mathscr{M}})
		= \mathbb{E}\Big[f_{\Omega}(\widehat{\boldsymbol{\mathscr{M}}})-f_{\Omega}(\boldsymbol{\mathscr{M}})\Big] + \left(f_{\Omega}(\widehat{\boldsymbol{\mathscr{M}}})- \mathbb{E}f_{\Omega}(\widehat{\boldsymbol{\mathscr{M}}})\right)+\left(\mathbb{E}f_{\Omega}(\boldsymbol{\mathscr{M}})-f_{\Omega}(\boldsymbol{\mathscr{M}})\right)\\
		&\leq \mathbb{E}\Big[f_{\Omega}(\widehat{\boldsymbol{\mathscr{M}}})-f_{\Omega}(\boldsymbol{\mathscr{M}})\Big] + \sup_{\boldsymbol{\mathscr{T}}\in S_{\hat{R}}^+(\beta,\alpha)}\Bigg|f_{\Omega}(\boldsymbol{\mathscr{T}})- \mathbb{E}f_{\Omega}(\boldsymbol{\mathscr{T}})\Bigg| + \sup_{\boldsymbol{\mathscr{T}}\in S_R^+(\beta,\alpha)}\Bigg|f_{\Omega}(\boldsymbol{\mathscr{T}})- \mathbb{E}f_{\Omega}(\boldsymbol{\mathscr{T}})\Bigg|\\
		&= -\sum_{\textbf{i}\in\Omega}\Bigg[m_{\textbf{i}}\log\left(\frac{m_{\textbf{i}}}{\hat{m}_{\textbf{i}}}\right)-\left(m_{\textbf{i}}-\hat{m}_{\textbf{i}}\right)\Bigg] \\
		&\phantom{\qquad} + \sup_{\boldsymbol{\mathscr{T}}\in S_{\hat{R}}^+(\beta,\alpha)}\Bigg|f_{\Omega}(\boldsymbol{\mathscr{T}})- \mathbb{E}f_{\Omega}(\boldsymbol{\mathscr{T}})\Bigg| + \sup_{\boldsymbol{\mathscr{T}}\in S_R^+(\beta,\alpha)}\Bigg|f_{\Omega}(\boldsymbol{\mathscr{T}})- \mathbb{E}f_{\Omega}(\boldsymbol{\mathscr{T}})\Bigg|\\
		&= -\sum_{\textbf{i}\in\Omega}D\left(m_{\textbf{i}}\|\hat{m}_{\textbf{i}}\right) + \sup_{\boldsymbol{\mathscr{T}}\in S_{\hat{R}}^+(\beta,\alpha)}\Bigg|f_{\Omega}(\boldsymbol{\mathscr{T}})- \mathbb{E}f_{\Omega}(\boldsymbol{\mathscr{T}})\Bigg| + \sup_{\boldsymbol{\mathscr{T}}\in S_R^+(\beta,\alpha)}\Bigg|F_{\Omega}(\boldsymbol{\mathscr{T}})- \mathbb{E}f_{\Omega}(\boldsymbol{\mathscr{T}})\Bigg|.
	\end{align*}
	In the last line we used the definition of the KL divergence between two Poisson probability distributions (\ref{KLdef}). Since $m_{\textbf{i}},\hat{m}_{\textbf{i}}\in[\beta,\alpha]$ for all $\textbf{i}\in [I_1]\times\cdots\times [I_N]$, using Lemma \ref{KLbound}, this term can be lower bounded as
	\[
	\sum_{\textbf{i}\in\Omega}D\left(m_{\textbf{i}}\|\hat{m}_{\textbf{i}}\right) \geq \frac{1}{2\alpha}\sum_{\textbf{i}\in\Omega}\left(m_{\textbf{i}}-\hat{m}_{\textbf{i}}\right)^2.
	\]
	Gathering our bounds and applying equation (\ref{concentration1}) from Lemma \ref{mainlemma} for both $R$ and $\hat{R}$, we have established that for any $\Omega$
	\begin{align}
		\label{eq1}
		\frac{1}{2\alpha}\sum_{\textbf{i}\in\Omega}\left(m_{\textbf{i}}-\hat{m}_{\textbf{i}}\right)^2
		&\leq \sup_{\boldsymbol{\mathscr{T}}\in S_{\hat{R}}^+(\beta,\alpha)}\Bigg|f_{\Omega}(\boldsymbol{\mathscr{T}})- \mathbb{E}f_{\Omega}(\boldsymbol{\mathscr{T}})\Bigg| + \sup_{\boldsymbol{\mathscr{T}}\in S_R^+(\beta,\alpha)}\Bigg|f_{\Omega}(\boldsymbol{\mathscr{T}})- \mathbb{E}f_{\Omega}(\boldsymbol{\mathscr{T}})\Bigg| \\
		&\leq \frac{64(R_{M}^{+}+\hat{R}_{M}^{+}+2)}{\beta\tau}\sqrt{|\Omega|\sum_{n=1}^{N}I_n},\nonumber
	\end{align}
	with probability exceeding $1-\frac{2}{|\Omega|}$ by a union bound, where $R_{M}^{+}$ and $\hat{R}_{M}^{+}$ are defined as in Lemma \ref{mainlemma}. We now apply our assumption on $\Omega$.
	
	Notice that in terms of the distribution on $\Omega$, the final term above is deterministic since the cardinality $|\Omega|$ is fixed for all outcomes. Therefore, given $\boldsymbol{\mathscr{X}}$ such that the bound holds, we have bounded the random variable $\sum_{\textbf{i}\in\Omega}\left(m_{\textbf{i}}-\hat{m}_{\textbf{i}}\right)^2$. Since $\boldsymbol{\mathscr{X}}$ and $\Omega$ are independently generated, the upper bound holds for the expected value over $\Omega$ as well, i.e.,
	\begin{align*}
		\mathbb{E}\sum_{\textbf{i}\in\Omega}\left(m_{\textbf{i}}-\hat{m}_{\textbf{i}}\right)^2 &= \frac{|\Omega|\|\boldsymbol{\mathscr{M}}-\widehat{\boldsymbol{\mathscr{M}}}\|^2}{I_1\cdots I_N}\leq \frac{128\alpha(R_{M}^{+}+\hat{R}_{M}^{+}+2)}{\beta\tau}\sqrt{|\Omega|\sum_{n=1}^{N}I_n}.
	\end{align*}
	The proof ends by noting that
	\[
	\frac{|\Omega|\|\boldsymbol{\mathscr{M}}-\widehat{\boldsymbol{\mathscr{M}}}\|}{I_1\cdots I_N} \geq \frac{|\Omega|\beta^2\|\boldsymbol{\mathscr{M}}-\widehat{\boldsymbol{\mathscr{M}}}\|^2}{\|\boldsymbol{\mathscr{M}}\|^2}
	\]
 and using Lemma \ref{lemma3} to bound $R_{M}^{+}\leq\alpha R$ and $\hat{R}_{M}^{+}\leq\alpha \hat{R}$.
 
	The proof of (\ref{bound2}) is analogous with respect to $S_R(\beta,\alpha)$ and $S_{\hat{R}}(\beta,\alpha)$, where we replace $R_{M}^{+}$ and $\hat{R}_{M}^{+}$ with $R_{M}$ and $\hat{R}_{M}$ in the proof above. This replaces the term $\alpha\hat{R}+\alpha R$ in (\ref{bound1}) with $\alpha(\hat{R}\sqrt{\hat{R}})^{N-1} + \alpha(R\sqrt{R})^{N-1}$. The remaining terms are unchanged and the result follows.
\end{proof}

\section{Conclusions}
\label{sec:conclusions}
We proposed a novel statistical inference method for zero-congested multiway count data that does not require the user to distinguish between true and false zero counts. This work debuts the approach on the multi-parameter Poisson model, where we condition this distribution on the positive integers in order to appropriately ignore zero values and treat the respective array entries as unobserved. Under a low-rank parametric model, our approach applies zero-truncated Poisson regression only on the non-zeros. The low-dimensional parametric assumption allows us to achieve Poisson estimation on the entire volume in an underdetermined setting that only considers true counts. We show that the approach is efficient at approximating the mean values when the level of zero-inflation is not excessive relative to the parametric complexity. For an $N$-way parametric tensor $\boldsymbol{\mathscr{M}}\in\mathbb{R}^{I\times \cdots\times I}$ with nonnegative CP rank $R$ that generates Poisson observations, our main result states that $\sim IR^2\log_2^2(I)$ non-zeros provide an accurate estimate via our methodology. 

Our numerical experiments explore the implementation of the approach via maximum likelihood and its effectiveness by comparing it to ideal ``oracle'' scenario, in which the locations of false zeros are known. The presented cases show that in many situations our approach is comparable to the oracle while allowing for practical implementation. We explore via numerical experiments the limitations of the method, including its sensitivity to the bounds $\beta$ and $\alpha$ on the Poisson parameters. The experiments reveal that when $\beta$ is not small, say  $\beta \geq .1$, zero truncating the Poisson distribution is an excellent approximation. On the other hand, when the parametric values are small (e.g., $\beta\leq .01$ and $\alpha\leq 1$), the efficiency of our approach is degraded since such situations with sparse data generate an overwhelming amount of true zeros that are neglected.

Several extensions remain to be explored as future work. The current work focuses on the multi-parameter Poisson distribution. However, the paradigm can be applied to any count data model, such as the negative binomial distribution, or even continuous counterparts for other applications (e.g., the normal distribution). Furthermore, we only consider the case of congestion by false zeros since it is the most common type of corruption in the literature of count data. As an extension, any range of integers can be truncated to allow for other types of untrusted count values in data. In the case of continuous models, distributions can be conditioned to any interval of trusted observations. These types of generalizations, paired with more ample theoretical results, can help launch our proposed statistical inference paradigm to handle severe corruption in a wide range of applications that involve multi-dimensional data processing.

\appendix

\section{Proofs of Lemmas}
\label{sec:proofslemmas}
This appendix is dedicated to the proofs of the lemmas required for the results of Section \ref{sec:proofs}. Section \ref{KLdiv} focuses on the lower bounds for the KL-divergences, while Section \ref{secmainlem} proves the main lemmas to establish our results. Finally Section \ref{proofreqlem} proves additional lemmas required for the proofs in Section \ref{secmainlem}.

\subsection{Lower Bounds for KL Divergence}
\label{KLdiv}

This section proves Lemmas \ref{KLbound} and \ref{KLbound0}. We will first produce the lower bound for the KL-divergence between two Poisson probability distributions, this in turn will be used to obtain the lower bound for the divergence between two zero-truncated Poisson distributions.

\subsubsection{Proof of Lemma \ref{KLbound}}
\label{KLlemma}

Using the work in \cite{prob}, the authors in \cite{PoissonMC} produce a lower bound for the KL divergence between two Poisson probability distributions. In this work, using the work in \cite{prob} we are able to obtain a tighter bound.

\begin{proof}[Proof of Lemma \ref{KLbound}]
	In \cite{prob}, the author establishes in equation 11 of Chapter 3 that
	\[
	(1+x)\log(1+x) = x + \frac{x^2}{2(1+x^*)}
	\]
	holds for $x>-1$ and some $x^*$ between 0 and $x$. With the choice $x = (p-q)/q > -1$, if we multiply through by $q$ we obtain
	\[
	p\log\left(\frac{p}{q}\right)-(p-q) = \frac{(p-q)^2}{2q(1+x^*)}.
	\]
	
	We now lower bound the right hand side by upper bounding the term $1+x^*$, which we note is always strictly positive. Consider the two possible cases $p\geq q$ and $p< q$. When $p\geq q$, we have $x\geq 0$ so that $x^*\in [0,(p-q)/q]$ and therefore
	\[
	1+x^*\leq 1+\frac{p-q}{q}.
	\]
	Otherwise, if $p< q$ then $x^*\in [(p-q)/q,0)$ and
	\[
	1+x^*<1.
	\]
	Using both of these upper bounds, our assumption $p,q\leq \alpha$ gives that
	\[
	\frac{1}{q(1+x^*)} \geq \frac{1}{q}\min\Bigg\{1,\frac{1}{1+\frac{p-q}{q}}\Bigg\} = \min\Bigg\{\frac{1}{q},\frac{1}{p}\Bigg\} \geq \frac{1}{\alpha}
	\]
	and therefore
	\[
	\frac{(p-q)^2}{2q(1+x^*)} \geq \frac{(p-q)^2}{2\alpha}.
	\]
	In terms of the KL divergence between two Poisson probability distributions (\ref{KLdef}), we have shown that for $p,q\in (0,\alpha]$
	\[
	D\left(p\|q\right) \geq \frac{(p-q)^2}{2\alpha}.
	\]
\end{proof}

\subsubsection{Proof of Lemma \ref{KLbound0}}
\label{KL0}
We now prove Lemma \ref{KLbound0}, which applies the lower bound established in Lemma \ref{KLbound}.

\begin{proof}[Proof of Lemma \ref{KLbound0}]
	Using basic calculus, we will show that for some term $c_{\beta}>0$ depending only on $\beta$, we have 
	\[
	(1-e^{-p})D_0(p\|q)\geq c_{\beta}D(p\|q)
	\]
	for all $p,q\geq\beta>0$ where $D(p\|q)$ is defined in (\ref{KLdef}). Using Lemma \ref{KLbound} will then establish the claim.
	
	To this end, let $c_{\beta}>0$ be an arbitrary constant (independent of $p$ and $q$) and consider $p\geq\beta$ fixed, so that we only vary $q$ in $(1-e^{-p})D_0(p\|q)$ and $c_{\beta}D(p\|q)$. Notice that these univariate functions intersect at $q=p$ since $D_0(p\|p)=0=D(p\|p)$. We compute $c_{\beta}$ so that $(1-e^{-p})D_0(p\|q)$ has a greater rate of change than $c_{\beta}D(p\|q)$ for $q> p$. Taking partial derivatives we obtain
	\[
	\partial q \Big[(1-e^{-p})D_0(p\|q)\Big] = \frac{e^q(e^p - 1)}{e^p(e^q - 1)} - \frac{p}{q}
	\]
	and
	\[
	\partial q \Big[c_{\beta}D(p\|q)\Big] = c_{\beta}\left(1-\frac{p}{q}\right).
	\]
	Notice that for $q> p$ we have $\partial q (1-e^{-p})D_0(p\|q)>0$ and $(1-p/q)>0$, and we therefore achieve our greater rate of change if
	\[
	c_{\beta} \leq \frac{\frac{e^q(e^p - 1)}{e^p(e^q - 1)} - \frac{p}{q}}{\left(1-\frac{p}{q}\right)} = \frac{qe^q(e^{p}-1)}{(q-p)e^p(e^q - 1)} - \frac{p}{q-p} \coloneqq f(q)
	\]
	holds for all $p\geq\beta$ and $q>p$.
	
	Examining $f(q)$, we see that $f^{\prime}(q)>0$ for all $q>p$ and therefore $f(q)\geq f(p)$ where
	\[
	f(p) = \lim_{q\rightarrow p}\left(\frac{qe^q(e^{p}-1)}{(q-p)e^p(e^q - 1)} - \frac{p}{q-p}\right) = \frac{e^p-p-1}{e^p-1}.
	\]
	This allows us to choose
	\[
	c_{\beta}\coloneqq \frac{e^{\beta}-\beta-1}{e^{\beta}-1} \leq \frac{e^p-p-1}{e^p-1},
	\]
	where the inequality holds for all $p\geq\beta$ since $f(p)$ is a monotonically increasing function with respect to $p$.
	
	We have chosen $c_{\beta}>0$ such that $(1-e^{-p})D_0(p\|q)$ and $c_{\beta}D(p\|q)$ agree at $q=p$ and $\partial_q(1-e^{-p})D_0(p\|q)\geq\partial_q c_{\beta}D(p\|q)$ when $q>p$. Therefore $(1-e^{-p})D_0(p\|q)\geq c_{\beta}D(p\|q)$ when $q>p$. The same argument can be applied when $q< p$ (but now with negative rates of change), where the same choice for $c_{\beta}$ will give $(1-e^{-p})D_0(p\|q)\geq c_{\beta}D(p\|q)$ when $p>q$. Using Lemma \ref{KLbound}, we have shown for all $p,q\in[\beta,\alpha]$
	\[
	(1-e^{-p})D_0(p\|q)\geq c_{\beta}D(p\|q)\geq \frac{c_{\beta}(p-q)^2}{2\alpha}.
	\]
\end{proof}

\subsection{Proof of the Main Lemmas}
\label{secmainlem}
The main bulk of our work will be to prove Lemmas \ref{mainlemma0} and \ref{mainlemma}, the main components in the proofs of Theorems \ref{mainthm0} and \ref{mainthm}. We note that both proofs are very similar, requiring only different terms but applying the same proof strategy. The proof of Lemma \ref{mainlemma0} requires more terms to be bounded, aside from analogous terms found in the proof of Lemma \ref{mainlemma}. For this reason we will focus on a detailed proof of Lemma \ref{mainlemma0} and as a consequence the proof of Lemma \ref{mainlemma} can be achieved in a condensed manner.

To this end, we collect several additional lemmas that will be used in both proofs.

\subsubsection{Required Lemmas}
\label{reqlem}

We begin by gathering some standard tools from probability in Banach spaces \cite{ledoux}. The following is the symmetrization inequality in diluted form, simplified to be directly applicable to our context (see \cite{ledoux} for the full result).

\begin{lemma}[Symmetrization Inequality, Lemma 6.3 in \cite{ledoux}] 
	\label{lemma2a}
	Let $F:\mathbb{R}_+\mapsto\mathbb{R}_+$ be convex. Let $\{y_{\ell}\}_{\ell=1}^{L}\subset \mathbb{R}$ be a finite sequence of independent random variables with $\mathbb{E}|y_{\ell}|<\infty$ and $\epsilon_1, \epsilon_2, \cdots, \epsilon_L$ be i.i.d. Rademacher random variables. Then for any bounded $U\subset \mathbb{R}$
	\[
	\mathbb{E}F\left(\sup_{(u_1,\cdots,u_L)\in U^L}\Bigg|\sum_{\ell=1}^{L}u_{\ell}(y_{\ell}-\mathbb{E}y_{\ell})\Bigg|\right) \leq \mathbb{E}F\left(2\sup_{(u_1,\cdots,u_L)\in U^L}\Bigg|\sum_{\ell=1}^{L}\epsilon_{\ell}u_{\ell}y_{\ell}\Bigg|\right),
	\]
	where the expected value on the right hand side is taken over $y_{\ell}$ and $\epsilon_{\ell}$.
\end{lemma}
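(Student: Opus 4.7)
The plan is to follow the standard symmetrization argument, which proceeds via an independent decoupling copy and Rademacher contraction. First I would introduce an independent sequence $\{y_{\ell}'\}_{\ell=1}^{L}$ with the same joint distribution as $\{y_{\ell}\}_{\ell=1}^{L}$ (and independent of it), so that $\mathbbm{E} y_{\ell} = \mathbbm{E} y_{\ell}'$ for every $\ell$. Writing $y_{\ell}-\mathbbm{E} y_{\ell} = \mathbbm{E}_{y'}[y_{\ell}-y_{\ell}']$ and pushing the sup and the convex function $F$ outside via Jensen's inequality gives
\[
\mathbbm{E} F\!\left(\sup_{(u_1,\dots,u_L)\in U^L}\Bigl|\sum_{\ell=1}^{L} u_{\ell}(y_{\ell}-\mathbbm{E} y_{\ell})\Bigr|\right) \leq \mathbbm{E} F\!\left(\sup_{(u_1,\dots,u_L)\in U^L}\Bigl|\sum_{\ell=1}^{L} u_{\ell}(y_{\ell}-y_{\ell}')\Bigr|\right),
\]
where the outer expectation on the right is over both copies.

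Next I would perform the Rademacher symmetrization step. Because $y_{\ell}$ and $y_{\ell}'$ are i.i.d.\ and independent across $\ell$, the joint distribution of $(y_{\ell}-y_{\ell}')_{\ell}$ is invariant under sign flips coordinatewise; hence for i.i.d.\ Rademacher signs $\epsilon_1,\dots,\epsilon_L$ independent of everything else,
\[
\bigl(y_{\ell}-y_{\ell}'\bigr)_{\ell=1}^{L} \stackrel{d}{=} \bigl(\epsilon_{\ell}(y_{\ell}-y_{\ell}')\bigr)_{\ell=1}^{L}.
\]
Substituting and applying the triangle inequality inside the sup yields
\[
\sup_{(u_\ell)\in U^L}\Bigl|\sum_{\ell} u_{\ell}\epsilon_{\ell}(y_{\ell}-y_{\ell}')\Bigr| \leq \sup_{(u_\ell)\in U^L}\Bigl|\sum_{\ell} u_{\ell}\epsilon_{\ell}y_{\ell}\Bigr| + \sup_{(u_\ell)\in U^L}\Bigl|\sum_{\ell} u_{\ell}\epsilon_{\ell}y_{\ell}'\Bigr|.
\]

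Finally, since $F$ is convex and non-decreasing in the relevant regime (it maps $\mathbbm{R}_+$ to $\mathbbm{R}_+$ and is convex, so $t\mapsto F(t)$ applied to a sum of nonnegatives is handled by $F(a+b)\leq \tfrac12 F(2a)+\tfrac12 F(2b)$), I would use Jensen once more to obtain
\[
\mathbbm{E} F\!\left(\sup_{(u_\ell)}\Bigl|\sum_{\ell} u_{\ell}\epsilon_{\ell}(y_{\ell}-y_{\ell}')\Bigr|\right) \leq \frac12 \mathbbm{E} F\!\left(2\sup_{(u_\ell)}\Bigl|\sum_{\ell} u_{\ell}\epsilon_{\ell}y_{\ell}\Bigr|\right)+\frac12 \mathbbm{E} F\!\left(2\sup_{(u_\ell)}\Bigl|\sum_{\ell} u_{\ell}\epsilon_{\ell}y_{\ell}'\Bigr|\right),
\]
and since the two expectations on the right are identical by distributional symmetry, they combine to give the asserted bound.

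The one subtle point is the role of the convexity hypothesis together with non-negativity: we need $F$ applied to a sum of two nonnegative quantities to be controllable by $\tfrac12 F(2\cdot)+\tfrac12 F(2\cdot)$, which is exactly the midpoint convexity inequality $F(\tfrac{a+b}{2})\leq \tfrac12 F(a)+\tfrac12 F(b)$ rearranged. The integrability assumption $\mathbbm{E}|y_{\ell}|<\infty$ ensures the centering $\mathbbm{E} y_{\ell}$ is well-defined and that all expectations above are finite (modulo the usual requirement that the left-hand side is finite to begin with, otherwise the inequality is vacuous). Since this is precisely Lemma 6.3 of Ledoux--Talagrand, in the paper I would simply cite the reference rather than reproduce the argument.
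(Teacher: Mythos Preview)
Your proposal is correct and is precisely the standard Ledoux--Talagrand argument; the paper does not prove this lemma at all but simply cites it as Lemma~6.3 in \cite{ledoux}, exactly as you anticipate in your final sentence. One minor remark: the step where you pass from the pointwise triangle-inequality bound to $F$ of that bound tacitly uses that $F$ is nondecreasing on $\mathbbm{R}_+$ (not just convex), which is how the lemma is typically stated and used; the paper's hypothesis ``$F$ convex'' is a slight under-specification inherited from the source, but since every application in the paper takes $F(t)=t^h$ with $h\geq 1$, this is harmless.
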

The symmetrization technique is by now standard, allowing simplified computations by translating these with respect to well studied Rademacher random variables. Subsequently, introducing a Rademacher sequence will pair well with the next result.

\begin{lemma}[Contraction Inequality, Theorem 4.12 in \cite{ledoux}] 
	\label{lemma2}
	Let $F:\mathbb{R}_+\mapsto\mathbb{R}_+$ be convex and increasing. For $\ell\in [L]$, let $\epsilon_{\ell}$ be i.i.d. Rademacher random variables and $\varphi_{\ell}:\mathbb{R}\mapsto\mathbb{R}$ be contractions such that $\varphi_{\ell}(0)=0$. Then for any bounded $U\subset\mathbb{R}^{L}$
	\[
	\mathbb{E}F\left(\frac{1}{2}\sup_{(u_1,\cdots,u_L)\in U^L}\Bigg|\sum_{\ell=1}^{L}\epsilon_{\ell}\varphi_{\ell}(u_{\ell})\Bigg|\right) \leq \mathbb{E}F\left(\sup_{(u_1,\cdots,u_L)\in U^L}\Bigg|\sum_{\ell=1}^{L}\epsilon_{\ell}u_{\ell}\Bigg|\right),
	\]
	where the expected value is taken with respect to the $\epsilon_{\ell}$.
\end{lemma}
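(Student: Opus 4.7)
The plan is to prove the contraction inequality by the classical induction argument, replacing each $\varphi_\ell(u_\ell)$ by $u_\ell$ one coordinate at a time while exploiting convexity and monotonicity of $F$ together with the contraction property of $\varphi_\ell$. The factor $1/2$ on the left-hand side arises from a two-point averaging over the sign of a single Rademacher variable followed by an application of Jensen's inequality.

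First I would reduce to finite $U$ by a standard truncation and monotone convergence argument so that all suprema are actually attained. Then, iterating on $\ell = 1,\ldots,L$ and conditioning on $\{\epsilon_k\}_{k\neq\ell}$ together with $(u_k)_{k\neq\ell}$, I would reduce the full statement to the one-coordinate claim
\[
\mathbbm{E}_{\epsilon_\ell} F\!\Bigl(\tfrac{1}{2}\sup_{u_\ell}|A + \epsilon_\ell \varphi_\ell(u_\ell)|\Bigr) \le \mathbbm{E}_{\epsilon_\ell} F\!\Bigl(\sup_{u_\ell}|A + \epsilon_\ell u_\ell|\Bigr),
\]
where $A$ absorbs all the other terms (whether already linearized or not yet processed). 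Expanding $\mathbbm{E}_{\epsilon_\ell}$ as a two-point average over $\epsilon_\ell\in\{\pm 1\}$ and using convexity of $F$ should collapse the average into a single quantity, producing the $1/2$ factor. Chaining the resulting inequalities via the tower property of conditional expectation across $\ell=1,\ldots,L$ would then deliver the full statement.

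The hardest step is the one-coordinate claim itself, because the supremum over $u_\ell$ can be attained at different points for the two signs of $\epsilon_\ell$, so one cannot simply compare the integrands pointwise. The standard Ledoux--Talagrand resolution is to pick approximate maximizers $u^+$ and $u^-$ for $\epsilon_\ell=+1$ and $\epsilon_\ell=-1$ respectively and exploit $|\varphi_\ell(u^{\pm})|\le |u^{\pm}|$ (a consequence of the contraction property together with $\varphi_\ell(0)=0$) to obtain a four-point inequality of the form $|A+\varphi_\ell(u^+)| + |A-\varphi_\ell(u^-)| \le |A+u^+| + |A-u^-|$ (or a symmetric variant). Combined with convexity and monotonicity of $F$, this closes the inductive step. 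I expect this four-point comparison to be the main technical obstacle; in the present paper, since the result is used only as a black-box tool, I would simply invoke Theorem~4.12 of Ledoux--Talagrand rather than reproducing the argument in full.
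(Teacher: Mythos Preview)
The paper does not prove this lemma at all; it is stated with a citation to Theorem~4.12 in Ledoux--Talagrand and used as a black box. Your proposal ultimately does the same thing (``I would simply invoke Theorem~4.12 of Ledoux--Talagrand''), so on the level of what the paper actually contains, you are in full agreement.

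Your sketch of the underlying Ledoux--Talagrand argument is broadly correct in spirit (induction over coordinates, conditioning on all but one Rademacher, a pointwise comparison using $|\varphi_\ell(u)|\le|u|$), but the way you place the factor $1/2$ is slightly off. As written, your one-coordinate claim carries the $1/2$ on the left-hand side; iterating that $L$ times would either lose the $1/2$ after the first step or accumulate $2^{-L}$, neither of which matches the statement. In the actual proof, the induction is carried out for the \emph{one-sided} version (no absolute value, no $1/2$), and the $1/2$ enters exactly once at the end when passing from $\sup(\cdot)$ to $\sup|\cdot|$ via $\tfrac12\max(a,b)\le\tfrac12(a+b)$ together with convexity and monotonicity of $F$. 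Since you are citing the result rather than reproducing it, this does not affect your proposal, but it is worth being aware of if you ever need the argument in detail.
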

In our proof, the contraction inequality will help us deal with the logarithmic terms introduced by the log-likelihood of the Poisson distribution.

We now consider the atomic $M$-norm for tensors \cite{navid1,navid2,navid3}, an approach that will allow our optimal sampling complexity dependence in terms of the tensor dimensions $\{I_n\}_{n=1}^{N}$. First, define
\[
\mathcal{T}_{\pm} \coloneqq \Big\{\boldsymbol{\mathscr{T}}\in\{-1,1\}^{I_1\times\cdots\times I_N} \ | \ \mbox{rank}(\boldsymbol{\mathscr{T}}) = 1 \Big\}.
\]
The atomic $M$-norm of a tensor $\boldsymbol{\mathscr{T}}\in\mathbb{R}^{I_1\times\cdots\times I_N}$ is defined as the gauge (see \cite{Mnorm,Mnorm2}) of $\mathcal{T}_{\pm}$, i.e.,
\[
\|\boldsymbol{\mathscr{T}}\|_M \coloneqq \inf\{t>0 \ | \ \boldsymbol{\mathscr{T}}\in t\ \mbox{conv}(\mathcal{T}_{\pm})\},
\]
where conv$(\mathcal{T}_{\pm})$ is the convex envelope of $\mathcal{T}_{\pm}$. The $M$-norm is a convex norm \cite{navid1,navid2,Mnorm} and we will require the following bounds when acting on bounded rank-$R$ tensors.

\begin{lemma}
	\label{lemma3}
	Assume $\boldsymbol{\mathscr{T}}\in\mathbb{R}^{I_1\times\cdots\times I_N}$ is a rank-$R$ tensor with $\|\boldsymbol{\mathscr{T}}\|_{\infty} \leq \alpha$. Then
	\begin{equation}
		\label{r}
		\|\boldsymbol{\mathscr{T}}\|_M \leq \alpha \left(R\sqrt{R}\right)^{N-1}.
	\end{equation}
	Furthermore, if $\boldsymbol{\mathscr{T}}\in\mathbb{R}_+^{I_1\times\cdots\times I_N}$ with rank$_+(\boldsymbol{\mathscr{T}}) \leq R_+$ then
	\begin{equation}
		\label{r+}
		\|\boldsymbol{\mathscr{T}}\|_M \leq \alpha R_+.
	\end{equation}
\end{lemma}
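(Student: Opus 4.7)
To establish the nonnegative CP bound (\ref{r+}), the plan is to reduce to a single nonnegative rank-1 tensor via the triangle inequality for the seminorm $\|\cdot\|_M$ and then show that any such tensor has $M$-norm controlled by its $\infty$-norm. Writing $\boldsymbol{\mathscr{T}} = \sum_{r=1}^{R_+} \textbf{a}_r^{(1)} \circ \cdots \circ \textbf{a}_r^{(N)}$ with each factor nonnegative, every summand is pointwise nonnegative and pointwise dominated by $\boldsymbol{\mathscr{T}}$, hence has $\infty$-norm at most $\alpha$. For a single such nonnegative rank-1 tensor, I would normalize by setting $c_n \coloneqq \|\textbf{a}^{(n)}\|_{\infty}$ and $\textbf{u}^{(n)} \coloneqq \textbf{a}^{(n)}/c_n \in [0,1]^{I_n}$, noting that $\prod_n c_n \leq \alpha$. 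Since $[0,1]^{I_n} \subset [-1,1]^{I_n} = \mbox{conv}(\{-1,1\}^{I_n})$, each $\textbf{u}^{(n)}$ is a convex combination of $\pm 1$ vectors; multilinearly expanding the outer product $\textbf{u}^{(1)} \circ \cdots \circ \textbf{u}^{(N)}$ then exhibits it as a convex combination of rank-1 tensors in $\mathcal{T}_{\pm}$, placing it in $\mbox{conv}(\mathcal{T}_{\pm})$. This gives $\|\textbf{a}^{(1)}\circ\cdots\circ\textbf{a}^{(N)}\|_M \leq \prod_n c_n \leq \alpha$ for each summand, and the triangle inequality yields $\|\boldsymbol{\mathscr{T}}\|_M \leq \alpha R_+$.

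The general CP bound (\ref{r}) is more subtle, since without nonnegativity the rank-1 summands in a CP decomposition can be arbitrarily large due to cancellations, defeating any direct termwise argument. The plan here is to invoke the $M$-norm estimates for bounded low-rank tensors developed in \cite{navid1,navid2,navid3}, which bound $\|\boldsymbol{\mathscr{T}}\|_M$ by an appropriate tensor max-qnorm---defined as the infimum of $\prod_n \|U^{(n)}\|_{2,\infty}$ over factorizations $\boldsymbol{\mathscr{T}} = \llbracket U^{(1)},\ldots,U^{(N)} \rrbracket$---and then control this max-qnorm for rank-$R$ tensors with entries bounded by $\alpha$. A careful rebalancing of the factor matrices of a CP decomposition contributes a factor of $R^{3/2}$ per mode beyond the first, producing the claimed $(R\sqrt{R})^{N-1}$ scaling.

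The main obstacle is precisely this general CP bound: the multilinear coupling of $N$ factor matrices without sign control rules out the clean pointwise argument available in the NNCP case, forcing reliance on the more delicate factorization-based techniques of the cited literature. The polynomial rank dependence $R^{3(N-1)/2}$ is the unavoidable cost of this strategy, while the NNCP version is essentially immediate once pointwise dominance of the summands is observed.
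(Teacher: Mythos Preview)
Your proposal is correct and follows essentially the same approach as the paper: for (\ref{r+}) both use the triangle inequality together with the key observation that nonnegativity forces each rank-one summand to be pointwise dominated by $\boldsymbol{\mathscr{T}}$ and hence have $\infty$-norm at most $\alpha$, and for (\ref{r}) both defer to the $M$-norm/max-qnorm machinery of \cite{navid1,navid2,navid3}. The only cosmetic difference is that the paper bounds each rank-one nonnegative term by invoking (\ref{r}) with $R=1$ as a black box, whereas you spell out the rank-one case directly via the convex-hull expansion of $[0,1]^{I_n}\subset\mathrm{conv}(\{-1,1\}^{I_n})$; this is a harmless elaboration of the same step.
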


This result is essentially Theorem 7 in \cite{navid1}, where (\ref{r}) is established. The bound (\ref{r+}) is a simple corollary, which we prove briefly before continuing.
\begin{proof}[Proof of Lemma \ref{lemma3}]
	As discussed, we only need to show (\ref{r+}) using (\ref{r}). By assumption
	\[
	\boldsymbol{\mathscr{T}} = \sum_{r=1}^{R_+}\textbf{a}_{r}^{(1)}\circ \cdots \circ \textbf{a}_{r}^{(N)} \coloneqq \sum_{r=1}^{R_+}\boldsymbol{\mathscr{T}}_r,
	\]
	where each rank one component $\boldsymbol{\mathscr{T}}_r$ in nonnegative. Since $\|\boldsymbol{\mathscr{T}}\|_{\infty}\leq \alpha$, by nonnegativity it is easy to see that $\|\boldsymbol{\mathscr{T}}_r\|_{\infty}\leq \alpha$ for all $r\in [R_+]$. Due to the fact that the $M$-norm is a norm \cite{navid1,navid2,Mnorm}, the triangle inequality gives
	\[
	\|\boldsymbol{\mathscr{T}}\|_M \leq \sum_{r=1}^{R_+}\|\boldsymbol{\mathscr{T}}_r\|_M \leq \sum_{r=1}^{R_+}\alpha = \alpha R_+
	\]
	where the second inequality holds by (\ref{r}) since each $\boldsymbol{\mathscr{T}}_r$ is rank one with $\|\boldsymbol{\mathscr{T}}_r\|_{\infty}\leq \alpha$.
\end{proof}

We also consider the $M$-norm's dual norm
\[
\|\boldsymbol{\mathscr{T}}\|_M^* \coloneqq \max_{\|\boldsymbol{\mathscr{U}}\|_M\leq 1}\langle \boldsymbol{\mathscr{T}},\boldsymbol{\mathscr{U}}\rangle = \max_{\boldsymbol{\mathscr{U}}\in \mathcal{T}_{\pm}}\langle \boldsymbol{\mathscr{T}},\boldsymbol{\mathscr{U}}\rangle,
\]
where the second equality is established in \cite{navid1}. We will require a bound on the expectation of this dual norm when acting on random tensors of a certain structure.

\begin{lemma}
	\label{lemma4}
	Assume $\boldsymbol{\mathscr{V}}\in [-1,1]^{I_1\times \cdots\times I_N}$ is a random tensor with $p$ non-zero entries, which are independent mean zero discrete random variables. Define $\bar{I}\coloneqq \sum_nI_n$ and $\tilde{I}\coloneqq I_1I_2\cdots I_N$. Then, for any $h>0$ such that $\bar{I}-1\geq h\log_2\left(\frac{\tilde{I}}{4\bar{I}}\right)$ we have
	\[
	\mathbb{E}\left(\|\boldsymbol{\mathscr{V}}\|_{M}^{*}\right)^h \leq 2\left(2\sqrt{p\bar{I}}\right)^h.
	\]
\end{lemma}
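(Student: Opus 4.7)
The goal is to bound the moment $\mathbbm{E}(\|\boldsymbol{\mathscr{V}}\|_M^*)^h$ where $\|\boldsymbol{\mathscr{V}}\|_M^*=\max_{\boldsymbol{\mathscr{U}}\in\mathcal{T}_{\pm}}\langle\boldsymbol{\mathscr{V}},\boldsymbol{\mathscr{U}}\rangle$. The plan is to exploit the \emph{finiteness} of $\mathcal{T}_{\pm}$: combine a Hoeffding sub-Gaussian tail for each fixed $\boldsymbol{\mathscr{U}}$ with a union bound over $\mathcal{T}_{\pm}$, then convert the resulting tail estimate into the desired moment bound via the layer-cake representation. The first ingredient is a cardinality estimate: each $\boldsymbol{\mathscr{U}}\in\mathcal{T}_{\pm}$ is an outer product of $N$ sign vectors $\mathbf{a}^{(n)}\in\{-1,+1\}^{I_n}$, so $|\mathcal{T}_{\pm}|\leq 2^{\bar I}$ (the sign-flip ambiguity $\boldsymbol{\mathscr{U}}\mapsto-\boldsymbol{\mathscr{U}}$ also lets the maximum over $\mathcal{T}_{\pm}$ coincide with the maximum of $|\langle\boldsymbol{\mathscr{V}},\boldsymbol{\mathscr{U}}\rangle|$, a fact we exploit below).

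Next, for a fixed $\boldsymbol{\mathscr{U}}\in\mathcal{T}_{\pm}$, the inner product $\langle\boldsymbol{\mathscr{V}},\boldsymbol{\mathscr{U}}\rangle$ is a sum of exactly $p$ independent mean-zero terms $u_{\mathbf{i}}v_{\mathbf{i}}$ (only the nonzero entries of $\boldsymbol{\mathscr{V}}$ contribute), each bounded in magnitude by $1$. Hoeffding's inequality yields $\mathbbm{P}(|\langle\boldsymbol{\mathscr{V}},\boldsymbol{\mathscr{U}}\rangle|>t)\leq 2e^{-t^{2}/(2p)}$, and a union bound over $\mathcal{T}_{\pm}$ gives
\[
\mathbbm{P}(\|\boldsymbol{\mathscr{V}}\|_M^*>t)\leq 2|\mathcal{T}_{\pm}|\,e^{-t^{2}/(2p)}\leq 2^{\bar I+1}e^{-t^{2}/(2p)}.
\]
Applying the identity $\mathbbm{E}(X^{h})=h\int_{0}^{\infty}t^{h-1}\mathbbm{P}(X>t)\,dt$ to $X=\|\boldsymbol{\mathscr{V}}\|_M^*$ and splitting the integral at $t_{0}\coloneqq 2\sqrt{p\bar I}$, the portion over $[0,t_{0}]$ is bounded trivially by $t_{0}^{h}=(2\sqrt{p\bar I})^{h}$, which already matches one half of the target bound.

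For the tail over $(t_{0},\infty)$, the substitution $s=t/\sqrt{2p}$ reduces the integral to one involving the upper incomplete gamma function $\Gamma(h/2,2\bar I)$, and one invokes the standard decay estimate $\Gamma(a,x)\leq 2x^{a-1}e^{-x}$ (valid once $x\gtrsim a$) to obtain a bound of the form $|\mathcal{T}_{\pm}|\,e^{-2\bar I}t_{0}^{h}$ times a polynomial factor in $h$ and $\bar I$. The hypothesis $\bar I-1\geq h\log_{2}(\tilde I/(4\bar I))$ is precisely what is needed to absorb the union-bound factor $2^{\bar I+1}$ against the exponential gain $e^{-2\bar I}$ while keeping the polynomial prefactor under control, so that this second contribution is itself bounded by $t_{0}^{h}$; summing the two halves delivers $\mathbbm{E}(\|\boldsymbol{\mathscr{V}}\|_M^*)^{h}\leq 2(2\sqrt{p\bar I})^{h}$. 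The main technical obstacle is this last matching step: because the target constant is exactly $2$ rather than a generic absolute constant, one cannot afford a loose Gaussian-tail estimate, and the entire algebraic relationship between $h$, $\bar I$, and $\tilde I$ in the hypothesis is engineered so that the incomplete-gamma bound and the $2^{\bar I}$ union-bound factor balance precisely at the chosen splitting point $t_{0}=2\sqrt{p\bar I}$.
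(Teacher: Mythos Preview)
Your first two steps---the cardinality bound $|\mathcal{T}_{\pm}|\le 2^{\bar I}$ and the Hoeffding-plus-union-bound tail estimate $\mathbbm{P}(\|\boldsymbol{\mathscr{V}}\|_M^*\ge t)\le 2^{\bar I+1}e^{-t^2/(2p)}$---match the paper exactly. The divergence is in how the tail bound is converted to the moment bound. The paper does \emph{not} integrate the tail. Instead it observes that $\sup_{\boldsymbol{\mathscr{U}}\in\mathcal{T}_\pm}|\langle\boldsymbol{\mathscr{V}},\boldsymbol{\mathscr{U}}\rangle|$ is a \emph{bounded} discrete random variable (deterministically $\le p$, since it is a sum of $p$ terms of modulus at most $1$), and applies an elementary lemma: if $y\in[0,L]$ is discrete and $\mathbbm{P}(y\ge\delta)\le\delta/L$, then $\mathbbm{E}y\le 2\delta$. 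With $y=(\|\boldsymbol{\mathscr{V}}\|_M^*)^h$, $L=p^h$, $\delta=(2\sqrt{p\bar I})^h$, the paper only needs to check $e^{-\bar I+1}\le (2\sqrt{p\bar I})^h/p^h$, i.e.\ $p\le 4\bar I\,e^{2(\bar I-1)/h}$. This is where $\tilde I$ enters: one uses $p\le\tilde I$ and the hypothesis $\bar I-1\ge h\log_2(\tilde I/(4\bar I))$ to verify the inequality.

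Your layer-cake route is the textbook alternative, but your account of how the hypothesis is used is off. In your tail integral the only quantities that appear are $p$, $\bar I$, and $h$; the parameter $\tilde I$ never shows up. The factor $2^{\bar I+1}e^{-2\bar I}=2(2/e^2)^{\bar I}$ is already exponentially small in $\bar I$ regardless of the hypothesis, so it is not what the hypothesis ``absorbs.'' What your argument actually needs is the validity of $\Gamma(h/2,2\bar I)\lesssim(2\bar I)^{h/2-1}e^{-2\bar I}$, which requires roughly $h\lesssim\bar I$. That is implied by the paper's hypothesis when $\tilde I\ge 8\bar I$ (so $\log_2(\tilde I/(4\bar I))\ge 1$), but when $\tilde I<4\bar I$ the hypothesis is vacuous and allows arbitrarily large $h$, in which case your incomplete-gamma estimate fails while the paper's boundedness argument still goes through (since then $p\le\tilde I<4\bar I$ automatically). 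So your approach is essentially sound in the regime the paper cares about, but the paper's route is both simpler and covers the stated hypothesis exactly, precisely because it exploits the deterministic bound $\|\boldsymbol{\mathscr{V}}\|_M^*\le p$ that you discard.
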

We postpone the proof of Lemma \ref{lemma4} until Section \ref{proofreqlem}. Lemmas \ref{lemma3} and \ref{lemma4} produce our sampling complexity in terms of $I$ and $R$, where $I = \max_{n\in[N]}I_n$. In contrast to previous approaches that try to generalize results for matrix norms, considering the $M$-norm reduces our sampling complexity from $\mathcal{O}(I^{N/2}\sqrt{R}\log^{3/2}(I))$ \cite{nuctensor} to $\mathcal{O}(I(R\sqrt{R})^{2N-2}\log(I))$ in the general case and $\mathcal{O}(IR^2\log(I))$ in the nonnegative case. Since $R\leq I_1\cdots I_N/I$, this results in a great improvement in many cases. However, the results are still sub-optimal in terms of its rank dependence which is an open problem conjectured to be linear $\mathcal{O}(IR\log(I))$.

\subsubsection{Proof of Lemma \ref{mainlemma0}}
\label{secmainlem0}

We may now proceed to the proof of the main lemma for the zero-truncated case.

\begin{proof}[Proof of Lemma \ref{mainlemma0}]
	We first show (\ref{concentration10}). Afterward, establishing bound (\ref{concentration20}) will only require a slight modification. In what follows, recall that $\Omega$ is fixed and let ${\boldsymbol{\mathscr{U}}}$ be the random tensor with entries $u_{\textbf{i}}$ defined as in the proof of Theorem \ref{mainthm0}. Then, for any $\boldsymbol{\mathscr{T}}\in\mathbb{R}_{+}^{I_1\times\cdots\times I_N}$ we can write
	\[
	\tilde{f}_{\Gamma}(\boldsymbol{\mathscr{T}}) = \sum_{\textbf{i}\in\Omega}x_{\textbf{i}}\log(t_{\textbf{i}})-u_{\textbf{i}}\log(\exp\left(t_{\textbf{i}}\right)-1),
	\]
	which is a sum of independent random variables. We begin by bounding 
	\[
	\mathbb{E}\sup_{\boldsymbol{\mathscr{T}}\in S_R^+(\beta,\alpha)}|\tilde{f}_{\Gamma}(\boldsymbol{\mathscr{T}})-\mathbb{E}\tilde{f}_{\Gamma}(\boldsymbol{\mathscr{T}})|^h
	\]
	for arbitrary $h\geq 1$. Afterward, we will apply Markov's inequality for a specified value of $h$ to obtain the statement with the prescribed probability. To this end, we symmetrize (Lemma \ref{lemma2a}) by introducing a tensor $\boldsymbol{\mathscr{V}}\in\{-1,1\}^{I_1\times\cdots\times I_N}$ whose entries are i.i.d. Rademacher random variables to obtain
	\begin{align*}
		&\mathbb{E}\sup_{\boldsymbol{\mathscr{T}}\in S_R^+(\beta,\alpha)}|\tilde{f}_{\Gamma}(\boldsymbol{\mathscr{T}})-\mathbb{E}\tilde{f}_{\Gamma}(\boldsymbol{\mathscr{T}})|^h \\
		&\leq 2^h\mathbb{E}\sup_{\boldsymbol{\mathscr{T}}\in S_R^+(\beta,\alpha)}\Bigg|\sum_{\textbf{i}\in\Omega}v_{\textbf{i}}\Big[x_{\textbf{i}}\log(t_{\textbf{i}})-u_{\textbf{i}}\log(\exp\left(t_{\textbf{i}}\right)-1)\Big]\Bigg|^h \\
		&\leq 2^{2h-1}\mathbb{E}\sup_{\boldsymbol{\mathscr{T}}\in S_R^+(\beta,\alpha)}\Bigg|\sum_{\textbf{i}\in\Omega}v_{\textbf{i}}x_{\textbf{i}}\log(t_{\textbf{i}})\Bigg|^h
		+ 2^{2h-1}\mathbb{E}\sup_{\boldsymbol{\mathscr{T}}\in S_R^+(\beta,\alpha)}\Bigg|\sum_{\textbf{i}\in\Omega}v_{\textbf{i}}u_{\textbf{i}}\log(\exp\left(t_{\textbf{i}}\right)-1)\Bigg|^h
	\end{align*}
	where the expectations are now over the draw of $\boldsymbol{\mathscr{X}}$ and $\boldsymbol{\mathscr{V}}$ and the last inequality holds since $(a+b)^h \leq 2^{h-1}(a^h+b^h)$ when $a,b>0$ and $h\geq 1$. Both terms resulting from the last inequality can be bounded by applying Lemma \ref{lemma2}. For the first term, define $\varphi(t) \coloneqq \beta\log(t+1)$, which is a contraction for $t\geq \beta - 1$ that vanishes at the origin (see \cite{PoissonMC}). We see that
	\begin{align*}
		2^{2h-1}\mathbb{E}\sup_{\boldsymbol{\mathscr{T}}\in S_R^+(\beta,\alpha)}\Bigg|\sum_{\textbf{i}\in\Omega}\log(t_{\textbf{i}})x_{\textbf{i}}v_{\textbf{i}}\Bigg|^h &= \frac{1}{2}\left(\frac{4}{\beta}\right)^h\mathbb{E}\sup_{\boldsymbol{\mathscr{T}}\in S_R^+(\beta,\alpha)}\Bigg|\sum_{\textbf{i}\in\Omega}\varphi\left(t_{\textbf{i}}-1\right)x_{\textbf{i}}v_{\textbf{i}}\Bigg|^h \\
		&\leq \frac{1}{2}\left(\frac{4}{\beta}\right)^h\mathbb{E}\Big[\max_{\textbf{i}\in\Omega}x_{\textbf{i}}^h\Big]\mathbb{E}\sup_{\boldsymbol{\mathscr{T}}\in S_R^+(\beta,\alpha)}\Bigg|\sum_{\textbf{i}\in\Omega}\varphi(t_{\textbf{i}}-1)v_{\textbf{i}}\Bigg|^h\\
		&\leq \frac{1}{2}\left(\frac{8}{\beta}\right)^h\mathbb{E}\Big[\max_{\textbf{i}\in\Omega}x_{\textbf{i}}^h\Big]\mathbb{E}\sup_{\boldsymbol{\mathscr{T}}\in S_R^+(\beta,\alpha)}\Bigg|\sum_{\textbf{i}\in\Omega}\left(t_{\textbf{i}}-1\right)v_{\textbf{i}}\Bigg|^h,
	\end{align*}
	where the last inequality holds by Lemma \ref{lemma2} since with $\boldsymbol{\mathscr{T}}\in S_R^+(\beta,\alpha)$ we have $t_{\textbf{i}}-1 \geq \beta - 1$ for all $\textbf{i}\in\Omega$. We now bound the two expectations in the last line.
	
	For the term $\mathbb{E}\Big[\max_{\textbf{i}\in\Omega}x_{\textbf{i}}^h\Big]$, we argue as in \cite{PoissonMC} in the proof of Lemma 4. An analogous version of equation (65) therein gives in our context 
	\begin{equation}
		\label{MaxPoisson}
		\mathbb{E}\Bigg[\max_{\textbf{i}\in\Omega}x_{\textbf{i}}^h\Bigg] \leq 2^{2h-1}\left(\alpha^h + \alpha^h(e^2-3)^h + 2h! + \log^{h}(|\Omega|) \right).
	\end{equation}
	
	For the remaining term, let $\Delta_{\Omega}\in\{0,1\}^{I_1\times\cdots\times I_N}$ be the indicator tensor for $\Omega$ and $\textbf{1}\in\{1\}^{I_1\times\cdots\times I_N}$ be the all ones tensor so that
	\begin{align*}
	&\mathbb{E}\sup_{\boldsymbol{\mathscr{T}}\in S_R^+(\beta,\alpha)}\Bigg|\sum_{\textbf{i}\in\Omega}\left(t_{\textbf{i}}-1\right)v_{\textbf{i}}\Bigg|^h \\
	&= \mathbb{E}\sup_{\boldsymbol{\mathscr{T}}\in S_R^+(\beta,\alpha)}|\langle \boldsymbol{\mathscr{T}}-\textbf{1},\boldsymbol{\mathscr{V}}\circ \Delta_{\Omega}\rangle|^h \leq \sup_{\boldsymbol{\mathscr{T}}\in S_R^+(\beta,\alpha)}\|\boldsymbol{\mathscr{T}}-\textbf{1}\|_M^h\mathbb{E}\left(\|\boldsymbol{\mathscr{V}}\circ \Delta_{\Omega}\|_M^*\right)^h,
	\end{align*}
	where the inequality holds by the definition of the dual norm. Applying equation (\ref{r+}) from Lemma \ref{lemma3} and the fact that the $M$-norm is a norm \cite{navid1,navid2,Mnorm}, we have by the triangle inequality
	\begin{equation}
		\label{eq2}
		\|\boldsymbol{\mathscr{T}}-\textbf{1}\|_M \leq \|\boldsymbol{\mathscr{T}}\|_M + \|\textbf{1}\|_M \leq R_M^++1,
	\end{equation}
	where the second inequality holds since $\boldsymbol{\mathscr{T}}\in S_R^+(\beta,\alpha)$, rank$_+(\textbf{1})=1$ with $\|\textbf{1}\|_{\infty} = 1$, and by definition of $R_M^+$ (max $M$-norm over $S_{R}^{+}(\alpha,\beta)$). Furthermore, $\boldsymbol{\mathscr{V}}\circ \Delta_{\Omega}$ satisfies the conditions of Lemma \ref{lemma4}, so assuming $h$ will be chosen such that 
	\begin{equation}
		\label{hcond}
		\sum_{n=1}^{N}I_n\geq h\log_2\left(\frac{I_1\cdots I_N}{4\sum_{n=1}^{N}I_n}\right) + 1
	\end{equation}
	we have
	\[
	\mathbb{E}\left(\|\boldsymbol{\mathscr{V}}\circ \Delta_{\Omega}\|_M^*\right)^h \leq 2\left(2\sqrt{|\Omega|\sum_{n=1}^{N}I_n}\right)^h.
	\]
	Thus far, we have shown
	\begin{align*}
		&2^{2h-1}\mathbb{E}\sup_{\boldsymbol{\mathscr{T}}\in S_R^+(\beta,\alpha)}\Bigg|\sum_{\textbf{i}\in\Omega}v_{\textbf{i}}x_{\textbf{i}}\log(t_{\textbf{i}})\Bigg|^h \\
		&\leq \frac{1}{2}\left(\alpha^h + \alpha^h(e^2-3)^h + 2h! + \log^{h}(|\Omega|) \right)\left(\frac{64(R_M^++1)}{\beta}\sqrt{|\Omega|\sum_{n=1}^{N}I_n}\right)^h.
	\end{align*}
	
	The remaining term can be bounded in a similar manner, considering $\phi(t) := (1-e^{-\beta})\log(\exp\left(t + \log(2)\right)-1)$ which is a contraction for $t\geq \beta - \log(2)$ that vanishes at the origin. Using Lemma \ref{lemma2} again we obtain
	\begin{align*}
		&2^{2h-1}\mathbb{E}\sup_{\boldsymbol{\mathscr{T}}\in S_R^+(\beta,\alpha)}\Bigg|\sum_{\textbf{i}\in\Omega}v_{\textbf{i}}u_{\textbf{i}}\log(\exp\left(t_{\textbf{i}}\right)-1)\Bigg|^h \\
		&= \frac{2^{2h-1}}{(1-e^{-\beta})^h}\mathbb{E}\sup_{\boldsymbol{\mathscr{T}}\in S_R^+(\beta,\alpha)}\Bigg|\sum_{\textbf{i}\in\Omega}v_{\textbf{i}}u_{\textbf{i}}\phi(t_{\textbf{i}}-\log(2))\Bigg|^h\\
		&\leq \frac{2^{3h-1}}{(1-e^{-\beta})^h}\mathbb{E}\sup_{\boldsymbol{\mathscr{T}}\in S_R^+(\beta,\alpha)}\Bigg|\sum_{\textbf{i}\in\Omega}v_{\textbf{i}}u_{\textbf{i}}(t_{\textbf{i}}-\log(2))\Bigg|^h \leq \left(\frac{16(R_M^++1)}{1-e^{-\beta}}\sqrt{|\Omega|\sum_{n=1}^{N}I_n}\right)^h,
	\end{align*}
	where the last inequality holds as in the bound of the first term by considering the $M$-norm, its dual, and applying Lemma \ref{lemma3} to $\boldsymbol{\mathscr{T}}-\log(2)$ and Lemma \ref{lemma4} to $\boldsymbol{\mathscr{U}}\circ\boldsymbol{\mathscr{V}}\circ\Delta_{\Omega}$.
	
	In conclusion, we have shown
	\begin{align*}
		\mathbb{E}\sup_{\boldsymbol{\mathscr{T}}\in S_R^+(\beta,\alpha)}|\tilde{f}_{\Gamma}(\boldsymbol{\mathscr{T}})-\mathbb{E}\tilde{f}_{\Gamma}(\boldsymbol{\mathscr{T}})|^h \leq \delta_0 , 
	\end{align*}
	where
	\begin{align*}
		\delta_0 \coloneqq \, &\frac{1}{2}\left(\alpha^h + \alpha^h(e^2-3)^h + 2h! + \log^{h}(|\Omega|) \right)\left(\frac{64(R_M^++1)}{\beta}\sqrt{|\Omega|\sum_{n=1}^{N}I_n}\right)^h \\
		&+ \left(\frac{16(R_M^++1)}{1-e^{-\beta}}\sqrt{|\Omega|\sum_{n=1}^{N}I_n}\right)^h.
	\end{align*}
	Applying Markov's inequality, we have for any $\delta>0$
	\begin{align*}
		\mathbb{P}\left(\sup_{\boldsymbol{\mathscr{T}}\in S_R^+(\beta,\alpha)}|\tilde{f}_{\Gamma}(\boldsymbol{\mathscr{T}})-\mathbb{E}\tilde{f}_{\Gamma}(\boldsymbol{\mathscr{T}})|\geq \delta\right) &= \mathbb{P}\left(\sup_{\boldsymbol{\mathscr{T}}\in S_R^+(\beta,\alpha)}|\tilde{f}_{\Gamma}(\boldsymbol{\mathscr{T}})-\mathbb{E}\tilde{f}_{\Gamma}(\boldsymbol{\mathscr{T}})|^h\geq \delta^h\right)\\ 
		&\leq \frac{\mathbb{E}\sup_{\boldsymbol{\mathscr{T}}\in S_R^+(\beta,\alpha)}|\tilde{f}_{\Gamma}(\boldsymbol{\mathscr{T}})-\mathbb{E}\tilde{f}_{\Gamma}(\boldsymbol{\mathscr{T}})|^h}{\delta^h} \leq \frac{\delta_0}{\delta^h}.
	\end{align*}
	Pick $\delta = 2\delta_0^{1/h}$ and $h=\log_2(|\Omega|)$, so that
	\[
	\mathbb{P}\left(\sup_{\boldsymbol{\mathscr{T}}\in S_R^+(\beta,\alpha)}|\tilde{f}_{\Gamma}(\boldsymbol{\mathscr{T}})-\mathbb{E}\tilde{f}_{\Gamma}(\boldsymbol{\mathscr{T}})|\geq 2\delta_0^{1/h}\right) \leq 2^{-h} = \frac{1}{|\Omega|}.
	\]
	Using $(a^h+b^h)^{1/h} \leq a+b$, $h!^{1/h} \leq h$, and $(a^h+b^h+c^h+d^h)^{1/h} \leq a+b+c+d$ if $a,b,c,d>0$, we can simplify the bound as
	\begin{align*}
		2\delta_0^{1/h} &\leq \left(\alpha(e^2-2) + 3\log_2(|\Omega|) \right)\frac{128(R_M^++1)}{\beta}\sqrt{|\Omega|\sum_{n=1}^{N}I_n} + \frac{32(R_M^++1)}{1-e^{-\beta}}\sqrt{|\Omega|\sum_{n=1}^{N}I_n}\\
  &=32(R_M^++1)\left(\alpha(e^2-2) + 3\log_2(|\Omega|) \right)\left(\frac{4}{\beta} + \frac{\left(\alpha(e^2-2) + 3\log_2(|\Omega|) \right)^{-1}}{\left(1-e^{-\beta}\right)}\right)\sqrt{|\Omega|\sum_{n=1}^{N}I_n} \\
		&= 32(R_M^++1)\left(\frac{(4+\beta\tau)e^{\beta}-4}{(e^{\beta}-1)\beta\tau}\right)\sqrt{|\Omega|\sum_{n=1}^{N}I_n},
	\end{align*}
 where in the last equality we define $\tau^{-1} \coloneqq \alpha(e^2-2) + 3\log_2(|\Omega|)$. We note that (\ref{hcond}) with our choice $h=\log_2(|\Omega|)$ is satisfied if
	\[
	\min_nI_n \geq (N-1)\log_2^2\left(\max_{n}I_n\right) + \frac{1}{N}.
	\]
	which holds under our assumed contexts defined in Theorems~\ref{simplethm}, \ref{mainthm0}, and \ref{mainthm}.
	
	To obtain (\ref{concentration20}), we use an analogous argument with respect to $S_R(\beta,\alpha)$ which replaces the term $R_M^+$ with $R_M$ and otherwise leaves all other terms unchanged, thereby establishing (\ref{concentration20}) with the same probability.
\end{proof}

\subsubsection{Proof of Lemma \ref{mainlemma}}

Here we prove the main lemma of the Poisson tensor completion result. The proof is very similar to strategy used in the last section and for brevity we will apply bounds therein.

\begin{proof}[Proof of Lemma \ref{mainlemma}]
	We first show (\ref{concentration1}). Afterward, establishing bound (\ref{concentration2}) will only require a slight modification. Notice that
	\[
	f_{\Omega}(\boldsymbol{\mathscr{T}})-\mathbb{E}f_{\Omega}(\boldsymbol{\mathscr{T}}) = \sum_{\textbf{i}\in\Omega}\log(t_{\textbf{i}})\left(x_{\textbf{i}}-\mathbb{E}x_{\textbf{i}}\right),
	\]
	where, with $\Omega$ fixed, we take expected value with respect to $\boldsymbol{\mathscr{X}}$. To bound 
	\[
	\mathbb{E}\sup_{\boldsymbol{\mathscr{T}}\in S_R^+(\beta,\alpha)}|f_{\Omega}(\boldsymbol{\mathscr{T}})-\mathbb{E}f_{\Omega}(\boldsymbol{\mathscr{T}})|^h
	\]
	for arbitrary $h\geq 1$, we apply Lemma \ref{lemma2a} so that
	\[
	\mathbb{E}\sup_{\boldsymbol{\mathscr{T}}\in S_R^+(\beta,\alpha)}|f_{\Omega}(\boldsymbol{\mathscr{T}})-\mathbb{E}f_{\Omega}(\boldsymbol{\mathscr{T}})|^h \leq 2^h\mathbb{E}\sup_{\boldsymbol{\mathscr{T}}\in S_R^+(\beta,\alpha)}\Bigg|\sum_{\textbf{i}\in\Omega}\log(t_{\textbf{i}})x_{\textbf{i}}v_{\textbf{i}}\Bigg|^h,
	\]
	where $\boldsymbol{\mathscr{V}}\in\{-1,1\}^{I_1\times\cdots\times I_N}$ is a random tensor whose entries are i.i.d. Rademacher random variables and the expectation is now over the draw of $\boldsymbol{\mathscr{X}}$ and $\boldsymbol{\mathscr{V}}$. This last term can be bounded exactly as in the proof of Lemma \ref{mainlemma0}, to obtain
	\begin{align*}
		&\mathbb{E}\sup_{\boldsymbol{\mathscr{T}}\in S_R^+(\beta,\alpha)}|f_{\Omega}(\boldsymbol{\mathscr{T}})-\mathbb{E}f_{\Omega}(\boldsymbol{\mathscr{T}})|^h \leq \delta_0 ,
	\end{align*}
	where
	\begin{align*}
		\delta_0 \coloneqq \left(\alpha^h + \alpha^h(e^2-3)^h + 2h! + \log^{h}(|\Omega|) \right)\left(\frac{32(R_{M}^{+}+1)}{\beta}\sqrt{|\Omega|\sum_{n=1}^{N}I_n}\right)^h.
	\end{align*}
	Applying Markov's inequality, we have for any $\delta>0$
	\begin{align*}
		\mathbb{P}\left(\sup_{\boldsymbol{\mathscr{T}}\in S_R^+(\beta,\alpha)}|f_{\Omega}(\boldsymbol{\mathscr{T}})-\mathbb{E}f_{\Omega}(\boldsymbol{\mathscr{T}})|\geq \delta\right) 
		&= \mathbb{P}\left(\sup_{\boldsymbol{\mathscr{T}}\in S_R^+(\beta,\alpha)}|f_{\Omega}(\boldsymbol{\mathscr{T}})-\mathbb{E}f_{\Omega}(\boldsymbol{\mathscr{T}})|^h\geq \delta^h\right)\\ 
		&\leq \frac{\mathbb{E}\sup_{\boldsymbol{\mathscr{T}}\in S_R^+(\beta,\alpha)}|f_{\Omega}(\boldsymbol{\mathscr{T}})-\mathbb{E}f_{\Omega}(\boldsymbol{\mathscr{T}})|^h}{\delta^h} \leq \frac{\delta_0}{\delta^h}.
	\end{align*}
	Pick $\delta = 2\delta_0^{1/h}$ and $h=\log_2(|\Omega|)$, so that
	\[
	\mathbb{P}\left(\sup_{\boldsymbol{\mathscr{T}}\in S_R^+(\beta,\alpha)}|f_{\Omega}(\boldsymbol{\mathscr{T}})-\mathbb{E}f_{\Omega}(\boldsymbol{\mathscr{T}})|\geq 2\delta_0^{1/h}\right) \leq 2^{-h} = \frac{1}{|\Omega|}.
	\]
	For the advertised result, we further bound
	\[
	\left(\alpha^h + \alpha^h(e^2-3)^h + 2h! + \log^{h}(|\Omega|) \right)^{1/h} \leq \alpha(e^2-2) + 3\log_2(|\Omega|).
	\]
	
	To obtain (\ref{concentration2}), we use an analogous argument with respect to $S_R(\beta,\alpha)$ and $R_M$.
\end{proof}

\subsection{Proof of Additional Lemmas}
\label{proofreqlem}

From Section \ref{reqlem}, we need only to prove Lemma \ref{lemma4} since the remaining lemmas are established in the respective citations. To obtain the lemma, we will use the following result for bounded discrete random variables.
\begin{theorem}
	\label{thmthm}
	Let $y\in[0,L]$ be a discrete random variable. If for some $\delta\in(0,\infty)$ we have
	\[
	\mathbb{P}\left(y\geq \delta\right) \leq \frac{\delta}{L},
	\]
	then
	\[
	\mathbb{E}y\leq 2\delta.
	\]
\end{theorem}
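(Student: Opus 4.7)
The plan is to split the expectation of $y$ into its contributions below and above the threshold $\delta$, using the boundedness $y \in [0, L]$ on the upper tail and the deterministic bound $y < \delta$ on the lower part. Since $y$ is discrete and nonnegative, I would write
\[
\mathbbm{E} y \;=\; \mathbbm{E}\!\left[y \cdot \mathbbm{1}_{\{y < \delta\}}\right] + \mathbbm{E}\!\left[y \cdot \mathbbm{1}_{\{y \geq \delta\}}\right].
\]
On the first event, $y$ is bounded pointwise by $\delta$, so the first term is at most $\delta\,\mathbbm{P}(y<\delta) \leq \delta$. On the second event, $y \leq L$ deterministically, so the second term is at most $L \cdot \mathbbm{P}(y\geq\delta)$, and the hypothesis $\mathbbm{P}(y\geq\delta)\leq \delta/L$ immediately bounds this by $\delta$. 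Adding the two contributions gives $\mathbbm{E}y \leq 2\delta$.

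The argument is essentially a one-line application of Markov-style truncation, so there is no genuine obstacle; the only subtlety worth flagging is that we are given a tail bound at a single point $\delta$ rather than a uniform bound over all thresholds, so one should resist the temptation to use the layer-cake identity $\mathbbm{E}y = \int_0^L \mathbbm{P}(y\geq t)\,dt$, which would require such a bound for all $t$. The simple binary split above uses the pointwise hypothesis exactly as stated and relies only on the boundedness of the support, which is why the constant $2$ is natural. I would keep the discreteness hypothesis in the statement but note in passing that it is not used anywhere in the proof.
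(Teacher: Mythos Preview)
Your proof is correct and follows essentially the same approach as the paper: both split the expectation at the threshold $\delta$, bound the lower part by $\delta$ and the upper part by $L\,\mathbbm{P}(y\geq\delta)\leq\delta$. The paper writes the split explicitly over the enumerated atoms of the discrete distribution, whereas you use indicator functions, but the argument is identical; your remark that discreteness is not actually needed is a valid observation.
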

The proof of Theorem \ref{thmthm} is rather simple, we quickly provide the proof before continuing.

\begin{proof}[Proof of Theorem \ref{thmthm}]
	If $\delta \geq L$, then the conclusion is trivial. Otherwise, let $y_1 < y_2 < y_3 < \cdots \leq L$ be the possible outcomes of $y$ and let $k_0\in \mathbb{N}$ be such that $y_{k_0}\leq \delta < y_{k_0+1}$. Then
	\begin{align*}
		\mathbb{E}y &= \sum_{k=1}^{\infty}y_k\mathbb{P}\left(y=y_k\right) = \sum_{k=1}^{k_0}y_k\mathbb{P}\left(y=y_k\right) + \sum_{k=k_0+1}^{\infty}y_k\mathbb{P}\left(y=y_k\right)\\
		&\leq \delta\sum_{k=1}^{k_0}\mathbb{P}\left(y=y_k\right) + L\sum_{k=k_0+1}^{\infty}\mathbb{P}\left(y=y_k\right) = \delta\mathbb{P}\left(y\leq y_{k_0}\right) + L\mathbb{P}\left(y\geq y_{k_0+1}\right) \\
		&\leq \delta + L\frac{\delta}{L} = 2\delta.
	\end{align*}
\end{proof}

With this in mind, we now proceed to the proof of Lemma \ref{lemma4}.

\begin{proof}[Proof of Lemma \ref{lemma4}]
	Recall that we have defined $\bar{I}\coloneqq \sum_{n=1}^{N}I_n$, $\tilde{I}\coloneqq I_1I_2\cdots I_N$ and let $\Omega\subset [I_1]\times\cdots\times [I_N]$ be the set of $p$ non-zero entries of $\boldsymbol{\mathscr{V}}$. Using equation (4.41) in \cite{navid2} we have
	\[
	\|\boldsymbol{\mathscr{V}}\|_M^* \leq \sup_{\boldsymbol{\mathscr{U}}\in \mathcal{T}_{\pm}}\Bigg|\sum_{\textbf{i}\in\Omega}v_{\textbf{i}}u_{\textbf{i}}\Bigg|.
	\]
	Notice that the term on the right hand side is a discrete random variable, taking values in $[0,p]$. For fixed $\boldsymbol{\mathscr{U}}\in \mathcal{T}_{\pm}$, a standard Hoeffding's inequality for bounded random variables gives for $t>0$
	\[
	\mathbb{P}\left(\Bigg|\sum_{\textbf{i}\in\Omega}v_{\textbf{i}}u_{\textbf{i}}\Bigg|\geq t\right) \leq 2\exp\left(-\frac{t^2}{2p}\right).
	\]
	Since $|\mathcal{T}_{\pm}|\leq 2^{\bar{I}}$ (see \cite{navid1,navid2}), a union bound and choosing $t=2\sqrt{p\bar{I}}$ provides
	\[
	\mathbb{P}\left(\sup_{\boldsymbol{\mathscr{U}}\in \mathcal{T}_{\pm}}\Bigg|\sum_{\textbf{i}\in\Omega}v_{\textbf{i}}u_{\textbf{i}}\Bigg|\geq 2\sqrt{p\bar{I}}\right) \leq 2^{\bar{I}+1}e^{-2\bar{I}} \leq e^{\bar{I}+1}e^{-2\bar{I}} = e^{-\bar{I}+1}.
	\]
	Equally, for any $h>0$ we have shown
	\[
	\mathbb{P}\left(\sup_{\boldsymbol{\mathscr{U}}\in \mathcal{T}_{\pm}}\Bigg|\sum_{\textbf{i}\in\Omega}v_{\textbf{i}}u_{\textbf{i}}\Bigg|^h\geq \left(2\sqrt{p\bar{I}}\right)^h\right) \leq e^{-\bar{I}+1},
	\]
	and by Theorem \ref{thmthm} we end the proof if $e^{-\bar{I}+1} < \frac{\left(2\sqrt{p\bar{I}}\right)^h}{p^h}$. To this end, using the Maclaurin series of the exponential function, we note that for any $\ell\in\mathbb{N}$
	\[
	\exp(2(\bar{I}-1)/h)\geq \frac{2^{\ell}(\bar{I}-1)^{\ell}}{h^{\ell}\ell !}\geq \frac{2^{\ell}(\bar{I}-1)^{\ell}}{(h\ell)^{\ell}}
	\]
	which in particular holds for the non-integer choice $\ell\coloneqq \log_2(\tilde{I}/(4\bar{I}))$ in the last term. Recall our assumption $\bar{I}-1 \geq h\log_2(\tilde{I}/(4\bar{I})) = h\ell$, so that
	\[
	p \leq \tilde{I} = 2^{\ell}4\bar{I} \leq \frac{2^{\ell}4\bar{I}(\bar{I}-1)^{\ell}}{(h\ell)^{\ell}} \leq 4\bar{I}\exp\left(2(\bar{I}-1)/h\right).
	\]
	We have shown $p \leq 4\bar{I}\exp\left(2(\bar{I}-1)/h\right)$, raising both sides to the power of $h/2$ and rearranging gives the desired inequality. We conclude $\mathbb{E}\left(\|\boldsymbol{\mathscr{V}}\|_{M}^{*}\right)^h\leq 2\left(2\sqrt{p\bar{I}}\right)^h$.
	
\end{proof}

\section{Implementation Details of Numerical Experiments}
\label{sec:impl}
\definecolor{dkgreen}{rgb}{0,0.6,0}
\definecolor{gray}{rgb}{0.5,0.5,0.5}
\definecolor{mauve}{rgb}{0.58,0,0.82}
\lstset{
  language=Matlab,                
  basicstyle=\scriptsize,           
  backgroundcolor=\color{white},      
  showspaces=false,               
  showstringspaces=false,         
  showtabs=false,                 
  frame=single,                   
  rulecolor=\color{black},        
  tabsize=2,                      
  captionpos=b,                   
  breaklines=true,                
  breakatwhitespace=false,        
  title=\lstname,                   
  keywordstyle=\color{blue},          
  commentstyle=\color{dkgreen},       
  stringstyle=\color{mauve},         
  escapeinside={\%*}{*)},            
  morekeywords={*,...}               
}

Experiments were conducted using Tensor Toolbox for MATLAB v3.2.1~\cite{TTB_src} in MATLAB R2022b. The MATLAB function \texttt{poissrand} from MATLAB's Statistics and Machine Learning Toolbox v12.2 was also used in the experiments.

\begin{center}
\begin{minipage}{0.925\hsize}%
\begin{lstlisting}[frame=single,framesep=12pt,linewidth=0.98\textwidth,caption={Helper MATLAB function for generating a parameter tensor $\boldsymbol{\mathscr{M}}$.}]
function M = tensor_ztp_create_param_tensor(dim,R,param_range)
    N = length(dim);
    factor_range = param_range.^(1/N)/R^(1/N);   
    % Call create_problem from the Tensor Toolbox for MATLAB
    M = create_problem('Size', dim, 'Num_Factors', R, ...
        'Factor_Generator', @(m,n)(factor_range(1)+...
            (rand(m,n))*(factor_range(2)-factor_range(1))), ...
        'Lambda_Generator', @(m,n)ones(m,1), 'Noise', 0);
    M = normalize(arrange(M.Soln));
\end{lstlisting}
\end{minipage}
\end{center}

\begin{center}
\begin{minipage}{0.925\hsize}%
\begin{lstlisting}[frame=single,framesep=12pt,linewidth=0.98\textwidth,caption={Helper MATLAB function for running an individual experiment.}]
function [E_poisson,E_oracle,E_ztp] = tensor_ztp_run_experiment(...
    N, I, R, p, reps, nstarts, b, a, reg_val, opts_gcp, filename)

% initialize matrices to store relative errors
E_poisson = zeros(reps,length(p),nstarts); E_oracle = E_poisson; E_ztp = E_poisson;

% Poisson NLL function/gradient using reg_val for regularization
f_poisson = @(x,m) m - x.*log(m + reg_val);
g_poisson = @(x,m) 1 - x./(m + reg_val);
% Poisson NLL function/gradient using reg_val for regularization
f_ztp = @(x,m) f_poisson(x,m) + log(1 - exp(-m) + reg_val);
g_ztp = @(x,m) g_poisson(x,m) + 1./((exp(m) - 1) + reg_val);

% Generate low-rank random tensor with entries in [b, a]
M = tensor_ztp_create_param_tensor(I*ones(1,N), R, [b a]);

% Main loop
for k1 = 1:reps
    % Generate Poisson observations
    rng(k1); X_obs = poissrnd(double(full(M))); % Poisson observations
    for k2 = 1:length(p)
        % Generate missing entries with desired percentage
        rng(k2); OmC = randperm(I^N);  % random indices, Omega^C
        OmC = OmC(1:round(p(k2)*I^N)); % indices of unobseved entries
        X = X_obs;                     % copy from X_obs for each k2
        X(OmC) = 0;                    % inject false zeros into X
        X = tensor(X);                 % tensor version of data
        for k3 = 1:nstarts
            % Poisson parameter estimation, ALL zeros observed
            rng(k3); Mhat_poisson = gcp_opt(X,R,opts_gcp,'func',f_poisson, ...
                'grad',g_poisson,'lower',0);
            E_poisson(k1,k2,k3) = norm(M-Mhat_poisson)/norm(M);
            % Oracle parameter estimation, only true zeros (Omega is known)
            W2 = ones(I*ones(1,N));  % create an indicator tensor for mask
            W2(OmC) = 0;             % remove false zeros using Omega^C
            W2 = tensor(W2);
            rng(k3); Mhat_oracle = gcp_opt(X,R,opts_gcp,'func',f_poisson, ...
                'grad',g_poisson,'lower',0,'mask',W2); 
            E_oracle(k1,k2,k3) = norm(M-Mhat_oracle)/norm(M);
            % ZTP parameter estimation, ignore ALL zeros
            ind = find(X>0);             % find nonzeros in X
            W = tensor(@zeros,size(X));  % create an indicator tensor for mask
            W(ind) = 1;                  % indicate where nonzeros in X are
            Gam = find(X(:)>0);
            rng(k3); Mtilde_ztp = gcp_opt(X,R,opts_gcp,'func',f_ztp, ...
                'grad',g_ztp,'lower',0,'mask',W); 
            E_ztp(k1,k2,k3) = norm(M-Mtilde_ztp)/norm(M);
        end
    end
end

% Save outputs as .mat file
save(filename,'N','I','R','b','a','p','reg_val','E_poisson','E_oracle','E_ztp');
\end{lstlisting}
\end{minipage}
\end{center}

\begin{center}
\begin{minipage}{0.925\hsize}%
\begin{lstlisting}[frame=single,framesep=12pt,linewidth=0.98\textwidth,caption={Main MATLAB script for reproducing experiments.}]
% Tensor parameters
N = 3;                     % number of dimensions
I_array = [50, 100, 200];  % size per dimension, multiple experiments
R = 5;                     % rank

% Experiment parameters
p = [0:.05:.95 0.96:0.01:0.99];  % percent missing entries
reps = 50;                       % number of runs per experiment
random_seed = 12345;             % for reproducibility

% GCP optimization parameters
clear opts_gcp
opts_gcp.opt = 'lbfgsb';   % Limited-memory bound-constrained quasi-Newton
opts_gcp.maxiters = 3000;  % maximum number of iters
opts_gcp.printitn = 1000;  % number of iterations before printing output
opts_gcp.pgtol = 1e-12;    % stopping tolerance - gradient
opts_gcp.factr = 1e-10;    % stopping tolerance - function value reduction

% Function and gradient resularization
reg_val = 1e-10;

%% Experiments, looping over I_array
for I = I_array
    % Varying $\beta$
    a = 2.5; betas = [1, .1,.01,.001];
    for i = 1:length(betas)
        b = betas(i);
        filename = sprintf('results_I_%d_beta_%f_alpha_%f.mat',I,b,a);
        rng(random_seed); 
        [E_poisson,E_oracle,E_ztp] = tensor_ztp_run_experiment(...
            N,I,R,p,reps,b,a,reg_val,opts_gcp,filename);
    end
    % Varying $\alpha$
    b = 0.1; alphas = [5,10,25,50];
    for i = 1:length(alphas)
        a = alphas(i);
        filename = sprintf('results_I_%d_beta_%f_alpha_%f.mat',I,b,a);
        rng(random_seed); 
        [E_poisson,E_oracle,E_ztp] = tensor_ztp_run_experiment(...
            N,I,R,p,reps,b,a,reg_val,opts_gcp,filename);
    end
end
\end{lstlisting}
\end{minipage}
\end{center}





\section*{Acknowledgements}
\label{sec:ack}
The authors would like to thank Jon Berry for helpful discussions and suggestions. This paper describes objective technical results and analysis. Any subjective views or opinions that might be expressed in the paper do not necessarily represent the views of the U.S. Department of Energy or the United States Government. 

\section*{Funding}
\label{sec:funding}
This work was supported by Sandia National Laboratories. Sandia National Laboratories is a multimission laboratory managed and operated by National Technology \& Engineering Solutions of Sandia, LLC, a wholly owned subsidiary of Honeywell International Inc., for the U.S. Department of Energy’s National Nuclear Security Administration under contract DE-NA0003525. 

\bibliographystyle{plain}
\bibliography{ztp-tensor-decomposition}

\end{document}